\documentclass[journal]{IEEEtran}

\usepackage{color}
\usepackage{amsthm}
\usepackage{amsmath}
\usepackage{amssymb}
\usepackage{mathrsfs}
\usepackage{wasysym}
\usepackage{cite}
\usepackage{graphicx}
\usepackage{wrapfig}

\newif\ifitsdraft

\usepackage{ifthen}
\newboolean{draftPaper}
\setboolean{draftPaper}{true}
\newcommand{\nDraft}[1]{
	\ifthenelse{\boolean{draftPaper}}{}{#1}
}
\newcommand{\Draft}[2]{
	\ifthenelse{\boolean{draftPaper}}{#1}{#2}
}

\allowdisplaybreaks
\definecolor{darkgreen}{rgb}{0,0.5,0}
\newcommand\red[1]{{\color{red}#1}}

\newtheorem{theorem}{Theorem}[section]
\newtheorem{definition}[theorem]{Definition}
\newtheorem{proposition}[theorem]{Proposition}
\newtheorem{assumption}[theorem]{Assumption}
\newtheorem{example}[theorem]{Example}
\newtheorem{remark}[theorem]{Remark}

\newtheorem{lemma}[theorem]{Lemma}
\renewcommand\thetheorem{\arabic{section}.\arabic{theorem}}


\usepackage{enumitem}
 \usepackage{balance}
\input{rgsMacros.sty}
\definecolor{darkgreen}{rgb}{0,0.5,0}

\begin{document}

\title{\LARGE \textbf{Certifying the LTL Formula p Until q in Hybrid Systems}}

\author{Hyejin Han, Mohamed Maghenem, and Ricardo G. Sanfelice
\thanks{H. Han and R. G. Sanfelice are with the Department of Electrical and Computer Engineering, University of California, Santa Cruz. Email: 
hhan7,ricardo@ucsc.edu. M. Maghenem is with CNRS France, D{\'e}l{\'e}gation Alpes. Email : Mohamed.Maghenem@gipsa-lab.grenoble-inp.fr}
\thanks{This research has been partially supported by NSF Grants no. ECS-1710621, CNS-1544396, and CNS-2039054, by AFOSR Grants no. FA9550-19-1-0053, FA9550-19-1-0169, and FA9550-20-1-0238, and by CITRIS and the Banatao Institute at the University of California.}}

\maketitle

\begin{abstract}
In this paper, we propose sufficient conditions to guarantee that a linear temporal logic (LTL) formula of the form 
p Until q, denoted by $p \,\mathcal{U} q$, 
is satisfied for a hybrid system. 
Roughly speaking, the formula $p \,\mathcal{U} q$ is satisfied means that the solutions, initially satisfying proposition p, keep satisfying this proposition until proposition q is satisfied. To certify such a formula, connections to invariance notions such as conditional invariance (CI) and eventual conditional invariance (ECI), as well as finite-time attractivity (FTA) are established. As a result, sufficient conditions involving the data of the hybrid system and an appropriate choice of Lyapunov-like functions, such as barrier functions, are derived. The considered hybrid system is given in terms of differential and difference inclusions, which capture the continuous and the discrete dynamics present in the same system, respectively. Examples illustrate the results throughout the paper.
\end{abstract}

\IEEEpeerreviewmaketitle

\nDraft{\vspace{-0.1in}}
\section{Introduction}

\subsection{Background and Motivation}

LTL is a language used to express complex temporal properties of dynamical systems in terms of formulas. Each LTL formula is composed of a set of propositions related by temporal and logical operators. 
A required temporal property, also called  \textit{specification}, is guaranteed for a dynamical system if and only if the corresponding LTL formula is true along the solutions to the considered system. Hence, LTL provides a  framework to formulate complex dynamical properties, that go beyond  stability, convergence, or safety \cite{thistle1986control,girard2006verification,wongpiromsarn2012receding,kwon2008ltlc}.
LTL has been used as a tool to analyze formal specifications for dynamical systems. For examples, in \cite{wongpiromsarn2012receding}, 
LTL is employed to express the 
safety-plus-stability specification and to ensure that the system satisfies it by executing a finite state automata. 
In \cite{giorgetti2005bounded}, the desired behavior of the system is expressed using LTL, and model-checking based algorithms are used to certify the given formula or to provide a counterexample. Furthermore, in \cite{tumova2016multi}, motion planning for multi-agent systems with safety is achieved by verifying corresponding LTL formulas in real time.

A widely used approach to certify formulas along solutions to dynamical systems consists in using model-checking approaches~\cite{baier2008principles,
tabuada2009verification,belta2017formal}. In such approaches, the system is modeled as a finite (or infinite) state automaton and existing model-checking algorithms are able to answer about the satisfaction of the formula for the automaton. The disadvantage of these approaches relies on their decidability; namely, whether the satisfaction of the formula for the automaton is equivalent to its satisfaction for the actual system \cite{tabuada2009verification}. 
Furthermore, when the system exhibits hybrid phenomena, this problem remains mostly unsolved. In other works such as in \cite{platzer2010logical}, theorem provers are developed to analyze LTL formulas by simulating a discretized version of the system. As in every numerical tool, the sensitivity to the discretization step and the dimension of the system is a drawback. 

To avoid the limitations of model checking, other works use analytical approaches based on analyzing specific formulas using Lyapunov-like techniques. 
In \cite{bisoffi2020satisfaction}, a class of 
LTL formulas is modeled as a finite state transition system, which combined to the actual continuous-time control system form a hybrid control system. The considered formula is verified by guaranteeing a recurrence property for the closed-loop system. This recurrence property is certified using Lyapunov-like sufficient conditions. 
In \cite{wongpiromsarn2015automata}, an approach combining automata-based tools and barrier certificates is introduced to avoid computing finite-state abstractions of the system. 
This approach provides a collection of barrier functions to certify the considered LTL formulas. In \cite{han2020linear}, sufficient conditions to certify basic formulas for hybrid systems using Lyapunov-like functions are proposed. In particular, the \emph{always p} and \emph{eventually p} formulas are mainly considered and guaranteed by guaranteeing forward invariance and finite-time attractivity, respectively. 

For LTL formulas involving  \emph{until} operators, we distinguish two versions of the \emph{until} operator; namely, the strong until (denoted $\mathcal{U}_s$) and the weak until (denoted $\mathcal{U}_w$); see \cite{emerson1990temporal, eisner2005topological, piterman2018temporal}. That is, given two propositions $p$ and $q$, the satisfaction of the formula $p \,\mathcal{U}_s q$ implies that proposition $p$ is true until $q$ happens to be true, and $q$ must become true eventually. For the weak version, the satisfaction of the formula $p \,\mathcal{U}_w q$ implies that proposition $p$ is true until $q$ happens to be true; however, $q$ is not required to become true as long as $p$ remains true. Until operators are one of the basic operators in LTL. A simple, but concrete, application example where until operators are used concerns the autonomous navigation problem in a constrained environment \cite{Wolff.ea.14.ICRA}. In such applications, a mobile vehicle 
typically navigates its environment without colliding with obstacles and while following a particular sequence of tasks. For instance, we consider the simple situation where a vehicle needs to exit a room only via its exit door, so the vehicle should not hit the boundaries of the room until it reaches the door. Such a temporal behavior can be expressed in terms of an LTL formula involving the until operator; namely, ``the vehicle stays in the interior\,($p$)'' 
\emph{until} ``it reaches door\,($q$)''.

\nDraft{\vspace{-0.15in}}
\subsection{Contributions}

In this paper, sufficient conditions for the satisfaction of the LTL formulas $p \,\mathcal{U}_w q$ and $p \,\mathcal{U}_s q$ along solutions to hybrid systems are proposed. The proposed sufficient conditions are infinitesimal and involve only the data of the hybrid system and appropriate choice of Lyapunov-like functions, without requiring the computation of the solutions nor discretization of the right-hand side. A key intermediate step to deduce the proposed sufficient conditions consists in establishing sufficient and equivalent relationships between the satisfaction of the considered formulas and the following dynamical properties:

\begin{enumerate}[leftmargin=0.2in]
\item  \textit{Conditional Invariance} (CI). This property suggests that the solutions to the hybrid system remain in a set if they start from a (likely different) set \cite{maghenem2020sufficient}. This property is also called safety in \cite{prajna2007framework}. 

\item \textit{Eventual Conditional Invariance} (ECI). This property suggests that the solutions reach a given set in finite time and remain in it provided that they start from a (likely different) given set \cite{ladde1972analysis}. 

\item \textit{Finite-Time Attractivity} (FTA). This property suggests that the solutions reach a set in finite time provided that they start from a (likely different) set \cite{215}. 
\end{enumerate}

For the formula $p \,\mathcal{U}_w q$, we present sufficient conditions using barrier functions tailored to CI for hybrid systems. Furthermore, for the formula 
$p \,\mathcal{U}_s q$, we present sufficient conditions using Lyapunov-like functions tailored to CI plus ECI or FTA for hybrid systems. The contributions made by this paper are as follows:

\begin{itemize}[leftmargin=0.2in]
\item In Section \ref{sec:characterization}, we propose  characterizations of the satisfaction of the formulas $p \,\mathcal{U}_w q$ and $p \,\mathcal{U}_s q$ for hybrid systems. Specifically, given a hybrid system, denoted $\mathcal{H}$, we establish that CI for an auxiliary version of $\mathcal{H}$, denoted $\mathcal{H}_w$, is sufficient for the satisfaction of the formula 
$p \,\mathcal{U}_w q$ for $\mathcal{H}$.
Moreover, we show that the formula 
$p \,\mathcal{U}_s q$ is verified for $\mathcal{H}$ if $p \,\mathcal{U}_w q$ is satisfied and an auxiliary system, denoted $\mathcal{H}_s$, exhibits an ECI property. Finally, we present that the formula $p \,\mathcal{U}_s q$ for $\mathcal{H}$ is satisfied when the satisfaction of $p \,\mathcal{U}_w q$ for $\mathcal{H}$ is guaranteed and $\mathcal{H}_s$ exhibits a FTA property.
	
\item In Section \ref{Section.CIECI}, sufficient conditions to guarantee CI and ECI for hybrid systems are proposed. 
Inspired by \cite[Theorem 3.4]{ladde1972analysis}, we formulate sufficient conditions for ECI for hybrid systems; see Theorem \ref{thm:pre_eventual_ci}. 
The conditions in \cite[Theorem 3.4]{ladde1972analysis} are tightened by restricting the class of hybrid systems under study to capture both continuous and discrete dynamics.
Moreover, we propose sufficient conditions for ECI that are inspired by the work in \cite{182} for hybrid observers; see Theorem \ref{thm:pre_ECI_interval}.
The conditions in Theorem \ref{thm:pre_ECI_interval} guarantee ECI when we know approximately a bound on the length of the flow interval between successive jumps.

\item In Section \ref{SectionFTA}, inspired by \cite[Proposition B.1 and Proposition B.3]{han.arXiv19}, sufficient conditions that guarantee FTA for hybrid systems are introduced.
In particular, using two Lyapunov functions, we propose sufficient conditions that guarantee pre-FTA for hybrid systems.

\item In Section \ref{sec:sufficient_conditions_until}, sufficient Lyapunov-like conditions to guarantee satisfaction of the formulas $p \,\mathcal{U}_w q$ and $p \,\mathcal{U}_s q$ are deduced by exploiting the equivalence relationships established in Section \ref{sec:characterization}. Specifically, sufficient conditions for the satisfaction of $p \,\mathcal{U}_w q$ are proposed by using the conditions guaranteeing CI in Proposition \ref{prop:conditional_invariance}. Sufficient conditions for the satisfaction of $p \,\mathcal{U}_s q$ are proposed using conditions guaranteeing ECI, such as those in Theorem \ref{thm:ECIf}, Theorem \ref{thm:ECIj}, Theorem \ref{thm:pre_eventual_ci}, Theorem \ref{thm:pre_ECI_interval}, and Theorem \ref{thm:eventual_ci}.
 Similarly, sufficient conditions for the satisfaction of $p \,\mathcal{U}_s q$ are proposed using conditions guaranteeing FTA, such as those in Theorem \ref{thm:FTAf}, Theorem \ref{thm:FTAj}, Theorem \ref{thm:pre_FTA} and Theorem \ref{thm:FTA}.

\end{itemize}

A preliminary version of this work is in \cite{215}, where detailed proofs have been omitted and only fewer results are included.
In particular, Theorem \ref{thm:pre_ECI_interval} and results employing a FTA property are new.

The remainder of this paper is organized as follows. Preliminaries on hybrid systems and temporal logic are in Section \ref{sec:preliminaries}. 
Characterization of the formulas $p \,\mathcal{U}_w q$ and $p \,\mathcal{U}_s q$ using CI, ECI, and FTA are in Section \ref{sec:characterization}. Sufficient conditions to guarantee CI and ECI are in Section \ref{Section.CIECI}.
Finally, sufficient conditions to certify the formulas $p \,\mathcal{U}_w q$ and $p \,\mathcal{U}_s q$ are in Section \ref{sec:sufficient_conditions_until}. Academic examples are provided all along the paper to illustrate concepts and results.

\textbf{Notation.} Let $\mathbb{R}_{\geq 0} := [0, \infty) $ and $\mathbb{N} := \left\{0,1,\ldots, \infty \right\}$. For $x$, $y \in \mathbb{R}^n$, $x^\top$ denotes the transpose of $x$, $|x|$ the Euclidean norm of $x$, $|x|_K := \inf_{y \in K} |x-y|$ defines the distance between $x$ and the nonempty set $K$, and $\langle x,y \rangle = x^\top y$ denotes the inner product between $x$ and $y$. For a set $K \subset \mathbb{R}^n$, we use $\mbox{int} (K)$ to denote its interior, $\partial K$ to denote its boundary, $\cl(K)$ to denote its closure, and $U(K)$ to denote any open neighborhood of $K$. For a set $Q \subset \mathbb{R}^n$, $K \backslash Q$ denotes the subset of elements of $K$ that are not in $Q$. By $\mathcal{C}^1$, we denote the set of continuously differentiable functions. By $F : \mathbb{R}^m \rightrightarrows \mathbb{R}^n$, we denote a set-valued map associating each element $x \in \mathbb{R}^m$ to a subset $F(x) \subset \mathbb{R}^n$. For a real number $s \in \mathbb{R}$, $\mbox{ceil}(s)$ denotes the smallest integer upper bound for $s$. \blue{For a scalar function 
$V: \dom V \rightarrow \mathbb{R}$, 
$L_V(r) := \{x \in \dom V : V(x) \leq r\}$, for some $r \in [0, \infty]$, is the $r$-sublevel set of $V$}.

\section{Preliminaries}
\label{sec:preliminaries}
\subsection{Hybrid Systems}

Following the modeling framework proposed in \cite{goebel2012hybrid}, we consider hybrid systems modeled as
\begin{align} \label{eqn:H}
\mathcal{H}: & \left\{ \begin{array}{ll}
 \dot{x}\phantom{{}^+} \!\!\in F(x) & \qquad x \in C  \\
x^+ \!\!\in G(x)  & \qquad x \in D \mbox{,}
\end{array} \right.
\end{align}
with the state variable $x \in \mathbb{R}^n$, the flow set $C \subset \mathbb{R}^n$, the jump set $D \subset \mathbb{R}^n$, and the flow and jump maps, respectively, $F: \mathbb{R}^n \rightrightarrows \mathbb{R}^n$ and $G: \mathbb{R}^n \rightrightarrows \mathbb{R}^n $. 

A hybrid arc $\phi$ is defined on a hybrid time domain denoted $ \dom \phi \subset \mathbb{R}_{\geq 0} \times \mathbb{N}$. The hybrid arc $\phi$ is parameterized by an ordinary time variable $t \in \mathbb{R}_{\geq 0}$ and a discrete jump variable $j \in \mathbb{N}$. A hybrid time domain $\dom \phi$ is such that for each $(T,J) \in \dom \phi$, $\dom \phi \cap \left( [0,T] \times \left\{ 0, 1, \ldots, J \right\} \right) = \cup^{J}_{j=0} \left([t_j,t_{j+1}] \times \left\{j\right\} \right)$ for a sequence $\left\{ t_j \right\}^{J+1}_{j=0}$, such that $t_{j+1} \geq t_j$ and $t_0 = 0$.
Note that the structure of a hybrid time domain $\dom \phi$ is such that, given $(t,j), (t',j') \in \dom \phi$, $t+j \leq t'+j'$ if $t \leq t'$ and $j \leq j'$.

\begin{definition} [Concept of solution to $\mathcal{H}$] 
\label{solution definition} \index{forward solution to a hybrid system} A hybrid arc $\phi : \dom \phi \to \mathbb{R}^n$ is a {\em solution} to $\mathcal{H}$ if
\begin{itemize}
\item[(S0)] $\phi(0,0) \in \cl(C) \cup D$;
\item[(S1)] for all $j \in \mathbb{N}$ such that $I^j:=\left\{t : (t,j) \in\dom \phi \right\}$ has nonempty interior, $t \mapsto \phi(t,j)$ is locally absolutely continuous and
\begin{equation*}
\label{S1}
\begin{array}{ll}               
\phi(t,j)\in C &\quad \mbox{for all } t\in \mbox{int}(I^j)\mbox{,}\\
\dot{\phi}(t,j)\in F(\phi(t,j)) &\quad \mbox{for almost all } t\in I^j\mbox{;}\\
\end{array} 
\end{equation*}                          
\item[(S2)] for all $(t,j)\in \dom \phi$ such that $(t,j+1)\in \dom \phi$,
\begin{equation*} \label{S2}        
\begin{array}{l}         
\phi(t,j)\in D, \qquad
\phi(t,j+1)\in G(\phi(t,j))\mbox{.}
\end{array}
\end{equation*}                 
\end{itemize}     
\end{definition}

\begin{remark}
According to Definition \ref{solution definition}, when a maximal solution $\phi$ to $\mathcal{H}$ starting from a point in $C$ reaches a point in $\cl(C) \backslash C$ after flowing, it cannot flow back to the set $C$ unless it jumps from the point in $\cl(C) \backslash C$. However, if it reaches $\cl(C) \backslash C$ after a jump, then it is allowed to flow from a point in $\cl(C) \backslash C$ within $C$; see Examples \ref{ex:until_ci} and \ref{counterexpECI} below.
\end{remark}

A solution $\phi$ to $\mathcal{H}$ is said to be maximal if there is no solution $\phi'$ to $\mathcal{H}$ such that $\phi(t,j) = \phi'(t,j)$ for all $(t,j) \in \dom \phi$ with $\dom \phi$ a proper subset of $\dom \phi'$. It is said to be nontrivial if $\dom\phi$ contains at least two elements.
A solution $\phi$ is said to be complete if its domain is unbounded.
It is Zeno if it is complete and $\sup_t \dom\phi < \infty$. It is eventually discrete if $T \!=\! \sup_t\dom\phi < \infty$ and $\dom \phi \cap (\{T\} \times \mathbb{N})$ contains at least two elements. It is genuinely Zeno if it is Zeno, but not eventually discrete.
See \cite{goebel2012hybrid} for more details about hybrid dynamical systems.

For convenience, we define the range of a solution $\phi$ to $\mathcal{H}$ as $\rge \phi \!=\! \{\phi(t,j) : (t,j) \!\in\! \dom\phi\}$.
We use $\mathcal{S}_{\mathcal{H}}(x)$ to denote the set of maximal solutions to $\mathcal{H}$ starting from $x \in \cl(C) \cup D$.
Given a set $\mathcal{A} \subset \mathbb{R}^n$, $\mathcal{R}(\mathcal{A})$ denotes the (infinite-horizon) reachable set from $\mathcal{A}$; i.e., $\mathcal{R}(\mathcal{A}) := \{\phi(t,j) : \phi \in \mathcal{S_H}(\mathcal{A}), (t,j) \in \dom\phi \}$.
\begin{definition}[Settling-time function]
Given a closed set $\mathcal{A} \subset \mathbb{R}^n$ and a solution $\phi$ to $\mathcal{H}$ starting from $\cl(C) \cup D$, the settling-time function $\mathcal{T}_\mathcal{A} : \mathcal{S_H}(\cl(C) \cup D) \mapsto \mathbb{R}_{\geq 0}$ is given by
\begin{align} \label{min-tim-fun}
\mathcal{T}_\mathcal{A}(\phi) :=
\left\{
\begin{array}{ll}
    \!\infty & ~\mbox{if}~\mathcal{R}(\phi(0,0)) \cap \mathcal{A} = \emptyset  \\ 
    \displaystyle
    \!\min_{\substack{\phi(t,j) \in \mathcal{A} \\ (t,j) \in \dom\!\phi}}
        \!t+j & ~\mbox{otherwise.}
\end{array}
\right.    
\end{align}
\end{definition}
Given a solution $\phi$ to $\mathcal{H}$ starting from $\mathbb{R}^n \backslash \mathcal{A}$, the function $\mathcal{T}_\mathcal{A}$ provides (when finite) the first hybrid time at which the solution $\phi$ reaches the set $\mathcal{A}$. If the solution never reaches $\mathcal{A}$, the function $\mathcal{T}_\mathcal{A}$ returns infinity.

\subsection{LTL and Until Operators}

\red{An atomic proposition $p$, is a statement on the system state $x$ that is either \texttt{True} 
($\equiv 1$) or \texttt{False} ($\equiv 0$).}
A proposition $p$ is treated as a (single-valued) function of $x$, that is, as the function 
$x \mapsto p(x) \in \{ 0,1\}$. 
The set of all possible atomic propositions 
is denoted by $\mathcal{P}$. 
An LTL formula $f$ is a sentence consisting 
of atomic propositions as well as logical and temporal operators. In the following, we introduce the LTL formulas studied in this paper.

\begin{definition}[$p \,\mathcal{U}_s q$]
\label{def:strong_until}
\blue{Given two atomic propositions $p, q \in \mathcal{P}$, a solution $\phi$ to $\mathcal{H}$ satisfies the formula $p \,\mathcal{U}_s\, q$ if either $q(\phi(0,0)) = 1$; or, 
\begin{itemize}
\item there exists $(t^\star,j^\star) \in \dom\phi$ such that $t^\star + j^\star > 0$, $q(\phi(t^\star,j^\star)) = 1$, and $p(\phi(s,k)) = 1$ for all $(s,k) \in \dom \phi$ such that  
$0 \leq s+k < t^\star + j^\star$. 
\end{itemize}}
\end{definition}

\begin{definition}[$p \,\mathcal{U}_w q$]
\blue{Given two atomic propositions $p, q \in \mathcal{P}$, a solution $\phi$ to $\mathcal{H}$ satisfies the formula $p \,\mathcal{U}_w\, q$ if either
\begin{itemize}
	\item $p(\phi(s,k)) = 1$ for all $(s,k) \in \dom\phi$ such that $s+k \geq 0$; or
	\item $\phi$ satisfies $p \,\mathcal{U}_s\, q$.
\end{itemize}}
\label{def:weak_until}
\end{definition}

\begin{remark}
\blue{The formula $p \,\mathcal{U}_s q$ is  satisfied for $\mathcal{H}$ 
if, for each maximal solution $\phi$ to $\mathcal{H}$ with $p(\phi(0,0)) + q(\phi(0,0)) \!\geq\! 1$,\footnote{Since the functions $p$ and $q$ map to $\{0,1\}$, $p(\phi(0,0)) + q(\phi(0,0)) \!\geq\! 1$ implies that $p(\phi(0,0)) \!=\! 1$ or $q(\phi(0,0)) \!=\! 1$.} $p \,\mathcal{U}_s q$ is satisfied. Similarly, the formula $p \,\mathcal{U}_w q$ is said to be satisfied for $\mathcal{H}$ 
if, for each maximal solution $\phi$ to $\mathcal{H}$ with $p(\phi(0,0)) + q(\phi(0,0)) \!\geq\! 1$, 
$p \,\mathcal{U}_w q$ is satisfied.}
\end{remark}

\subsection{Set Invariance and Attractivity Notions}
\label{sec:notions}
 
In this section, we present the invariance and attractivity notions used in this paper. 
 
\begin{definition}[Forward (pre-)invariance \cite{maghenem2020sufficient}]
\label{def:forward_invariance}
A set $K \subset \mathbb{R}^n$ is said to be \emph{forward pre-invariant} for $\mathcal{H}$ if, for each solution $\phi \in \mathcal{S_H}(K)$, $\rge \phi \subset K$.
The set $K$ is said to be \emph{forward invariant} for $\mathcal{H}$ if it is forward pre-invariant for $\mathcal{H}$ and, for every $x \in K$, every \blue{maximal} solution $\phi \in \mathcal{S_H}(x)$ is complete.
\end{definition}

\begin{definition}[Conditional invariance \cite{magh2018conditional1}]
\label{def:conditional_invariance}
Given sets $K \subset \mathbb{R}^n$ and $\mathcal{X}_o \subset K$,
the set $K$ is said to be \emph{conditionally invariant (CI)} with respect to the set $\mathcal{X}_o$ for $\mathcal{H}$ if, for each solution $\phi \in \mathcal{S}_{\mathcal{H}}(\mathcal{X}_o)$, $\rge \phi \subset K$.
\end{definition}

\begin{remark} \label{remark:ci_fi}
Note that when $\mathcal{X}_o \!=\! K$, conditional invariance of $K$ with respect to $\mathcal{X}_o$ is equivalent to forward pre-invariance of $K$.
\end{remark}

\begin{remark}
Following the notion of safety in \cite{magh2018conditional1}, conditional invariance of $K$ with respect to $\mathcal{X}_o$ is equivalent to safety with respect to $(\mathcal{X}_o, \mathcal{X}_u)$, with $\mathcal{X}_u := \mathbb{R}^n \backslash K$ defining the region of the state space that the solutions to $\mathcal{H}$ must avoid when starting from the set of initial conditions $\mathcal{X}_o$.
\end{remark}

In the following, inspired by the ideas in \cite{ladde1972analysis} for continuous-time systems, we introduce eventual conditional invariance for hybrid systems $\mathcal{H} = (C,F,D,G)$.

\begin{definition} [Eventual conditional invariance] \label{def:eventual_ci}
Given sets $\mathcal{O} \subset \cl(C) \cup D$ and $\mathcal{A} \subset \mathbb{R}^n$, the set $\mathcal{A}$ is said to be \emph{eventually conditionally invariant (ECI)} with respect to $\mathcal{O}$ for $\mathcal{H}$ if, for each solution $\phi \in \mathcal{S_H}(\mathcal{O})$,
there exists a hybrid time $(t^\star,j^\star) \in \dom\phi$ such that $\phi(t,j) \in \mathcal{A}$ for all $(t,j) \!\in\! \dom\phi$ such that $t+j \!\geq\! t^\star+j^\star$.
\end{definition}

\red{\begin{remark}
When $\X_o \subset K$ and $K$ is CI with respect to $\X_o$, it is guaranteed that solutions starting from $\mathcal{X}_o$ remain in $\mathcal{X}_o \cup K$. 
Unlike CI, ECI of $\mathcal{A}$ with respect to $\mathcal{O}$ for $\HS$ does not imply that solutions remain in $\mathcal{O} \cup \mathcal{A}$. Indeed, even if $\mathcal{O} \subset \A$, we may still have solutions that start in $\mathcal{O}$, leave $\mathcal{O}$, and then eventually get to $\A$.
\end{remark}}

Since $\mathcal{H}$ can have maximal solutions that are not complete, we introduce the following notion which, compared to Definition \ref{def:eventual_ci}, requires that only the complete solutions to $\mathcal{H}$ must reach the set $\mathcal{A}$.

\begin{definition} [Pre-eventual conditional invariance] \label{def:pre_eventual_ci}
Given sets $\mathcal{O} \subset \cl(C) \cup D$ and $\mathcal{A} \subset \mathbb{R}^n$,
the set $\mathcal{A}$ is said to be \emph{pre-eventually conditionally invariant (pre-ECI)}\footnote{\blue{Since $\HS$ can have maximal solutions that are not complete, the pre-ECI notion requires the ECI property for complete solutions only.}} with respect to $\mathcal{O}$ for $\mathcal{H}$ if, for each complete solution $\phi \in \mathcal{S_H}(\mathcal{O})$, there exists a hybrid time $(t^\star,j^\star) \!\in\! \dom\phi$ such that $\phi(t,j) \in \mathcal{A}$ for all $(t,j) \!\in\! \dom\phi$ such that $t+j \!\geq\! t^\star + j^\star$.
\end{definition}

\begin{definition} [Finite-time attractivity]
\label{def:FTA}
Given sets $\mathcal{O} \!\subset\! \cl(C) \cup D$ and $\mathcal{A} \!\subset\! \mathbb{R}^n$ such that $\mathcal{A}$ is closed,
the set $\mathcal{A}$ is said to be \emph{finite-time attractive (FTA)} with respect to $\mathcal{O}$ for $\mathcal{H}$ if,
for each solution $\phi \in \mathcal{S}_{\mathcal{H}}(\mathcal{O})$, $\mathcal{T}_\mathcal{A}(\phi) < \infty$. 
\end{definition}

\begin{definition}[Pre-finite-time attractivity]
\label{def:pre-FTA}
Given sets $\mathcal{O} \subset \cl(C) \cup D$ and $\mathcal{A} \subset \mathbb{R}^n$ such that $\mathcal{A}$ is closed,
the set $\mathcal{A}$ is said to be \emph{pre-finite-time attractive (pre-FTA)} with respect to $\mathcal{O}$ for $\mathcal{H}$ if,
for each complete solution $\phi \in \mathcal{S}_{\mathcal{H}}(\mathcal{O})$, $\mathcal{T}_\mathcal{A}(\phi) < \infty$.
\end{definition}

\nDraft{\vspace{-0.2in}}
\subsection{Standing Assumptions}

\blue{
Our results hold under the following mild assumption on the data of the
hybrid system $\mathcal{H}$.
\begin{assumption} \label{assump:SA} 
The map $F$ is outer semicontinuous and locally bounded with nonempty and convex values on $C$ and $G$ has nonempty images on $D$.  
\end{assumption}
The properties of $F$ in Assumption \ref{assump:SA} are used in the literature of differential inclusions as mild requirements for existence of solutions from $\text{int}(C)$ plus adequate structural properties for the flows, see \cite{aubin2012differential, Aubin:1991:VT:120830}. When $F$ is single valued, the properties of $F$ in Assumption \ref{assump:SA} reduce to simply continuity.}

\section{Characterizing $p \,\mathcal{U} q$ using CI, ECI, and FTA} \label{sec:characterization}

We illustrate our approach on a constrained differential inclusion modeling the continuous-time dynamics of $\mathcal{H}$ which is given by
\begin{equation} \label{eqn:ct}
\dot{x} \in F(x) \qquad\qquad x \in C.
\end{equation}
To this end, we introduce the sets
\begin{equation}
	P \!:=\! \{x \in \mathbb{R}^n : p(x) = 1\} \mbox{, }\: Q \!:=\! \{x \in \mathbb{R}^n : q(x) = 1\}\mbox{.}
\label{eqn:K_sets}
\end{equation}
Note that $P$ and $Q$ collect the set of points where the atomic propositions $p$ and $q$ are satisfied, respectively.
For the purposes of this discussion, we impose the following condition on these sets:
\begin{itemize}
	\item The sets $P$ and $Q$ are closed and $P \subset C$.
\end{itemize}

{\setlength{\parskip}{1em}
By Definition \ref{def:weak_until} specialized to the case of flows of $\mathcal{H}$ only, when 
$p \,\mathcal{U}_w q$ is satisfied for \eqref{eqn:ct}, it follows that, for each solution $\phi$ to \eqref{eqn:ct} starting from the set $P \cup Q$, at least one of the following properties holds:}
\begin{itemize}
	\item[1)] the solution $\phi$ remains in the set $P$ for all time.
	\item[2)] the solution $\phi$ starts and remains in the set $P$ up to when it reaches the set $Q$.
   \item[3)] the solution $\phi$ starts from the set $Q$.
\end{itemize}
Note that we consider solutions to \eqref{eqn:ct} starting from the set $P \cup Q$.
Hence, based on items 1)-3), each solution to \eqref{eqn:ct} starting from the set $P \backslash Q$ needs to either remain in $P$ for all time or remain in $P$ until reaching $Q$ (if it happens);
namely, solutions starting from $P \backslash Q$ should satisfy either item 1) or item 2).
Interestingly, the satisfaction of either item 1) or item 2) can be guaranteed via CI of $P \cup Q$ with respect to $P \backslash Q$ for the following auxiliary system: 
\begin{equation}
\left\{
\begin{array}{ll}
	\dot{x} \in F(x) & \quad x \in C \backslash Q \\
	x^+ = x & \quad x \in Q \mbox{.}
\end{array}
\right.
\label{eqn:ct_m}
\end{equation}
System \eqref{eqn:ct_m} is used to characterize the behavior of system  \eqref{eqn:ct} outside the set $Q$.
In fact, when $P \cup Q$ is CI with respect to $P \backslash Q$ for \eqref{eqn:ct_m},
each solution to \eqref{eqn:ct_m} starting from $P \backslash Q$ remains in $P$ for all time or remains in $P$ until it reaches $Q$.
Indeed, the solutions to \eqref{eqn:ct} from $P \backslash Q$ are the solutions to \eqref{eqn:ct_m} (and vice versa) up to when they reach (if they do) $Q$.
Hence, $P \cup Q$ being CI with respect to $P \backslash Q$ for \eqref{eqn:ct_m} implies the satisfaction of $p \,\mathcal{U}_w q$ for \eqref{eqn:ct}.

From Definition \ref{def:strong_until} specialized to the case of flows of $\mathcal{H}$ only, when $p \,\mathcal{U}_s q$ is satisfied for \eqref{eqn:ct}, it follows that, for each solution $\phi$ to \eqref{eqn:ct} starting from $P \cup Q$, at least one of the following properties holds:
\begin{itemize}
	\item[1)] the solution $\phi$ starts and remains in the set $P$ until it reaches the set $Q$ in finite time.
	\item[2)] the solution $\phi$ starts from the set $Q$.
\end{itemize}
As a difference to $p \,\mathcal{U}_w q$, the satisfaction of $p \,\mathcal{U}_s q$ requires, additionally to $p \,\mathcal{U}_w q$ being satisfied, that every maximal solution $\phi$ to \eqref{eqn:ct} starting from $P \backslash Q$ actually reaches $Q$ in finite time.
When the set $P \cup Q$ is CI with respect to $P \backslash Q$ for \eqref{eqn:ct_m},
this property is guaranteed when $Q$ is ECI with respect to the set $P \cup Q$ for the following system
\begin{equation}
\left\{
\begin{array}{ll}
	\dot{x} \in F(x) & \quad x \in (C \backslash Q) \cap P \\
	x^+ = x & \quad x \in Q \mbox{.}
\end{array}
\right.
\label{eqn:ct_m+}
\end{equation}
System \eqref{eqn:ct_m+} is the restriction of \eqref{eqn:ct_m} to $P \cup Q$.
Hence, when, in addition, $Q$ is ECI with respect to $P \cup Q$ for \eqref{eqn:ct_m+}, each solution to \eqref{eqn:ct_m+} starting from $P \backslash Q$ reaches $Q$ in finite time.
Since the solutions to \eqref{eqn:ct_m} are the solutions to \eqref{eqn:ct_m+} (and vice versa when $p \,\mathcal{U}_w q$ is satisfied) up to when they reach $Q$,
and the solutions to \eqref{eqn:ct_m} are the solutions to \eqref{eqn:ct} (and vice versa) up to when they reach $Q$,
each solution to \eqref{eqn:ct} starting from $P \backslash Q$ reaches $Q$ in finite time and remains in $P$ until it reaches $Q$, which implies the satisfaction of $p \,\mathcal{U}_s q$ for \eqref{eqn:ct}.

As an alternative to the properties required above for $p \,\mathcal{U}_w q$ and $p \,\mathcal{U}_s q$,
we can also employ CI of $P \!\cup\! Q$ with respect to $P \!\cup\! Q$ (namely, forward invariance of $P \!\cup\! Q$) for \eqref{eqn:ct_m} instead of using CI of $P \!\cup\! Q$ with respect to $P \backslash Q$ for \eqref{eqn:ct_m}.
Moreover, to guarantee that each solution to \eqref{eqn:ct} starting from $P \backslash Q$ reaches $Q$ in finite time as needed for the satisfaction of $p \,\mathcal{U}_s q$, we can alternatively employ FTA of $Q$ with respect to $P \!\cup\! Q$ for \eqref{eqn:ct_m+}.
Results exploiting these ideas are also presented in this paper.

{\setlength{\parskip}{1em}
Now, we extend the proposed approach to hybrid systems.
Similar to the approach for \eqref{eqn:ct}, when considering the hybrid system $\mathcal{H} \!=\! (C,F,D,G)$, we impose the following condition in some of our results.}
\begin{assumption} \label{assump:SA1}
The sets $P$ and $Q$ are closed and $P \subset C \cup D$.
\end{assumption}
Following \eqref{eqn:ct_m}, we introduce the auxiliary hybrid system $\mathcal{H}_w \!=\! (C_w, F_w, D_w, G_w)$ given by 
\begin{equation} 
\begin{array}{ll}
	\!\!\!F_w(x) \!:=\! F(x) & x \in C_w \!:=\!C \backslash Q \\
	\!\!\!G_w(x) \!:=\!
	\bigg\{\!\!\!
    \begin{array}{cl}
		x & \mbox{if}~ x \!\in\! Q\\
		G(x) & \mbox{if}~ x \!\in\! D \backslash Q
    \end{array}
    \bigg.
    & x \in D_w \!:=\!  D \cup Q \mbox{.}
\end{array}
\label{eqn:H_m}
\end{equation}
The intuition behind the construction of the system $\mathcal{H}_w$ is as follows:
the system $\mathcal{H}_w$ is used to characterize the behavior of the system $\mathcal{H}$ outside the set $Q$. Indeed, the solutions to $\mathcal{H}$ are the solutions to $\mathcal{H}_w$ (and vice versa) up to when they reach (if they do) the set $Q$.
By Definition \ref{def:weak_until}, when $p \,\mathcal{U}_w q$ is satisfied for $\mathcal{H}$, it follows that, the solutions to $\mathcal{H}_w$ starting from the set $P \backslash Q$ stay in $P \cup Q$.
Hence, by guaranteeing CI of $P \cup Q$ with respect to $P \backslash Q$ for $\mathcal{H}_w$,
we establish that every solution to $\mathcal{H}$ starting from $P \cup Q$ satisfies $p \,\mathcal{U}_w q$ for $\mathcal{H}$.

Similar to the approach for \eqref{eqn:ct}, by Definition \ref{def:strong_until}, when $p \,\mathcal{U}_s q$ is satisfied for $\mathcal{H}$, it follows that, 
the solutions to $\mathcal{H}$ starting from $P \cup Q$ satisfy one of the following properties:
\begin{itemize}
	\item the solution $\phi$ starts and remains in $P$ until it reaches $Q$ in finite hybrid time.
	\item the solution $\phi$ starts from $Q$.
\end{itemize}
Therefore, the satisfaction of $p \,\mathcal{U}_s q$ for $\mathcal{H}$ requires, additionally to $p \,\mathcal{U}_w q$ being satisfied, that every maximal solution $\phi$ starting from $P \backslash Q$ reaches $Q$ in finite hybrid time.
When the set $P \!\cup\! Q$ is CI with respect to $P \backslash Q$ for $\mathcal{H}_w$,
this property is guaranteed by $Q$ being ECI with respect to $P \!\cup\! Q$ for the auxiliary hybrid system $\mathcal{H}_s \!=\! (C_s, F_s, D_s, G_s)$ given by 
\begin{equation} 
\begin{array}{ll}
	\!\!\!F_s(x) \!:=\! F(x) & x \!\in\! C_s \!:=\!(C \backslash Q) \!\cap\! P\\
	\!\!\!G_s(x) \!:=\!
	\bigg\{\!\!\!
    \begin{array}{cl}
		\!x & \mbox{if}~ x \!\in\! Q\\
		\!G(x) & \mbox{otherwise}
    \end{array}
    \bigg.
    & x \!\in\! D_s \!:=\!  (D \!\cap\! P) \!\cup\! Q \mbox{.}
\end{array}
\label{eqn:H_m+}
\end{equation}
The hybrid system $\mathcal{H}_s$ is similar to the system \eqref{eqn:ct_m+} in that $\mathcal{H}_s$ is just the restriction of $\mathcal{H}_w$ in \eqref{eqn:H_m} to $P \cup Q$.
It is easy to see that $C_s \!:=\! C_w \cap (P \cup Q)$ and $D_s \!:=\! D_w \cap (P \cup Q)$.
As a result, when $Q$ is ECI with respect to $P \cup Q$ for $\mathcal{H}_s$, each maximal solution to $\mathcal{H}_s$ starting from $P \backslash Q$ reaches $Q$ in finite hybrid time.
Since the solutions to $\mathcal{H}_s$ are the solutions to $\mathcal{H}_w$ (and vice versa when $p \,\mathcal{U}_w q$ is satisfied) up to when they reach $Q$, and the solutions to $\mathcal{H}_w$ are the solutions to $\mathcal{H}$ (and vice versa) up to when they reach $Q$,
each solution to $\mathcal{H}$ starting from $P \backslash Q$ reaches $Q$ in finite time and remains in $P$ until it reaches $Q$, which implies the satisfaction of $p \,\mathcal{U}_s q$ for $\mathcal{H}$.

Alternatively, the satisfaction of $p \,\mathcal{U}_w q$ for $\mathcal{H}$ can be guaranteed by CI of $P \cup Q$ with respect to $P \cup Q$ (namely, forward invariance of $P \cup Q$) for $\mathcal{H}$; and the satisfaction of $p \,\mathcal{U}_s q$ for $\mathcal{H}_w$ can be guaranteed by using FTA of $Q$ with respect to $P \cup Q$ instead of ECI of $Q$ with respect to $P \cup Q$ for $\mathcal{H}_s$.

\begin{example}[Timer]
Consider a hybrid system $\mathcal{H} = (C,F,D,G)$ modeling a constantly evolving timer with the state $x \in \mathbb{R}$ and
\begin{equation*}
\begin{array}{cl}
	F(x) := 1 & \qquad \forall x \in C := [0,1] \mbox{,}\\
	G(x) := 0 & \qquad \forall x \in D := [1, \infty)\mbox{.}
\end{array}
\end{equation*}
Define the atomic propositions $p$ and $q$ as 
\begin{equation*}
\begin{split}
p(x) &:=
\bigg\{
\begin{array}{cl}
    1 & \mbox{if}~ x \in [1/2,1] \\
    0 & \mbox{otherwise,}
\end{array}
\bigg.\quad
q(x) :=
\left\{
\begin{array}{cl}
    1 & \mbox{if}~ 
    x \in [1, \infty) \\
    0 & \mbox{otherwise,}
\end{array}
\right.
\end{split}
\end{equation*}
for each $x \!\in\! \mathbb{R}^n$. The sets $P$ and $Q$ in \eqref{eqn:K_sets} and the system $\mathcal{H}_w$ in \eqref{eqn:H_m} are given by $Q = D$, $P = [1/2,1]$, and 
\begin{equation*}
\begin{array}{cl}
	F_w(x) := 1 & \qquad \forall x \in C_w := [0,1) \mbox{,}\\
	G_w(x) := x & \qquad \forall x \in D_w := D = Q \mbox{.}
\end{array}
\end{equation*}
We notice that each solution to $\mathcal{H}_w$ starting from $P \backslash Q = [1/2,1)$ flows in $P$ and reaches $x \!=\! 1 \in Q$.
Once a solution reaches $x = 1$, it jumps according to the jump map $G_w(x) \!=\! x$ and stays at $\{1\} \in Q$ by jumping since it cannot flow back to $P\backslash Q$. Hence, the solutions to $\mathcal{H}_w$ starting from $P \backslash Q$ never leave the set $P \cup Q$, which implies that the set $P \cup Q$ is CI with respect to $P \backslash Q$ for $\mathcal{H}_w$.
Note that CI of $P \cup Q$ with respect to $P \backslash Q$ does not hold for $\mathcal{H}$ since once a solution to $\mathcal{H}$ reaches $Q$, it jumps outside $P \cup Q$.
Therefore, the formula $f \!=\! p \,\mathcal{U}_w q$ is satisfied for  $\mathcal{H}$ since the solutions to $\mathcal{H}$ starting from $P \backslash Q$ remain in $P$ until reaching the jump set $D \!=\! Q$. 
\hfill $\triangle$
\end{example}

\subsection{Sufficient Conditions for $p \,\mathcal{U}_w q$ using CI}

The following result characterizes the satisfaction of $p \,\mathcal{U}_w q$ using CI for hybrid systems.

\begin{theorem}[$p \,\mathcal{U}_w q$ via CI] \label{thm:weakuntil_ci}
Consider a hybrid system $\mathcal{H} = (C,F,D,G)$. Given atomic propositions $p$ and $q$, let the sets $P$ and $Q$ be given as in \eqref{eqn:K_sets}, and let the system $\mathcal{H}_w$ be as in \eqref{eqn:H_m}.
The formula $p \,\mathcal{U}_w q$ is satisfied for $\mathcal{H}$ if $P \cup Q$ is CI with respect to $P \backslash Q$ for $\mathcal{H}_w$.
\end{theorem}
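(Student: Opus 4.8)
The plan is to take an arbitrary maximal solution $\phi$ to $\mathcal{H}$ with $p(\phi(0,0)) + q(\phi(0,0)) \geq 1$ and show it satisfies $p \,\mathcal{U}_w q$ at $(0,0)$. There are two cases. If $q(\phi(0,0)) = 1$, i.e. $\phi(0,0) \in Q$, then $\phi$ trivially satisfies $p \,\mathcal{U}_s q$ at $(0,0)$ by the first clause of Definition \ref{def:strong_until}, and hence $p \,\mathcal{U}_w q$ at $(0,0)$ by the second clause of Definition \ref{def:weak_until}. So the substantive case is $\phi(0,0) \in P \backslash Q$.

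In that case, the key step is to relate $\phi$ to a solution of $\mathcal{H}_w$. I would define $(T^\star, J^\star) := \mathcal{T}_Q(\phi)$-style first hybrid time at which $\phi$ enters $Q$ (set it to be "beyond the end of $\dom\phi$" if $\phi$ never reaches $Q$), and let $\phi_w$ be the truncation of $\phi$ to the hybrid time interval before that first entry. The claim is that $\phi_w$ is a solution to $\mathcal{H}_w$ starting from $P \backslash Q$: on flow intervals, $\phi_w(t,j) \in C \backslash Q = C_w$ and $\dot\phi_w \in F = F_w$ (using that $\phi$ has not yet reached $Q$ and that $\phi(t,j) \in C$ during flows); at jumps occurring strictly before $(T^\star,J^\star)$ we have $\phi(t,j) \in D \backslash Q \subset D_w$ and $\phi(t,j+1) \in G(\phi(t,j)) = G_w(\phi(t,j))$. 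Here I must be a little careful: by Definition \ref{solution definition}, $\phi(t,j) \in C$ is only required on $\mathrm{int}(I^j)$, so at the endpoint where $\phi$ first hits $\partial Q$ after flowing, $\phi_w$ may end at a point of $\cl(C_w)$; this is still compatible with $\phi_w$ being a (possibly non-maximal) solution to $\mathcal{H}_w$, since (S1) only constrains the interior. Since $\phi_w \in \mathcal{S}_{\mathcal{H}_w}(P \backslash Q)$ and $P \cup Q$ is CI with respect to $P \backslash Q$ for $\mathcal{H}_w$ by hypothesis, we get $\rge \phi_w \subset P \cup Q$.

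Now I translate this back into the definition of $p \,\mathcal{U}_w q$. If $\phi$ never reaches $Q$, then $\phi = \phi_w$ and $\rge\phi \subset P \cup Q$; moreover since $\phi$ never enters $Q$, in fact $\rge\phi \subset (P\cup Q)\backslash Q = P \backslash Q \subset P$, so $p(\phi(s,k)) = 1$ for all $(s,k) \in \dom\phi$, i.e. the first clause of Definition \ref{def:weak_until} holds. If instead $\phi$ reaches $Q$ at some first hybrid time $(t^\star,j^\star) \in \dom\phi$ with $t^\star + j^\star > 0$, then $q(\phi(t^\star,j^\star)) = 1$, and for every $(s,k) \in \dom\phi$ with $0 \leq s+k < t^\star + j^\star$ the point $\phi(s,k)$ lies in $\rge\phi_w \subset P\cup Q$ and, by minimality of $(t^\star,j^\star)$, not in $Q$, hence in $P$; so $p(\phi(s,k)) = 1$ there, and the second clause of Definition \ref{def:strong_until} (hence of \ref{def:weak_until}) is satisfied at $(0,0)$.

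The main obstacle is the bookkeeping around "first hybrid time at which $\phi$ reaches $Q$" and the claim that the solutions of $\mathcal{H}$ and $\mathcal{H}_w$ coincide up to that time: one needs that $Q$ being closed guarantees this first hybrid time is well-defined (the entry set into the closed set $Q$ is closed in $\dom\phi$), and one must check the boundary/endpoint subtleties of Definition \ref{solution definition} — in particular that a solution of $\mathcal{H}_w$ which has flowed to the boundary $\partial Q \subset D_w$ is handled correctly by $G_w$ (it jumps in place via $G_w(x) = x$), which is exactly why $D_w$ was defined to include $Q$. Once the equivalence of the two solution sets "up to reaching $Q$" is nailed down, the rest is a direct unwinding of the definitions.
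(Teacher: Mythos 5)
Your proposal is correct and follows essentially the same route as the paper: identify the given solution of $\mathcal{H}$ with a solution of $\mathcal{H}_w$ up to the first hybrid time it reaches $Q$, invoke CI of $P \cup Q$ with respect to $P \backslash Q$ for $\mathcal{H}_w$, and unwind the definitions of the until formulas. The only point to tidy is that Definition~\ref{def:conditional_invariance} quantifies over \emph{maximal} solutions, so before invoking the CI hypothesis you should extend your truncated arc $\phi_w$ to a maximal solution of $\mathcal{H}_w$ (the paper works directly with such a maximal $\psi$ agreeing with $\phi$ up to $\mathcal{T}_Q(\phi)$); this changes nothing else in the argument.
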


\begin{proof}
\blue{Suppose that $P \cup Q$ is CI with respect to $P \backslash Q$ for $\mathcal{H}_w$. We show that, for each solution $\phi$ to $\mathcal{H}$ such that $\phi(0,0) \in P \backslash Q$, $\phi$ stays in $P \cup Q$ up to when $Q$ is reached.  
Indeed, let $\psi$ be a maximal solution to $\mathcal{H}_w$ such that 
$\psi(t,j) = \phi(t,j)$ for all $(t,j) \in \dom \phi$ up to when $Q$ is reached; such a solution $\psi$ to $\mathcal{H}_w$ always exists since the systems $\mathcal{H}$ and $\mathcal{H}_w$ share the same data outside the set $Q$. Furthermore, since $P \cup Q$ is CI with respect to $P \backslash Q$ for $\mathcal{H}_w$, we conclude that $\psi(t,j) \in P \cup Q$ for all $(t,j) \in \dom \psi$. Therefore, $\phi(t,j) \in P \cup Q$ for all $(t,j) \in \dom \phi$ up to when it reaches $Q$, which completes the proof.}
\end{proof}

Note that having $p \,\mathcal{U}_w q$ satisfied for $\mathcal{H}$ does not necessary imply that $P \cup Q$ is CI with respect to $P \backslash Q$ for $\mathcal{H}_w$ as shown next.

\begin{example} \label{ex:until_ci}
Consider a hybrid system $\mathcal{H} = (C,F,D,G)$ given by
\begin{equation*}
\begin{array}{lcl}
	\dot{x} \!\!\!\!\!\!&= F(x) := 1 & \qquad x \in C := [0,\infty) \\
	x^+ \!\!\!\!\!\!&= G(x) := 0
	&\qquad x \in D := [1, \infty)
\end{array}
\end{equation*}
and define atomic propositions $p$ and $q$ such that
\begin{equation*}
    p(x) :=
\bigg\{
\begin{array}{ll}
    1 & \mbox{if } x \in [1/2,\infty)\\
    0 & \mbox{otherwise,}
\end{array}
\bigg.\:\:\:
    q(x) :=
\bigg\{
\begin{array}{ll}
    1 & \mbox{if } x \in [-1,0]\\
    0 & \mbox{otherwise.}
\end{array}
\bigg.
\end{equation*}
The sets $P$ and $Q$ in \eqref{eqn:K_sets} and the data of $\mathcal{H}_w$ in \eqref{eqn:H_m} are given by $P = [1/2,\infty), Q = [-1,0]$, and
\begin{equation*}
    \begin{array}{ll}
	F_w(x) \!:=\! F(x)
	&\forall x \!\in\! C_w \!=\! (0,\infty)\\
	G_w(x) \!:=\! \Big\{\!\!
	\begin{array}{ll}
	    0 & \mbox{if } x \!\in\! [1, \infty)\\
	    x & \mbox{if } x \!\in\! [-1,0]
	\end{array}
	\Big.
	&\forall x \!\in\! D_w \!=\! [-1,0] \!\cup\! [1, \infty)\mbox{.}
\end{array}
\end{equation*}
Each solution to $\mathcal{H}$ starting from $P \backslash Q (=\!P)$ either flows in $P$ or reaches $\{0\} \!\in\! Q$ after a jump. Hence, the formula $p \,\mathcal{U}_w q$ is satisfied for $\mathcal{H}$. Now, each solution to $\mathcal{H}_w$ starting from $P \backslash Q (=\!P)$ is also a solution to $\mathcal{H}$ up to when it reaches $Q$, by reaching $\{0\}$ after a jump. Once a solution to $\mathcal{H}_w$ lands on $\{0\}$, both jumps according to $x^+ \!=\! G_w(x) \!=\! x$ and flows according to $\dot{x} \!=\! F_w(x) \!=\! 1$ are allowed by the concept of solution in Definition \ref{solution definition}. In particular, the solution to $\mathcal{H}_w$ flowing from $\{0\}$ is nontrivial and leaves the set $P \cup Q$. Hence, we conclude that the satisfaction of $p \,\mathcal{U}_w q$ for $\mathcal{H}$ does not imply that $P \cup Q$ is CI with respect to $P \backslash Q$ for $\mathcal{H}_w$.
\hfill $\triangle$
\end{example}

The following example illustrates Theorem~\ref{thm:weakuntil_ci}.

\begin{example}[Bouncing ball] \label{ex:bouncing_ball}
Consider a hybrid system $\mathcal{H} = (C,F,D,G)$ modeling a ball bouncing vertically on the ground, with state $x = (x_1, x_2) \in \mathbb{R}^2$ and data 
\begin{equation*}
\begin{array}{cll}
	F(x) \!\!\!\!&:= \Big[
	\begin{array}{c}
		x_2\\
		-\gamma\\
	\end{array}
	\Big]
	&\quad \forall x \!\in\! C \!:=\! \{x \!\in\! \mathbb{R}^2 : x_1 \geq 0\}\\
	G(x) \!\!\!\!&:= \Big[
	\begin{array}{c}
		0\\
		-\lambda x_2\\
	\end{array}
	\Big]
	&\quad \forall x \!\in\! D \!:=\! \{x \!\in\! \mathbb{R}^2 : x_1 = 0, x_2 \leq 0\}\mbox{,}
\end{array}
\end{equation*}
where $x_1$ denotes the height above the surface and $x_2$ is the vertical velocity. The parameter $\gamma > 0$ is the gravity coefficient and $\lambda \in (0,1)$ is the restitution coefficient.
Let $\varepsilon > 0$ and define atomic propositions $p$ and $q$ such that
\begin{equation*}
p(x) :=
\left\{
\begin{array}{ll}
    1 & \qquad\mbox{if }~ x_1 \in [0,\varepsilon] \:\:\mbox{and}\:\: x_2 \leq 0 \phantom{\geq 0}\\
    0 & \qquad\mbox{otherwise,}
\end{array}
\right.
\end{equation*}
\begin{equation*}
q(x) :=
\left\{
\begin{array}{ll}
    1 & \qquad\mbox{if }~ x_1 \geq 0 \:\:\mbox{and}\:\: x_2 \geq 0 \phantom{\in [0, \varepsilon]} \\
    0 & \qquad\mbox{otherwise.}
\end{array}
\right.
\end{equation*} 
The sets $P$ and $Q$ in \eqref{eqn:K_sets} and the data of $\mathcal{H}_w$ in \eqref{eqn:H_m} are given by $P = [0,\varepsilon] \times \mathbb{R}_{\leq 0}$, $Q = \mathbb{R}_{\geq 0} \times \mathbb{R}_{\geq 0}$, and 
\begin{equation*}
\begin{array}{cll}
	\!\!F_w(x) \!\!\!& =\! F(x)
	& \forall x \!\in\! C_w  \\
	\!\!G_w(x) \!\!\!\!\!& =\!
	\left\{\!\!
	\begin{array}{ll}  
		x & \mbox{if }~ x \in \mathbb{R}_{\geq 0} \!\times\! \mathbb{R}_{\geq 0}\\
		G(x) & \mbox{if }~x \in \{ 0 \} \!\times\! \mathbb{R}_{< 0}
    \end{array}
	\right.
	& \forall x \!\in\! D_w\mbox{,}
\end{array}
\end{equation*}
where $ D_w = (\left\{ 0 \right\} \!\times\! \mathbb{R}_{< 0}) \cup (\mathbb{R}_{\geq 0} \!\times\! \mathbb{R}_{\geq 0}) $ and $C_w \!=\! \mathbb{R}_{\geq 0} \!\times\! \mathbb{R}_{< 0}$.
Note that each solution to $\mathcal{H}_w$ from $P \backslash Q$ either flows in $P$ and reaches $Q$ after jumping from $\left\{ 0 \right\} \!\times\! \mathbb{R}_{< 0} \!\subset\! D_w$ or directly jumps from $\left\{ 0 \right\} \!\times\! \mathbb{R}_{< 0}$ towards $Q$.
Once a solution reaches a point $x \!\in\!Q$ after a jump, it jumps according to the jump map $G_w(x) \!=\! x$.
Note that each solution to $\mathcal{H}_w$ from $(0, \infty) \!\times\! \{0\} \!\subset\! Q$ can only flow.
Hence, every solution to $\mathcal{H}_w$ starting from $P \backslash Q$ never leaves the set $P \!\cup\! Q$, implying that the set $P \!\cup\! Q$ is CI with respect to $P \backslash Q$ for $\mathcal{H}_w$. Then, using Theorem \ref{thm:weakuntil_ci}, we conclude that the formula $p \,\mathcal{U}_w q$ is satisfied for $\mathcal{H}$.
\hfill $\triangle$
\end{example}

\subsection{Sufficient Conditions for $p \,\mathcal{U}_s q$ using $p \,\mathcal{U}_w q$ plus ECI}

The following result characterizes the satisfaction of $p \,\mathcal{U}_s q$ using ECI for hybrid systems in addition to the satisfaction of $p \,\mathcal{U}_w q$.

\begin{theorem}[$p \,\mathcal{U}_s q$ via $p \,\mathcal{U}_w q$ + ECI] \label{thm:stronguntil_eci}
Consider a hybrid system $\mathcal{H} = (C,F,D,G)$. Given atomic propositions $p$ and $q$, let the sets $P$ and $Q$ be given as in \eqref{eqn:K_sets} such that Assumption \ref{assump:SA1} holds, and let the data of $\mathcal{H}_s$ be given as in \eqref{eqn:H_m+}.
The formula $p \,\mathcal{U}_s q$ is satisfied for $\mathcal{H}$ if 
\begin{itemize}
    \item[1)] the formula $p \,\mathcal{U}_w q$ is satisfied for $\mathcal{H}$; and
    \item[2)] the set $Q$ is ECI with respect to $P \cup Q$ for $\mathcal{H}_s$.
\end{itemize}
\end{theorem}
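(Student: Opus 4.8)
The plan is to take an arbitrary maximal solution $\phi$ to $\mathcal{H}$ with $p(\phi(0,0)) + q(\phi(0,0)) \geq 1$ and show it satisfies $p \,\mathcal{U}_s q$ at $(0,0)$. The case $q(\phi(0,0)) = 1$ is immediate from Definition \ref{def:strong_until}, so I would assume $\phi(0,0) \in P \backslash Q$. The key structural fact I would establish first is a precise statement of the claim ``the solutions to $\mathcal{H}$ from $P \backslash Q$ coincide with the solutions to $\mathcal{H}_s$ from $P \backslash Q$ up to the first time they reach $Q$,'' given that $p \,\mathcal{U}_w q$ holds for $\mathcal{H}$. Concretely: since $p \,\mathcal{U}_w q$ is satisfied, $\phi$ either stays in $P$ for all hybrid time (in which case $\mathcal{T}_Q(\phi) = \infty$ is allowed) or stays in $P$ until it first reaches $Q$. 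In either case, the restriction of $\phi$ to $\{(t,j) \in \dom\phi : t+j \leq \mathcal{T}_Q(\phi)\}$ takes values in $P \cup Q$, flows only in $(C\backslash Q)\cap P = C_s$ and jumps only from $(D\cap P)\cup Q = D_s$ (using $P \subset C \cup D$ from Assumption \ref{assump:SA1}), with the same $F$ and, outside $Q$, the same $G$; hence this restriction is (extendable to) a solution $\psi \in \mathcal{S}_{\mathcal{H}_s}(P\backslash Q)$. This is essentially the same matching-of-solutions argument used in the proof of Theorem \ref{thm:weakuntil_ci}, now for $\mathcal{H}_s$ instead of $\mathcal{H}_w$, and I would invoke $C_s = C_w \cap (P\cup Q)$, $D_s = D_w \cap (P\cup Q)$ as already noted in the text.

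Next I would apply hypothesis 2): $Q$ being ECI with respect to $P \cup Q$ for $\mathcal{H}_s$ means every solution in $\mathcal{S}_{\mathcal{H}_s}(P\cup Q)$ eventually enters $Q$ and stays; in particular $\psi$ reaches $Q$ at some $(t^\star,j^\star) \in \dom\psi$. I would then argue that this forces $\mathcal{T}_Q(\phi) < \infty$: if $\phi$ stayed in $P$ forever, its restriction would equal $\psi$ on all of $\dom\psi$ up to $\mathcal{T}_Q(\phi)=\infty$, contradicting that $\psi \in Q$ at finite $(t^\star,j^\star)$. Hence $\phi$ reaches $Q$ at a finite hybrid time $(t^\star,j^\star)$, and because $p \,\mathcal{U}_w q$ holds, $p(\phi(s,k)) = 1$ for all $(s,k) \in \dom\phi$ with $s+k < t^\star+j^\star$. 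Since $\phi(0,0) \notin Q$, we have $t^\star + j^\star > 0$, so all the requirements in Definition \ref{def:strong_until} are met with this $(t^\star,j^\star)$, giving $p \,\mathcal{U}_s q$ at $(0,0)$.

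The main obstacle I anticipate is the bookkeeping around the concept of solution in Definition \ref{solution definition} when $\phi$ reaches a point of $Q$ that lies in $\cl(C_s)\backslash C_s$ — i.e., ensuring that the restricted arc genuinely qualifies as a (prefix of a) solution to $\mathcal{H}_s$ and that, conversely, $\psi$ reaching $Q$ is ``visible'' to $\phi$. Here one must be careful that once $\mathcal{H}_s$ reaches $Q$ it can only jump (via $G_s(x) = x$) and cannot flow out, which is exactly what ties the ECI conclusion for $\psi$ back to a finite $\mathcal{T}_Q(\phi)$; the subtlety flagged in the earlier Example \ref{ex:until_ci} (flowing back into $P\backslash Q$ from a point of $Q$ reached after a jump) is what must be excluded, and it is excluded because $Q \cap C_s = \emptyset$. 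I would also need to confirm that the reachability used in ECI is consistent with $\psi$ being a solution from $P\backslash Q \subset P \cup Q$, which is immediate. The remaining steps are routine manipulations of hybrid time domains and do not require the hybrid basic conditions beyond (SA)/Assumption \ref{assump:SA1}.
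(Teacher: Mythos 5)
Your proposal is correct and follows essentially the same route as the paper's proof: reduce to $\phi(0,0)\in P\backslash Q$, use item 1) to keep $\phi$ in $P$ up to $\mathcal{T}_Q(\phi)$, match $\phi$ with a maximal solution $\psi$ to $\mathcal{H}_s$ on $\{(t,j):t+j\leq \mathcal{T}_Q(\phi)\}$, and invoke ECI to force $\mathcal{T}_Q(\psi)=\mathcal{T}_Q(\phi)<\infty$. If anything, you are slightly more explicit than the paper about the one delicate step — that a maximal $\mathcal{H}_s$-solution extending $\phi$ cannot reach $Q$ at a hybrid time invisible to $\phi$ without contradicting maximality of $\phi$ for $\mathcal{H}$ — which the paper addresses only implicitly here (and spells out later in Lemma \ref{lemma:H_Hs}).
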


\begin{proof}
By definition of $\mathcal{H}_s$, if the formula $p \,\mathcal{U}_w q$ is satisfied for $\mathcal{H}$ by item 1), each solution to $\mathcal{H}_s$ starting from $P \backslash Q$ remains in $P \cup Q$.
Furthermore, when additionally $Q$ is ECI with respect to $P \cup Q$ for $\mathcal{H}_s$, each maximal solution to $\mathcal{H}_s$ starting from $P \backslash Q$ remains in the set $P \cup Q$ and reaches the set $Q$ in finite hybrid time.
The proof is completed if we show that each maximal solution $\phi$ to $\mathcal{H}$ starting from $P \backslash Q$ stays in $P \cup Q$ for all $(t,j) \in \dom \phi$ such that $t+j \leq \mathcal{T}_Q(\phi)$, and $\mathcal{T}_Q(\phi) < \infty$. To this end, let $\phi$ be a maximal solution to $\mathcal{H}$ starting from $P \backslash Q$. By item 1), $\phi$ remains in $P \backslash Q$ up to when it reaches $Q$ (if that ever happens).
Next, since both $\mathcal{H}$ and $\mathcal{H}_s$ share the same data on $P \backslash Q$, there always exists a solution $\psi$ to $\mathcal{H}_s$ such that $\psi(t,j) = \phi(t,j)$ for all $(t,j) \in \dom \phi$ provided that $t+j \leq \mathcal{T}_Q(\phi) = \mathcal{T}_Q(\psi)$.
Furthermore, by item 2), we know that $\mathcal{T}_Q(\psi) = \mathcal{T}_Q(\phi) < \infty$. Then, since we already know that $\psi(t,j) \in P \cup Q$ for all $(t,j) \in \dom \psi$ by item 1), we conclude that $\phi(t,j) = \psi(t,j) \in P \cup Q$ for all $(t,j) \in \dom \phi$ provided that $t+j \leq \mathcal{T}_Q(\phi) = \mathcal{T}_Q(\psi)$; and thus, the proof is completed.
\end{proof}

\ifitsdraft
\begin{example} 
[Bouncing ball] \label{expBB0}
Consider the system $\mathcal{H} = (C,F,D,G)$ in Example \ref{ex:bouncing_ball} while replacing the atomic proposition $p$ therein by $\tilde{p}$ such that
\red{\begin{equation*}
    \tilde{p}(x) :=
    \Big\{
    \begin{array}{ll}
        p(x) & \qquad\mbox{if } ~x \neq 0 \\
        0 & \qquad\mbox{otherwise.}
    \end{array}
    \Big.
\end{equation*} 
Hence, the set $\widetilde{P}$, according to \eqref{eqn:K_sets}, is given by $\widetilde{P} = P \backslash \left\{ 0 \right\}$.
...........is the set $\tilde{P}$ closed as in the assumption ???............}
As shown in Example \ref{ex:bouncing_ball}, the formula $p \,\mathcal{U}_w q$ is satisfied for $\mathcal{H}$. Moreover, since the solutions not starting from the origin do not reach the origin in finite hybrid time, we conclude that $\tilde{p} \,\mathcal{U}_w q$ is also satisfied for $\mathcal{H}$. Furthermore, the system $\mathcal{H}_s$ in \eqref{eqn:H_m+} is given by
\begin{equation*}
\begin{array}{cll}
	\!\!\!\!F_w(x) \!\!\!\!\!& =\! F(x)
	&\forall x \!\in\! C_s\\
\!\!\!\!G_w(x) \!\!\!\!\!& =\!
    \Big\{\!\!
    \begin{array}{cl}
	    \!x & \mbox{if }\, x \!\in\! Q\\
	    \!G(x) & \mbox{if }\, x \!\in\! D\mbox{,}
    \end{array}
    \Big.
	&\forall x \!\in\! D_s
\end{array}
\end{equation*}
where \red{$C_s = [0,\varepsilon] \!\times\! \mathbb{R}_{\leq 0} (= P)$} and \red{$D_s = (\{0\} \!\times\! \mathbb{R}_{\leq 0}) \cup (\mathbb{R}_{\geq 0} \!\times\! \mathbb{R}_{> 0}) (= D \cup Q)$}.
We notice that every solution to $\mathcal{H}_s$ starting from \red{$\widetilde{P} \backslash Q (= \widetilde{P})$} reaches $Q$ after jumping from the set \red{$\{0\} \times \mathbb{R}_{\leq 0} \subset D_s$}. Once the solutions reach $Q$, they do not flow and stay in $Q$ by jumps since $G_w(x) \!=\! x$ for all $x \!\in\! Q$.
Hence, the solutions to $\mathcal{H}_s$ from $\widetilde{P} \backslash Q$ reach $Q$ and stay in $Q$ after reaching $Q$, which implies that the set $Q$ is ECI with respect to $\widetilde{P} \backslash Q$ for $\mathcal{H}_s$. Then, using Theorem \ref{thm:stronguntil_eci}, we conclude that 
$\tilde{p} \,\mathcal{U}_s q$ is satisfied for $\mathcal{H}$.
\hfill $\triangle$
\end{example}
\fi

\red{
\begin{remark}
Recall that ECI of $Q$ with respect to $P \cup Q$ allows the solutions that start in $P \cup Q$ to leave $P \cup Q$, and then eventually get back to $Q$.  This behavior is excluded  under our conditions guaranteeing $p ~ \U_s q$ since we assume that $p ~ \U_w q$ holds.
\end{remark}
}

The following example shows that having 
$p \,\mathcal{U}_s q$ satisfied for $\mathcal{H}$ does not necessary imply that $Q$ is ECI with respect to $P \cup Q$ for $\mathcal{H}_s$.

\begin{example} \label{counterexpECI}
Consider the hybrid system $\mathcal{H}$ in Example \ref{ex:until_ci}
with $p$ and $q$ therein replaced by $\tilde{p}$ and $\tilde{q}$, respectively,
\begin{equation*}
\begin{split}
    \tilde{p}(x) &:=
	\left\{\begin{array}{ll}
    	1 & \qquad\mbox{if }~ x \in [0, 1+\varepsilon]\\
    	0 & \qquad\mbox{otherwise,}
	\end{array}\right.\\
    \tilde{q}(x) &:=
	\left\{\begin{array}{ll}
    	1 & \qquad\mbox{if }~ x \in [-1, 0] \cup [1+\varepsilon, \infty)\\
    	0 & \qquad\mbox{otherwise,}
	\end{array}\right.
\end{split}
\end{equation*}
with $0 < \varepsilon < 1$. Let $P$ and $Q$ be as in \eqref{eqn:K_sets} with $\tilde{p}$ and $\tilde{q}$ instead of $p$ and $q$, respectively. 
The system $\mathcal{H}_s$ in \eqref{eqn:H_m+} is given by
\begin{equation*}
    \begin{array}{ll}
	F_s(x) \!:=\! F(x)
	&\!\!\forall x \!\in\! C_s \!=\! (0, 1+\varepsilon)\\
	G_s(x) \!:=\!
	\Big\{\!\!
	\begin{array}{ll}
		x & \mbox{if } x \notin [1, 1+\varepsilon)\\
	    0 & \mbox{otherwise}	    
	\end{array}
	\Big.
	&\!\!\forall x \!\in\! D_s \!=\! [-1, 0] \cup [1, \infty)\mbox{.}
\end{array}
\end{equation*}
Each solution to $\mathcal{H}$ starting from $P \backslash Q \!=\! (0,1+\varepsilon)$ either flows in $P$ and reaches $[1+\varepsilon, \infty) \subset Q$ or reaches $\{0\} \!\in \! Q$ after a jump from $ (1, 1+ \varepsilon) \subset D$. Hence, the formula $p \,\mathcal{U}_s q$ is satisfied for $\mathcal{H}$. Now, we consider a solution to $\mathcal{H}$ starting from $P \backslash Q$ that reaches $Q$ for the first time by jumping from $[1,1+\varepsilon) \subset D$ to $\{ 0 \}$. Such a solution is also a solution to $\mathcal{H}_s$ up to when it reaches $\{0\} \in Q$ for the first time, from where, both the jump according to $x^+ = G_s(x) = x$ and the flow according to $\dot{x} = F_s(x) = 1$ are allowed by the concept of solution in Definition \ref{solution definition}. In particular, the solution flowing from $\{0\}$ is nontrivial and leaves the set $Q$.
Hence, we conclude that the satisfaction of $p \,\mathcal{U}_s q$ for $\mathcal{H}$ does not necessary imply that $Q$ is ECI with respect to $P \cup Q$ for $\mathcal{H}_s$.
\hfill $\triangle$
\end{example}

The following example illustrates Theorem \ref{thm:stronguntil_eci}.

\begin{example}[Thermostat]
\label{exp:thermostat}
Consider a hybrid system $\mathcal{H} = (C,F,D,G)$ modeling a controlled thermostat system.
The variable $h$ denotes the state of the heater, i.e., $h \!=\! 1$ implies the heater is on and $h \!=\! 0$ implies the heater is off. The variable $z$ is the room temperature, $z_o$ denotes the temperature outside the room, and $z_\triangle$ denotes the capacity of the heater to raise the temperature such that
$z_o < z_{\min} < z_{\max} < z_o + z_\triangle$.
The system $\mathcal{H}$ with the state $x := (h,z) \in \{0,1\} \times \mathbb{R}$ is given by
\begin{equation*}
\begin{array}{lll}
F(x) := \big[ 0 \quad -z + z_o + z_\triangle h \big]^\top & ~\forall x \in C := C_0 \cup C_1 \\
G(x) := \big[ 1-h \quad  z \big]^\top & ~\forall x \in D := D_0 \cup D_1,
\end{array}
\end{equation*}
where $C_0 := \{x \in \mathbb{R}^2 : h = 0, z \geq z_{\min}\}$, $C_1 := \{x \in \mathbb{R}^2 : h = 1, z \leq z_{\max}\}$, $D_0 := \{x \in \mathbb{R}^2 : h = 0, z \leq z_{\min}\}$, and $D_1 := \{x \in \mathbb{R}^2 : h = 1, z \geq z_{\max}\}$.
Define the atomic propositions $p$ and $q$ as 
\begin{equation*}
\begin{split}
p(x) &:=
\left\{
\begin{array}{ll}
    1 & \qquad\mbox{if}~ x \in \{ 1 \} \times (-\infty, z_{\max}]\\
    0 & \qquad\mbox{otherwise,}
\end{array}
\right.\\
q(x) &:=
\left\{
\begin{array}{ll}
    1 & \qquad\mbox{if}~ x \in \{ 0 \} \times [z_{\min}, + \infty) \\
    0 & \qquad\mbox{otherwise,}
\end{array}
\right.
\end{split}
\end{equation*}
for each $x \!\in\! \mathbb{R}^2$.
Then, the sets $P$ and $Q$ in \eqref{eqn:K_sets} are given by $P \!=\! \{ 1 \} \!\times\! (-\infty, z_{\max}]$, $Q \!=\! \{ 0 \} \!\times\! [z_{\min}, + \infty)$. The system $\mathcal{H}_w$ is given as in \eqref{eqn:H_m},
where $C_w \!=\! \{1\} \!\times\! (-\infty, z_{\max}] (= \!P)$ and $D_w \!=\! (\{0\} \!\times\! \mathbb{R}) \cup (\{1\} \!\times\! [z_{\max}, +\infty))$.
Then, the system $\mathcal{H}_s$ is given as in \eqref{eqn:H_m+} with $C_s \!=\! C_w (=\! P)$ and $D_s \!=\! (\{0\} \!\times\! [z_{\min}, +\infty)) \cup \{(1, z_{\max})\}$.
Note that each solution to $\mathcal{H}_s$ from $P \backslash Q (=P)$ flows in $P$ and reaches $Q$ after jumping from $\{(1, z_{\max})\} \in D_s$.
Once a solution reaches a point $x \in Q$, it jumps according to the jump map $G_s(x) = x$ and stays in $Q$ by jumping, which implies $Q$ is ECI with respect to $P \cup Q$ for $\mathcal{H}_s$.
Furthermore, each solution to $\mathcal{H}_w$ starting from $P \backslash Q (= P)$ flows in $P$ and reaches $Q$ for the first time by jumping from
$\{(1, z_{\max})\}$ to $\{(0, z_{\max})\}$.
Once a solution to $\mathcal{H}_w$ lands on $\{(0,z_{\max})\}$, it jumps according to the jump map $G_w(x) \!=\! x$ and stays in $Q$ by jumping.
Hence, each solution to $\mathcal{H}_w$ starting from $P \backslash Q$ does not leave the set $P \cup Q$, which implies that the set $P \cup Q$ is CI with respect to $P \backslash Q$ for $\mathcal{H}_w$; and thus, using Theorem \ref{thm:weakuntil_ci}, we conclude that the formula $p \,\mathcal{U}_w q$ is satisfied for $\mathcal{H}$.
As a result, using Theorem \ref{thm:stronguntil_eci}, we conclude that the formula $p \,\mathcal{U}_s q$ is satisfied for $\mathcal{H}$.
\hfill $\triangle$
\end{example}

\subsection{Equivalence Between $p \,\mathcal{U}_s q$ and $p \,\mathcal{U}_w q$ plus FTA}

The following result characterizes the satisfaction of $p \,\mathcal{U}_s q$ using FTA in addition to the satisfaction of $p \,\mathcal{U}_w q$.

\begin{theorem}
[$p \,\mathcal{U}_s q$ via $p \,\mathcal{U}_w q$ + FTA]
Consider a hybrid system $\mathcal{H} = (C,F,D,G)$. Given atomic propositions $p$ and $q$, let the sets $P$ and $Q$ be as in \eqref{eqn:K_sets} such that the set $Q$ is closed and let the data of $\mathcal{H}_s$ be given in \eqref{eqn:H_m+}.
The formula $p \,\mathcal{U}_s q$ is satisfied for $\mathcal{H}$ if and only if
\begin{itemize}
    \item[1)] the formula $p \,\mathcal{U}_w q$ is satisfied for $\mathcal{H}$; and
	\item[2)] the set $Q$ is FTA with respect to $P \cup Q$ for $\mathcal{H}_s$.
\end{itemize}
\label{thm:stronguntil_fta}
\end{theorem}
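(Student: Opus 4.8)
The plan is to establish the two implications separately. The ``if'' direction mirrors the proof of Theorem~\ref{thm:stronguntil_eci}, with FTA of $Q$ for $\mathcal{H}_s$ replacing ECI; the ``only if'' direction is the genuinely new content, as Example~\ref{counterexpECI} shows the corresponding statement fails with ECI in place of FTA. The common tool I would set up first is the elementary observation that on $P\setminus Q$ the systems $\mathcal{H}$ and $\mathcal{H}_s$ carry identical data: for $x\in P\setminus Q$ one has $x\in C \Leftrightarrow x\in C_s=(C\setminus Q)\cap P$ with $F(x)=F_s(x)$, and $x\in D \Leftrightarrow x\in D_s=(D\cap P)\cup Q$ with $G(x)=G_s(x)$; moreover $C_s\subset P\setminus Q$ and, since $C$ is closed by standing assumption~\ref{assump:SA}, $\cl(C_s)\subset C$ and $(C\cup D)\cap(P\setminus Q)\subset\cl(C_s)\cup D_s$. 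From this I would derive the key lemma used on both sides: a solution confined to $P\setminus Q$ is simultaneously a solution to $\mathcal{H}$ and to $\mathcal{H}_s$; a maximal $\mathcal{H}$-solution whose range lies in $P\setminus Q$ is also a maximal $\mathcal{H}_s$-solution (verified in the three cases of a closed domain, a domain ending in an open flow interval, and a complete domain); and the truncation at its first visit to $Q$ of an $\mathcal{H}$-solution that stays in $P\setminus Q$ beforehand is a solution to $\mathcal{H}_s$.

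For the ``if'' direction, fix a maximal solution $\phi$ to $\mathcal{H}$ with $p(\phi(0,0))+q(\phi(0,0))\ge 1$. If $q(\phi(0,0))=1$, the first clause of Definition~\ref{def:strong_until} gives $p\,\mathcal{U}_s q$ at $(0,0)$. Otherwise $\phi(0,0)\in P\setminus Q$, and item~1) together with Definition~\ref{def:weak_until} gives either $p\,\mathcal{U}_s q$ at $(0,0)$ directly, or $\rge\phi\subset P$. In the latter situation I would look at the first hybrid time $(t^\star,j^\star)$ at which $\phi$ reaches $Q$: if it exists, then $\phi$ stays in $P$ before it and $t^\star+j^\star>0$ (since $\phi(0,0)\notin Q$), so Definition~\ref{def:strong_until} is met; if it does not exist, then $\rge\phi\subset P\setminus Q$, so by the lemma $\phi$ is a maximal solution to $\mathcal{H}_s$ from a point of $P\cup Q$, and item~2) forces $\mathcal{T}_Q(\phi)<\infty$, contradicting that $\phi$ never reaches $Q$. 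Hence $p\,\mathcal{U}_s q$ holds at $(0,0)$ for every such $\phi$, i.e., for $\mathcal{H}$.

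For the ``only if'' direction, item~1) is immediate from Definition~\ref{def:weak_until}, since satisfying $p\,\mathcal{U}_s q$ at $(0,0)$ is one of the two clauses defining satisfaction of $p\,\mathcal{U}_w q$ at $(0,0)$. For item~2), take a maximal solution $\psi$ to $\mathcal{H}_s$ with $\psi(0,0)\in P\cup Q$; the case $\psi(0,0)\in Q$ gives $\mathcal{T}_Q(\psi)=0$, so assume $\psi(0,0)\in P\setminus Q$ and, for the sake of contradiction, that $\psi$ never reaches $Q$. Then $\rge\psi\subset P\setminus Q$ and $\psi$ is a solution to $\mathcal{H}$, which I would extend to a maximal solution $\phi$ to $\mathcal{H}$. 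Since $\phi(0,0)\in P\setminus Q$, satisfaction of $p\,\mathcal{U}_s q$ for $\mathcal{H}$ and the second clause of Definition~\ref{def:strong_until} yield a first hybrid time $(t^\star,j^\star)\in\dom\phi$ with $t^\star+j^\star>0$, $\phi(t^\star,j^\star)\in Q$, and $\phi(s,k)\in P\setminus Q$ whenever $s+k<t^\star+j^\star$. Because $\psi$ never reaches $Q$, we get $(t^\star,j^\star)\notin\dom\psi$; since $\dom\psi$ is an initial segment of $\dom\phi$, it is strictly contained in the initial segment $E$ of $\dom\phi$ ending at $(t^\star,j^\star)$. The restriction $\phi|_E$ is then a solution to $\mathcal{H}$ lying in $P\setminus Q$ off its terminal point, hence, by the lemma, a solution to $\mathcal{H}_s$ strictly extending $\psi$ --- contradicting maximality of $\psi$. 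Thus every maximal $\mathcal{H}_s$-solution from $P\cup Q$ reaches $Q$ in finite hybrid time, which is FTA of $Q$ with respect to $P\cup Q$ for $\mathcal{H}_s$.

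The step I expect to be the main obstacle is this solution-transfer lemma and the accompanying hybrid-time-domain bookkeeping: showing that maximality of a solution confined to $P\setminus Q$ transfers between $\mathcal{H}$ and $\mathcal{H}_s$, and that $\phi|_E$ verifies (S0)--(S2) for $\mathcal{H}_s$ --- the latter hinging on the fact that every point of $E$ other than its terminal one has hybrid-time sum strictly below $t^\star+j^\star$ and therefore lies in $P\setminus Q$, and on $C$ being closed so that $(\phi|_E)(0,0)\in\cl(C_s)\cup D_s$. It is worth flagging that, unlike Theorem~\ref{thm:stronguntil_eci}, this statement does not assume Assumption~\ref{assump:SA1} (i.e., $P\subset C\cup D$), so the argument must be carried out using only the closedness of $Q$ and of $C$.
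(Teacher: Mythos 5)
Your proof is correct and follows essentially the same route as the paper's: both directions rest on the fact that $\mathcal{H}$ and $\mathcal{H}_s$ share their data on $P\setminus Q$, so that solutions confined to $P\setminus Q$ (and their maximality) transfer between the two systems, while the weak-until hypothesis pins solutions in $P\setminus Q$ until they reach $Q$. The only difference is cosmetic and sits in the only-if direction --- the paper argues that a maximal $\mathcal{H}_s$-solution that never reaches $Q$ is already a maximal $\mathcal{H}$-solution and then invokes $p\,\mathcal{U}_s q$, whereas you extend it to a maximal $\mathcal{H}$-solution and truncate at the first $Q$-hitting time to contradict $\mathcal{H}_s$-maximality --- with your version spelling out the hybrid-time-domain bookkeeping that the paper leaves implicit.
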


\begin{proof}
Suppose that $p \,\mathcal{U}_s q$ is satisfied for $\mathcal{H}$. By definition of $p \,\mathcal{U}_s q$, we conclude that $p \,\mathcal{U}_w q$ is satisfied for $\mathcal{H}$. Next, we show that $Q$ is FTA with respect to $P \cup Q$ for $\mathcal{H}_s$.
To do so, we consider a maximal solution $\phi$ to $\mathcal{H}_s$ starting from $P \cup Q$.
In particular, each maximal solution to $\mathcal{H}_s$ starting from $Q$, the solution stays in $Q$ by construction of $\mathcal{H}_s$.
Hence, we consider a maximal solution $\phi$ to $\mathcal{H}_s$ starting from $P \setminus Q$.
Since $p \,\mathcal{U}_w q$ is satisfied for $\mathcal{H}$, the solution $\phi$ either reaches $Q$ in finite time or remains in $P \backslash Q$. To exclude the latter case, we show that when $\phi$ remains in $P \backslash Q$, then $\phi$ is a maximal solution to $\mathcal{H}$.
Indeed, assume the existence of a solution $\psi$ to $\mathcal{H}$ that is a nontrivial extension of $\phi$; namely,
there exists $I \!\subset\! \mathbb{R}_{\geq 0} \times \mathbb{N}$ such that $I \!\neq\! \emptyset$ and $\dom \psi = \dom \phi \cup I$.
Note that $\psi (\dom \phi) = \phi (\dom \phi) \subset P \backslash Q$.
Also, since $\psi$ must remain in $P \backslash Q$ up to when it reaches $Q$, we can choose $I$ such that $\psi (\dom \phi \cup I) \subset P \backslash Q$.
Hence, $\psi$ is a solution to $\mathcal{H}_s$, which contradicts the fact that $\phi$ is a maximal solution to $\mathcal{H}_s$. Furthermore, since $p \,\mathcal{U}_s q$ is satisfied for $\mathcal{H}$, we conclude that $\phi$, being a maximal solution to $\mathcal{H}$, must reach $Q$ in finite hybrid time.
 
Now, suppose that the formula $p \,\mathcal{U}_w q$ is satisfied for $\mathcal{H}$. This implies that each maximal solution $\phi$ to $\mathcal{H}$ remains in $P \backslash Q$ for all hybrid time; otherwise, $\phi$ remains in $P \backslash Q$ up to when it reaches $Q$ in finite hybrid time. To exclude the first scenario, we note that when $\phi$ remains in $P \backslash Q$ for all hybrid time, it follows that $\phi$ is also a maximal solution to $\mathcal{H}_s$. However, by item 2), the maximal solutions to $\mathcal{H}_s$ must reach $Q$.
\end{proof}

\section{Sufficient Conditions for CI, pre-ECI, and ECI} \label{Section.CIECI}

\subsection{Sufficient Conditions for CI}
First, we recall the sufficient conditions for invariance notions using a barrier function in \cite{199,maghenem2020sufficient} for hybrid systems. Below, the concept of the tangent cone\footnote{This definition of tangent cone is also known as the contingent cone and also as the Bouligand tangent cone.} to a set is used; see \cite[Definition 5.12]{goebel2012hybrid}. The tangent cone at a point $x \in \mathbb{R}^n$ of a set $C \subset \mathbb{R}^n$ is given by
$T_C(x) \!:=\! \left\{ v \in \mathbb{R}^n: \liminf_{h \rightarrow 0^+} \tfrac{|x + h v|_C}{h} = 0 \right\}$.
We also recall the equivalence \cite[Page 122]{aubin2009set}
\begin{equation} \label{eq.conti}
\begin{split}
v \in T_{C}(x) \Leftrightarrow ~\exists \left\{ h_i \right\}_{i \in \mathbb{N}} &\rightarrow 0^+ ~\mbox{and}~\left\{v_i\right\}_{i \in \mathbb{N}} \rightarrow v\\
	&: x + h_i v_i \in C~~\forall i \in \mathbb{N}\mbox{.}
\end{split}
\end{equation}
 
Furthermore, for the given sets $\mathcal{O}, \mathcal{X}_u \subset \mathbb{R}^n$ with $\mathcal{O} \cap \mathcal{X}_u = \emptyset$, we recall from \cite{199} the notion of a barrier function candidate with respect to $(\mathcal{O}, \mathcal{X}_u)$ for $\mathcal{H}$.

\begin{definition} [Barrier function candidate]
Consider $\mathcal{H} = (C,F,D,G)$ and sets $\mathcal{O}, \mathcal{X}_u \subset \mathbb{R}^n$ with $\mathcal{O} \cap \mathcal{X}_u = \emptyset$. A function $B : \mathbb{R}^n \!\rightarrow\! \mathbb{R}$ is said to be a barrier function candidate with respect to $(\mathcal{O}, \mathcal{X}_u)$ for $\mathcal{H}$ if
\begin{equation} \label{eqn:barrier_candidate_ci}
\left\{
\begin{array}{ll}
	B(x) \leq 0 & \qquad\forall x \in \mathcal{O}\\
	B(x) > 0 & \qquad\forall x \in (C \cup D) \cap \mathcal{X}_u \mbox{.}
\end{array}
\right.
\end{equation}
\end{definition}

To make the paper self contained, in the following, we recall a result on safety for hybrid systems \cite[Theorem 3.2]{199} to derive sufficient conditions for CI for hybrid systems.
Given two sets $\mathcal{O}$ and $\mathcal{X}_u$, the conditions given below provide sufficient conditions to verify that $\mathbb{R}^n \backslash \mathcal{X}_u$ is CI with respect to $\mathcal{O}$ for $\mathcal{H}$.
Furthermore, according to Remark \ref{remark:ci_fi}, when $\mathcal{O} = \mathbb{R}^n \backslash \mathcal{X}_u$, CI of $\mathbb{R}^n \backslash \mathcal{X}_u$ with respect to $\mathcal{O}$ reduces to forward pre-invariance of the set $K := \mathcal{O}$ \cite[Theorem 1 and Proposition 2]{maghenem2020sufficient}.

\begin{proposition} [CI and forward invariance]
Consider a hybrid system $\mathcal{H} = (C,F,D,G)$ \blue{such that Assumption  \ref{assump:SA} holds}.
\begin{itemize}[leftmargin=0.2in]
\item[1)] \label{prop:conditional_invariance}
\blue{Given sets $\mathcal{O}$ and $\mathcal{X}_u$ such that $\mathcal{O} \cap \mathcal{X}_u = \emptyset$ and  $\mathcal{O}$ and $\mathbb{R}^n \backslash \mathcal{X}_u$ are subsets of $C \cup D$, the set $\mathbb{R}^n \backslash \mathcal{X}_u$ is \emph{CI} with respect to $\mathcal{O}$ for $\mathcal{H}$} if there exists a $\mathcal{C}^1$ barrier function candidate $B$ with respect to $(\mathcal{O}, \mathcal{X}_u)$ for $\mathcal{H}$ as in \eqref{eqn:barrier_candidate_ci} such that $K := \{x \in C \cup D : B(x) \leq 0\}$ is closed and the following hold:
\end{itemize}
\begin{eqnarray}
\label{eqn:barrier_condition}
    &\left< \nabla B(x), \eta \right> \leq 0 ~\forall x \!\in\! (U(\partial K) \backslash K) \!\cap\! C, \forall \eta \!\in\! F(x) \!\cap\! T_C(x)\mbox{,}\nonumber\\
    &B(\eta) \leq 0 \:\:\:\qquad\forall x \!\in\! D \!\cap\! K, ~\forall\eta \!\in\! G(x)\mbox{,}\quad\qquad\qquad\qquad\\
	&G(D \cap K) \subset C \cup D\mbox{.}\:\:\:\qquad\qquad\qquad\qquad\qquad\qquad\qquad\nonumber
\end{eqnarray}
\begin{itemize}[leftmargin=0.2in]
	\item[2)] \label{prop:forward_invariance}
Given a closed set $K$ such that $K \subset C \cup D$, the set $K$ is \emph{forward pre-invariant} for $\mathcal{H}$ if there exists a $\mathcal{C}^1$ barrier function candidate $B$ with respect to $(K, \mathbb{R}^n \backslash K)$ for $\mathcal{H}$ as in \eqref{eqn:barrier_candidate_ci} such that \eqref{eqn:barrier_condition} holds.
Furthermore, the set $K$ is \emph{forward invariant} for $\mathcal{H}$ if the following additional conditions hold:
\begin{itemize}
    \item[a)] No maximal solution to $\mathcal{H}$ starting from $K$ has a finite time escape within $C \cap K$.
    \item[b)] Every maximal solution from  $(\partial C \cap K) \backslash D$ is nontrivial.
\end{itemize}
\end{itemize}
\end{proposition}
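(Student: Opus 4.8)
The plan is to derive part 1) from the safety result recalled in \cite[Theorem 3.2]{199}: as noted in the remark following Definition~\ref{def:conditional_invariance}, conditional invariance of $\mathbb{R}^n \backslash \mathcal{X}_u$ with respect to $\mathcal{O}$ is precisely safety with respect to $(\mathcal{O},\mathcal{X}_u)$, so it suffices to verify that the present hypotheses are those of that theorem. For a self-contained argument I would proceed directly. Fix a maximal solution $\phi \in \mathcal{S}_{\mathcal{H}}(\mathcal{O})$ and let $K := \{x \in C \cup D : B(x) \le 0\}$, which is closed by hypothesis. Since $B > 0$ on $(C \cup D) \cap \mathcal{X}_u$ while $B \le 0$ on $K$, we have $K \cap \mathcal{X}_u = \emptyset$ and $K \subset \mathbb{R}^n \backslash \mathcal{X}_u$; hence it is enough to show $\phi(t,j) \in K$ for every $(t,j) \in \dom\phi$ (flow endpoints momentarily in $\cl(C)\backslash C$ are handled by closedness of $K$ together with the limiting argument below). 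As $\phi(0,0) \in \mathcal{O}$ and $B \le 0$ on $\mathcal{O} \subset C \cup D$, the initial point lies in $K$.

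Next I would treat jumps and flows separately and conclude by induction on $j$. For a jump from $(t,j)$ to $(t,j+1)$ with $\phi(t,j) \in K$: then $\phi(t,j) \in D \cap K$, so the last two lines of \eqref{eqn:barrier_condition} give $\phi(t,j+1) \in G(\phi(t,j)) \subset C \cup D$ and $B(\phi(t,j+1)) \le 0$, i.e. $\phi(t,j+1) \in K$. For flows on $I^j$ with $\phi(t_j,j) \in K$: suppose for contradiction that $\phi(s,j) \notin K$ for some $s \in I^j$, and put $\tau := \inf\{t \in I^j : \phi(t,j) \notin K\}$. By continuity of $t\mapsto\phi(t,j)$ and closedness of $K$, $\phi(\tau,j) \in K$, and being a limit of points outside $K$ it lies in $\partial K \subset U(\partial K)$; moreover $\phi(t,j) \in C$ for $t$ in the interior of $I^j$, so for $t \in (\tau,\tau+\varepsilon)$ with $\varepsilon$ small, $\phi(t,j) \in (U(\partial K)\backslash K)\cap C$. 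Using that $t \mapsto B(\phi(t,j))$ is locally absolutely continuous, that $\dot\phi(t,j) \in F(\phi(t,j))$ for a.a. $t$, and the standard fact that a flowing solution satisfies $\dot\phi(t,j) \in T_C(\phi(t,j))$ for a.a. $t$, the first line of \eqref{eqn:barrier_condition} yields $\tfrac{d}{dt}B(\phi(t,j)) = \langle \nabla B(\phi(t,j)), \dot\phi(t,j)\rangle \le 0$ for a.a. $t \in (\tau,\tau+\varepsilon)$. Hence $B(\phi(t,j)) \le B(\phi(\tau,j)) \le 0$ there, so $\phi(t,j) \in K$, a contradiction. This gives $\rge\phi \subset K \subset \mathbb{R}^n\backslash\mathcal{X}_u$.

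For part 2), forward pre-invariance of a closed $K \subset C \cup D$ follows from part 1) applied with $\mathcal{X}_u := \mathbb{R}^n \backslash K$ and $\mathcal{O} := K$: then $\mathbb{R}^n \backslash \mathcal{X}_u = K$ is CI with respect to $K$, which by Remark~\ref{remark:ci_fi} is exactly forward pre-invariance of $K$; see also \cite[Theorem 1 and Proposition 2]{maghenem2020sufficient}. To upgrade to forward invariance I must show that every maximal solution from $K$ is complete. Such a solution stays in $K$ by forward pre-invariance, and the only obstructions to completeness are: (i) a finite escape time during flow, which remains in $C \cap K$ and is ruled out by a); and (ii) termination at a point of $K$ admitting only the trivial solution, which can occur only at a point of $(\partial C \cap K)\backslash D$ — elsewhere a nontrivial solution exists by \ref{assump:SA} (viability of $F$ on $C$) or via $G$ on $D$ — and this is ruled out by b). Combining with the basic existence result for \eqref{eqn:H} under \ref{assump:SA} gives completeness.

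The main obstacle is the flow-exit step: making rigorous that at the first exit time $\tau$ the state lies in $\partial K \cap C$, so that both the viability fact $\dot\phi \in T_C(\phi)$ and the hypothesis on $U(\partial K)\backslash K$ are in force, and carefully treating the boundary cases $\tau = \sup I^j$ and $\phi(\tau,j) \in \cl(C)\backslash C$. The remainder is bookkeeping: matching the present hypotheses to those of \cite[Theorem 3.2]{199} and \cite[Theorem 1 and Proposition 2]{maghenem2020sufficient}, and invoking the standard existence theory for \eqref{eqn:H} under \ref{assump:SA} for the completeness claims.
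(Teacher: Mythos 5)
The paper offers no proof of this proposition: it is explicitly recalled from \cite[Theorem 3.2]{199} (part 1) and \cite[Theorem 1 and Proposition 2]{maghenem2020sufficient} (part 2), so any self-contained argument is by construction a different route. Yours is, in substance, the standard proof underlying those references: induction over the hybrid time domain, with jumps handled by the last two conditions of \eqref{eqn:barrier_condition} (which give both $B\le 0$ after the jump and that the jump lands in $C\cup D$, hence in $K$), flows handled by a first-exit-time contradiction, part 2) obtained from part 1) via Remark \ref{remark:ci_fi}, and completeness obtained by ruling out finite escape (hypothesis a)), premature termination on $(\partial C\cap K)\setminus D$ (hypothesis b)), and jumps leaving $C\cup D$ (third line of \eqref{eqn:barrier_condition}). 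This is a legitimate and essentially complete reconstruction; what it buys over the paper's citation is self-containedness, at the cost of re-deriving material the authors deliberately outsourced.

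One step is not correct as written. At the first exit time $\tau:=\inf\{t\in I^j:\phi(t,j)\notin K\}$ you claim $\phi(t,j)\in (U(\partial K)\setminus K)\cap C$ for \emph{all} $t\in(\tau,\tau+\varepsilon)$; but $\tau$ being an infimum does not prevent $\phi(\cdot,j)$ from re-entering $K$ on that interval, and on the subintervals where $\phi(t,j)\in K$ the differential inequality in \eqref{eqn:barrier_condition} gives no control on $\tfrac{d}{dt}B(\phi(t,j))$, so the monotonicity conclusion $B(\phi(t,j))\le B(\phi(\tau,j))$ does not follow directly. The standard repair: pick $s>\tau$ with $B(\phi(s,j))>0$ and, by continuity, with $\phi([\tau,s],j)\subset U(\partial K)$; set $\sigma:=\sup\{t\in[\tau,s]:B(\phi(t,j))\le 0\}$, so $B(\phi(\sigma,j))\le 0$ while $B(\phi(\cdot,j))>0$ on $(\sigma,s]$; on that subinterval the state is in $(U(\partial K)\setminus K)\cap C$, the derivative bound applies almost everywhere, and absolute continuity of $t\mapsto B(\phi(t,j))$ yields $B(\phi(s,j))\le B(\phi(\sigma,j))\le 0$, a contradiction. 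With this adjustment — and noting that under \ref{assump:SA} the set $C$ is closed, so your $\cl(C)\setminus C$ caveat is vacuous — the argument goes through.
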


\begin{remark}
One can guarantee that the solutions to $\mathcal{H}$ do not have a finite escape time\footnote{A solution has finite escape time inside a given set if the solution diverges while remaining inside the set within a bounded (hybrid) time domain; see \cite[Chapter 3]{khalil}.} inside the set $K \cap C$ when, for example, the set $K \cap C$ is compact or when the flow map $F$ has global linear growth on $K \cap C$. Furthermore, according to \cite[Proposition 3]{maghenem2020sufficient}, the existence of a nontrivial solution starting from each point in $(K \cap \partial C) \backslash D$ can be proved by verifying the following infinitesimal condition:
$F(x) \cap T_{K \cap C}(x) \!\neq\! \emptyset$ for all $x \!\in\! U(x_o) \cap (K \cap \partial C)$ and for all $x_o \!\in\! (K \cap \partial C) \backslash D$.
\end{remark}

\subsection{Sufficient Conditions for pre-ECI}

In the following, inspired by \cite[Theorem 3.4]{ladde1972analysis}, we propose sufficient conditions for pre-ECI for hybrid systems.

\begin{theorem} [pre-ECI]
Consider a hybrid system 
$\mathcal{H} = (C,F,D,G)$ \blue{such that Assumption  \ref{assump:SA} holds}, 
and sets $\mathcal{O} \subset C \cup D$ and 
$\mathcal{A} \subset \mathbb{R}^n$. 
The set $\mathcal{A}$ is \emph{pre-ECI} 
with respect to the set $\mathcal{O}$ 
for $\mathcal{H}$ if the following properties hold:
\begin{itemize}
\item[1)] There exist a $\mathcal{C}^1$ function $v : \mathbb{R}^n \rightarrow \mathbb{R}$ and a locally Lipschitz function $f_c : \mathbb{R} \rightarrow \mathbb{R}$ such that
	\begin{itemize}
		\item[1a)]
		$
		\begin{aligned}[t]
			&\!\!\left< \nabla v(x), \eta \right> \leq f_c(v(x)) &\forall x \!\in\! C, ~\forall \eta \!\in\! F(x) \cap T_C(x)\mbox{,}\\
		&\!\!v(\eta) \leq v(x) &\forall x \!\in\! D, ~\forall \eta \!\in\! G(x)\mbox{;}
		\end{aligned}
		$
		\item[1b)] there exists a constant $r_1 > 0$ such that
		the solutions to $\dot{y} = f_c(y)$, starting from $v(\mathcal{O})$, converge to $(-\infty, r_1)$ in finite time.\footnote{The solutions to $\dot{y} = f_c(y)$ from $v(\mathcal{O})$ exist at least until they reach the set $(-\infty,r_1)$.}
	\end{itemize}
\item[2)] There exist a $\mathcal{C}^1$ function $w : \mathbb{R}^n \rightarrow \mathbb{R}$ and a nondecreasing function\footnote{A scalar-valued function $f_d$ is said to be nondecreasing if for each $x \leq y$, $f_d(x) \leq f_d(y)$.} $f_d : \mathbb{R} \rightarrow \mathbb{R}$ such that
	\begin{itemize}
		\item[2a)]
		$ \begin{aligned}[t]
			&\!\!\left< \nabla w(x), \eta \right> \leq 0 &\forall x \in C, ~\forall \eta \in F(x) \cap T_C(x) \mbox{,}\\
		&\!\!w(\eta) \leq f_d(w(x)) &\forall x \in D, ~\forall \eta \in G(x) \mbox{;}
		\end{aligned}
		$
		\item[2b)] there exists a constant $r_2 > 0$ such that 
		the solutions to $z^+ = f_d(z)$, starting from $w(\mathcal{O})$, converge to $(-\infty, r_2)$ in finite time.
	\end{itemize}
	\item[3)] One of the following conditions holds:
	\begin{itemize}
		\item[3a)] Each complete solution to $\mathcal{H}$ starting from $\mathcal{O}$ is eventually continuous and, with $r_1$ coming from item 1b),
		\begin{align} \label{eq:S1}
			S_1 := \{ x \in C : v(x) < r_1 \} \subset \mathcal{A}\mbox{.}
		\end{align}
		\item[3b)] Each complete solution to $\mathcal{H}$ starting from $\mathcal{O}$ is eventually discrete and, with $r_2$ coming from item 2b),
		\begin{align} \label{eq:S2}
			S_2 := \{ x \in D : w(x) < r_2 \} \subset \mathcal{A}\mbox{.}
		\end{align}
		\item[3c)] Each complete solution to $\mathcal{H}$ starting from $\mathcal{O}$ is eventually continuous, eventually discrete, or has a hybrid time domain that is unbounded in both the $t$ and the $j$ direction and,
		with $r_1$ and $r_2$ coming from item 1b) and item 2b) respectively, \eqref{eq:S1} and \eqref{eq:S2} hold.
		\item[3d)] With $r_1$ and $r_2$ coming from item 1b) and item 2b) respectively, \eqref{eq:S1} and \eqref{eq:S2} hold, and
		$G(S_2) \cap C \subset S_1$.
	\end{itemize}
\end{itemize}
\label{thm:pre_eventual_ci}
\end{theorem}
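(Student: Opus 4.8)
The plan is to fix an arbitrary complete solution $\phi \in \mathcal{S}_{\mathcal{H}}(\mathcal{O})$ and produce a hybrid time $(t^\star,j^\star)\in\dom\phi$ past which $\phi$ never leaves $\mathcal{A}$. The backbone is a pair of comparison estimates. First I would run a continuous-time comparison along flows: by item 1a), for each $j$ and almost every $t$ with $(t,j)\in\dom\phi$ one has $\tfrac{d}{dt}v(\phi(t,j))\le f_c(v(\phi(t,j)))$ (an absolutely continuous arc contained in $C$ has velocity in $T_C$, cf.\ \eqref{eq.conti}, so the restriction $F(x)\cap T_C(x)$ in 1a) is exactly what is available), and $v$ is nonincreasing across jumps; chaining the scalar comparison principle (valid since $f_c$ is locally Lipschitz) over successive flow intervals and jumps yields $v(\phi(t,j))\le y(t)$ for all $(t,j)\in\dom\phi$, with $y$ the solution of $\dot y=f_c(y)$, $y(0)=v(\phi(0,0))\in v(\mathcal{O})$. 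In parallel, by item 2a) the map $t\mapsto w(\phi(t,j))$ is nonincreasing along flows and $w(\phi(t,j+1))\le f_d(w(\phi(t,j)))$ at jumps, so — using that $f_d$ is nondecreasing — an induction on $j$ gives $w(\phi(t,j))\le z(j)$ for all $(t,j)\in\dom\phi$, with $z$ the solution of $z^+=f_d(z)$, $z(0)=w(\phi(0,0))\in w(\mathcal{O})$. Items 1b) and 2b) then furnish finite $T_1,J_2$ with $y(t)<r_1$ for $t\ge T_1$ and $z(j)<r_2$ for $j\ge J_2$ (I read ``converge to $(-\infty,r_i)$ in finite time'' as reach-and-remain, and will use the ``remain'' part). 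Since moreover every $\phi(t,j)$, $(t,j)\in\dom\phi$, lies in $C\cup D$ — in $C$ at flow points (closedness of $C$), in $D$ at jump points (by (S2)), and in $C\cup D$ at $(0,0)$ since $\mathcal{O}\subset C\cup D$ — it follows that a flow point with $t\ge T_1$ lies in $S_1$ and a jump point with $j\ge J_2$ lies in $S_2$.

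The cases 3a)--3c) are then routine from the structure of complete solutions. Under 3a), $\phi$ flows forever after its last jump, so $t\to\infty$; choosing $t^\star$ past both $T_1$ and that last jump time gives only flow points with $t\ge T_1$ afterwards, hence in $S_1\subset\mathcal{A}$. Under 3b), $\phi$ eventually only jumps, at the finite time $\sup_t\dom\phi$, so $j\to\infty$; choosing $j^\star\ge J_2$ gives only jump points with $j\ge J_2$ afterwards, hence in $S_2\subset\mathcal{A}$. Under 3c), the eventually-continuous and eventually-discrete alternatives reduce to the two previous cases (and both $S_1\subset\mathcal{A}$ and $S_2\subset\mathcal{A}$ are assumed); if instead $\dom\phi$ is unbounded in both $t$ and $j$, pick $(t^\star,j^\star)\in\dom\phi$ with $t^\star\ge T_1$ and $j^\star\ge J_2$, and use the elementary hybrid-time fact that for points of a common domain $t+j\ge t^\star+j^\star$ forces $t\ge t^\star$ and $j\ge j^\star$; then every later point has $t\ge T_1$, $j\ge J_2$, and lies in $S_1$ if it is a flow point and in $S_2$ if it is a jump point, hence in $\mathcal{A}$.

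The hard part will be case 3d), which imposes no structure on $\phi$ and is exactly where the extra hypothesis $G(S_2)\cap C\subset S_1$ enters. If $\dom\phi$ is unbounded in both directions, or $j$ is bounded (so $\phi$ is eventually continuous), or $t$ is bounded with $\phi$ eventually discrete, the arguments above apply unchanged. The delicate situation is $t$ bounded, $\sup_t\dom\phi=:T<\infty$, with $\phi$ genuinely Zeno: there are nontrivial flow intervals at arbitrarily large $j$, and if $T<T_1$ the bound $y(\cdot)$ has not yet dropped below $r_1$, so the $v$-comparison alone does not place those flow points in $S_1$. The plan to close this: for $j\ge J_2$ each jump point $\phi(t,j)$ has $w(\phi(t,j))\le z(j)<r_2$ and lies in $D$, hence $\phi(t,j)\in S_2$; therefore $\phi(t,j+1)\in G(\phi(t,j))\subset G(S_2)$, and when the subsequent interval is nontrivial $\phi(t,j+1)\in C$, so $\phi(t,j+1)\in G(S_2)\cap C\subset S_1$, i.e.\ $v(\phi(t,j+1))<r_1$ at the start of that interval; since $\dot v(\phi)\le f_c(v(\phi))$ along the interval and $(-\infty,r_1)$ is forward invariant for $\dot y=f_c(y)$ (the ``remain'' part of item 1b)), the comparison principle keeps $v(\phi(\cdot,j+1))<r_1$ on the whole interval, so it stays in $S_1$. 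Taking $j^\star$ just above $J_2$, every later point of $\phi$ then lies in $S_1\cup S_2\subset\mathcal{A}$. Throughout, the one recurring technicality is to certify that the points at which we invoke membership in $S_1$ or $S_2$ are genuinely in $C$ or in $D$ (not merely in $\cl(C)$), which follows from closedness of $C$ at flow-interval endpoints reached by flowing and from (S2) at endpoints reached by a jump.
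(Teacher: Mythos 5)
Your proposal is correct and follows essentially the same route as the paper's proof: the two comparison estimates $v(\phi(t,j))\le y(t)$ and $w(\phi(t,j))\le z(j)$, the direct case analysis for 3a)--3c), and for 3d) the same mechanism of landing in $S_2$ for $j$ large, jumping into $G(S_2)\cap C\subset S_1$, and using forward invariance of $(-\infty,r_1)$ under $\dot y=f_c(y)$ (which the paper justifies via uniqueness of solutions of the locally Lipschitz scalar ODE, matching your ``remain'' reading of 1b)) to keep $v<r_1$ along subsequent flow intervals.
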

\begin{proof}
According to the definition of pre-ECI, we need to show that for each complete solution $\phi$ to $\mathcal{H}$ starting from $\mathcal{O}$, there exists $(t^\star,j^\star) \in \dom \phi$ such that $\phi(t,j) \in \mathcal{A}$ for all $(t,j) \in \dom \phi$ such that $t + j \geq t^\star + j^\star$. 
Consider a complete solution $\phi$ to $\mathcal{H}$ starting from $\phi(0,0) \in \mathcal{O}$. Let $y$ be the maximal solution to $\dot{y} = f_c(y)$ starting from 
 $y(0) = v(\phi(0,0)) \in v(\mathcal{O})$ and let $z$ be the complete solution to the system $z^+ = f_d(z)$ starting from $z(0) = w(\phi(0,0)) \in w(\mathcal{O})$.

First, if the solution $\phi$ initially flows, we use item 1a) to conclude,
via the comparison lemma in Lemma \ref{lemcomp}, Lemma \ref{lemaux1}, and Lemma \ref{lemaux2} that
$v(\phi(t,0)) \leq y(t)$ for all $t \in I^0$,
where
$I^0 := \{ t \in \mathbb{R}_{\geq 0} : (t,0) \in \dom \phi \}$.
To show this, we used the fact that\footnote{When $y$ has a finite-escape time $t_y > 0$, using item 1b), it follows that $\lim_{t \nearrow t_y} y(t) = -\infty$. Furthermore, since $v(\phi(t,0)) \leq y(t)$ for all $t \in \dom y \cap I^0$, then there must exist $t_\phi \in I^0$ with $t_\phi \leq t_y$ such that $\lim_{t \nearrow t_\phi} v(\phi(t,0)) = -\infty$. Hence, $\phi$ should escape to $-\infty$ no later than $y$.} $I^0 \subset \dom y$. Furthermore, if the solution $\phi$ jumps initially, we conclude using item 1a) that
$v(\phi(0,1)) \leq y(0)$.
By extending this reasoning over the domain of $\phi$, we conclude that
$v(\phi(t,j)) \leq y(t)$ for all $(t,j) \in \dom\phi$.

On the other hand, using item 2a), we conclude that if $\phi$ initially jumps, then 
$w(\phi(0,1)) \!\leq\! f_d(w(\phi(0,0))) \!=\! f_d(z(0)) \!=\! z(1)$.
Otherwise, when the solution $\phi$ initially flows, we conclude that 
$w(\phi(t,0)) \!\leq\! w(\phi(0,0)) \!=\! z(0)$ for all $t \!\in\! I^0$.
Moreover, by extending this reasoning over the domain of $\phi$
and using the fact that $f_d$ is nondecreasing,
we conclude that $w(\phi(t,j)) \!\leq\! z(j)$ for all $(t,j) \!\in\! \dom\phi$.
Indeed, it is easy to see that
$w(\phi(t,j)) \!\leq\! w(\phi(t',j))$ for all $(t,j) \!\in\! \dom\phi$ such that $t \!\geq\! t'$ and $(t',j) \in \dom\phi$; namely, the bound of the function $w$ does not increase over the interval of flow of $\phi$.

Since $f_d$ is nondecreasing, the existence of $j_z \!\in\! \mathbb{N}$ such that $z(j_z) \!\in\! (-\infty,r_2)$ (coming from item 2b)) implies that $z(j) \!\in\! (-\infty,r_2)$ for all $j \!\geq\! j_z$.
Similarly, from the existence of $t_y \!\in\! \mathbb{R}_{\geq 0}$ such that $y(t_y) \!\in\! (-\infty,r_1)$ (in item 1b)), it follows that $y(t) \!\in\! (-\infty,r_1)$ for all $t \!\geq\! t_y$.
Moreover, since the solution $\phi$ is complete, $(\{t_y\} \!\times\! \mathbb{N}) \cap \dom \phi \neq \emptyset$ or $(\mathbb{R}_{\geq 0} \!\times\! \{j_z\}) \cap \dom\phi \neq \emptyset$.
Therefore, we conclude that $v(\phi(t, j)) \!<\! r_1$ for all $(t, j) \!\in\! \dom\phi$ such that $t \!\geq\! t_y$ or $w(\phi(t, j)) \!<\! r_2$ for all $(t, j) \!\in\! \dom\phi$ such that $j \!\geq\! j_z$.

To complete the proof, we show that there exists $(t^\star, j^\star) \in \dom \phi$ such that $\phi(t,j) \in \mathcal{A}$ for all $(t,j) \in \dom \phi$ such that $t + j \geq t^\star + j^\star$.
To this end, we present the following cases depending on items 3a)-3d).  
\begin{itemize}[leftmargin=0.15in]
    \item[a)] When the solution $\phi$ is complete and eventually continuous, it follows that, for some $\tilde{t} \in \mathbb{R}_{\geq 0}$, $\phi(t,j) \in C$ for all $(t,j) \in \dom \phi$ such that $t \geq \tilde{t}$. Hence, we have $v(\phi(t,j)) < r_1$ and $\phi(t,j) \in C$ for all $(t,j)$ such that $t \geq t^\star := \max\{t_y,\tilde{t}\}$; and thus, $S_1$ is nonempty.
    Then, if $S_1$ is a subset of $\mathcal{A}$, with $j^\star$ such that $(t^\star, j^\star) \in \dom\phi$, we have $\phi(t,j) \in S_1 \subset \mathcal{A}$ for all $(t,j) \in \dom\phi$ such that $t +j \geq t^\star + j^\star$.
    \item[b)] When the solution $\phi$ is complete and eventually discrete, it follows that, for some $\tilde{j} \in \mathbb{N}$, $\phi(t,j) \in D$ for all $(t,j) \in \dom \phi$ such that $j \geq \tilde{j}$. Hence, we have $w(\phi(t,j)) < r_2$ and $\phi(t,j) \in D$ for all $(t,j)$ such that $j \geq j^\star := \max\{j_z,\tilde{j}\}$; and thus, $S_2$ is nonempty.
    Then, if $S_2$ is a subset of $\mathcal{A}$, with $t^\star$ such that $(t^\star, j^\star) \in \dom\phi$, we have $\phi(t,j) \in S_2 \subset \mathcal{A}$ for all $(t,j) \in \dom \phi$ such that $t + j \geq t^\star + j^\star$.
    \item[c)] When the hybrid time domain of the solution $\phi$ achieves both an unbounded amount of flows and an unbounded number of jumps, we conclude that $(\{ t_y \} \times \mathbb{N}) \cap \dom\phi \neq \emptyset$; hence, $v(\phi(t,j)) < r_1$ for all $t \geq t_y$ such that $(t,j) \in \dom\phi$. Also, $(\mathbb{R}_{\geq 0} \times \{ j_z \}) \cap \dom\phi \neq \emptyset$; hence, $w(\phi(t,j)) < r_2$ for all $j \geq j_z$ such that $(t,j) \in \dom\phi$.
    As a result, $S_1$ and $S_2$ are nonempty.
    Then, if $S_1$ and $S_2$ are subsets of $\mathcal{A}$, with $(t^\star, j^\star) \in \dom\phi$ such that $t^\star \geq t_y$ and $j^\star \geq j_z$, we conclude that $\phi(t,j) \in S_1 \cup S_2 \subset \mathcal{A}$ for all $(t,j) \in \dom\phi$ such that $t+j \geq t^\star + j^\star$.
\end{itemize}
Then, when one among items 3a)-3c) holds, according to the arguments in a)-c), we have that every complete solution $\phi$ starting from $\mathcal{O}$ satisfies that there exists $(t^\star, j^\star) \!\in\! \dom\phi$ such that $\phi(t,j) \in \mathcal{A}$ for all $(t,j) \in \dom\phi$ such that $t +j \geq t^\star + j^\star$.

Next, suppose that item 3d) holds and the solution $\phi$ is genuinely Zeno, which implies that $\phi$ does not satisfy items 3a)-3c).
        In this case, the solution $\phi$ jumps infinitely many times on a bounded interval of ordinary time and it always flows after a finite number of jumps.
        Hence, due to the fact $\phi$ is genuinely Zeno,
        there exists $(\tilde{t},\tilde{j}) \in \dom \phi$ such that $\tilde{j} \geq j_z$ satisfying $w(\phi(\tilde{t},\tilde{j})) < r_2$ and $\phi(\tilde{t},\tilde{j}) \in D$, which, in turn, implies $\phi(\tilde{t},\tilde{j}) \in S_2 \subset \mathcal{A}$.
        Note that $w(\phi(t,j)) < r_2$ for all $(t,j) \in \dom\phi$ such that $j \geq \tilde{j} \geq j_z$.
        Moreover, using the fact that $\phi$ is genuinely Zeno, $\phi$ jumps to a point in $C$; namely, there exists $(\tilde{t},\tilde{j}), (\tilde{t},\tilde{j}+1) \in \dom\phi$ such that $\phi(\tilde{t},\tilde{j}+1) \in C$.
    
        \blue{Now, according to item 3d)}, when $\phi(\tilde{t},\tilde{j}+1) \in C$, $\phi(\tilde{t},\tilde{j}+1) \in S_1$. Namely, the solution $\phi$ jumps to the set $S_1 \subset \mathcal{A}$ at $(\tilde{t}, \tilde{j})$, which implies that $v(\phi(\tilde{t},\tilde{j}+1)) < r_1$.
Now, we show that the solution $\phi$, which jumps to $S_1$ at $(\tilde{t}, \tilde{j})$, satisfies $v(\phi(t,\tilde{j}+1)) < r_1$ for all $t$ such that $(t,\tilde{j}+1) \in \dom\phi$. Proceeding by contradiction, assume the existence of $t' > \tilde{t}$ such that $(t', \tilde{j}+1) \in \dom\phi$ (i.e., $t'$ is in the interval of flow) and $v(\phi(t', \tilde{j}+1)) \geq r_1$. Let $y$ be a solution to $\dot{y} = f_c(y)$ starting from $v(\phi(\tilde{t},\tilde{j}+1)) < r_1$.
	Under item 1b), with a locally Lipschitz function $f_c$, the unique solution $y$ has to remain in $(-\infty, r_1)$ once it reaches $(-\infty, r_1)$; otherwise, it contradicts uniqueness of solutions.
	Moreover, using the comparison lemma in Lemma \ref{lemcomp}, Lemma \ref{lemaux1}, and Lemma \ref{lemaux2}, we have $v(\phi(t,\tilde{j}+1)) \leq y(t)$ for all $(t,\tilde{j}+1) \in \dom\phi$. Hence, we conclude that $v(\phi(t, \tilde{j}+1)) < r_1$ for all $t > \tilde{t}$ such that $(t, \tilde{j}+1) \in \dom\phi$, which contradicts the existence of $t' > \tilde{t}$ such that $(t', \tilde{j}+1) \in \dom\phi$ and $v(\phi(t', \tilde{j}+1)) \geq r_1$. Therefore, we conclude that $v(\phi(t,\tilde{j}+1)) < r_1$ for all $t$ such that $(t,\tilde{j}+1) \in \dom\phi$; namely, $v(\phi(t,\tilde{j}+1)) < r_1$ for all $t$ in the interval of flow. As a result, we conclude that $w(\phi(t,j)) < r_2$ for all $(t,j) \in \dom\phi$ such that $j \geq \tilde{j}$ and $v(\phi(t,j)) < r_1$ for all $(t,j) \in \dom\phi$ such that $t \geq \tilde{t}$; and thus, with $(t^\star, j^\star) \in \dom\phi$ such that $j^\star \!\geq\! \tilde{j}$ and $t^\star \!\geq\! \tilde{t}$, we conclude that $\phi(t,j) \in S_1 \cup S_2 \!\subset\! \mathcal{A}$ for all $(t,j) \!\in\! \dom\phi$ such that $t + j \!\geq\! t^\star + j^\star$.
\qedhere
\end{proof}

\begin{lemma}
Consider a complete solution $\phi$ to $\mathcal{H}$.
The solution $\phi$ is genuinely Zeno\footnote{A solution $\phi$ to $\mathcal{H}$ is said to be genuinely Zeno if it is Zeno but not discrete.} if and only if the solution $\phi$ does not belong to the following categories: eventually continuous, eventually discrete, and with unbounded hybrid time domain in both the $t$ and the $j$ direction.
\end{lemma}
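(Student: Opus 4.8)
The plan is to reduce the claim to a clean trichotomy for complete solutions, organized by the finiteness of the two ``extents'' of the hybrid time domain. Given a complete solution $\phi$, set $T := \sup\{t : (t,j) \in \dom\phi\}$ and $J := \sup\{j : (t,j) \in \dom\phi\}$, both valued in $[0,\infty]$. The first step is the elementary observation that, because $\dom\phi$ is unbounded (by completeness) while every truncation $\dom\phi \cap ([0,\tau] \times \{0,\dots,k\})$ is bounded in $t+j$, the pair $T$, $J$ cannot both be finite; hence at least one of them equals $\infty$.

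The second step is to translate each named category into a statement about $T$ and $J$. I would show: $\phi$ is eventually continuous iff it has finitely many jumps, i.e.\ $J < \infty$ (and then completeness forces $T = \infty$, since on the terminal jump index the solution must flow over an unbounded interval of ordinary time; conversely, if the terminal flow interval were bounded then $\dom\phi$ would be bounded, contradicting completeness); $\phi$ has hybrid time domain unbounded in both $t$ and $j$ iff $T = J = \infty$, which is the definition; and $\phi$ is Zeno iff $T < \infty$ (dropping, via completeness, the redundant part of the definition), in which case necessarily $J = \infty$. Together with the ``not both finite'' observation, this yields that exactly one of $J < \infty$, $T = J = \infty$, $T < \infty$ holds, and that these are mutually exclusive. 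The lemma then follows in two short steps: if $\phi$ is genuinely Zeno, it is Zeno, so $T < \infty$, which excludes the first two categories (both need $T = \infty$), and ``not eventually discrete'' is built into the definition of genuinely Zeno; conversely, if $\phi$ lies in none of the three categories, then it is not in case $J < \infty$ nor case $T = J = \infty$, so by the trichotomy $T < \infty$, i.e.\ $\phi$ is Zeno, and being Zeno but not eventually discrete it is genuinely Zeno.

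I do not anticipate a substantial obstacle: the proof is bookkeeping with the definitions of complete, Zeno, eventually discrete, and eventually continuous, combined with the structure of hybrid time domains. The only delicate points are the equivalence for ``eventually continuous'' (relating ``finitely many jumps'' to $J < \infty$ and ruling out $T < \infty$ in that case), and checking that ``eventually discrete,'' as stated in the excerpt via $T < \infty$ together with $\dom\phi \cap (\{T\} \times \mathbb{N})$ containing at least two elements, really is the ``eventually only jumps'' behavior, so that it sits inside the Zeno case and is precisely the complement there of ``genuinely Zeno.'' These are short boundary-of-domain arguments and constitute the only non-mechanical part.
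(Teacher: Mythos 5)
Your proof is correct, and it takes a genuinely different route from the paper's. The paper argues descriptively: for the forward direction it unpacks ``genuinely Zeno'' as ``infinitely many jumps in finite ordinary time, always flowing again after finitely many jumps'' and derives a contradiction with each category; for the converse it runs through the three categories and shows each one excludes genuine Zenoness. That second half is logically the contrapositive of the forward implication (``in some category $\Rightarrow$ not genuinely Zeno''), so the paper's argument leans implicitly on the fact that the four behaviors exhaust all complete solutions without ever proving it. Your proof supplies exactly that missing exhaustiveness: by reducing everything to the finiteness of $T = \sup_t \dom\phi$ and $J = \sup_j \dom\phi$, observing that completeness forbids both being finite, and matching each category to one cell of the resulting trichotomy ($J<\infty$ is eventually continuous, $T=J=\infty$ is doubly unbounded, $T<\infty$ is Zeno, the latter splitting into eventually discrete and genuinely Zeno), you get both directions of the equivalence as immediate corollaries of a mutually exclusive and exhaustive classification. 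The only point worth making fully explicit in your write-up is the step ``complete and $J<\infty$ implies the terminal flow interval is unbounded, hence eventually continuous,'' which you already flag as the one non-mechanical boundary argument; with that spelled out, your version is tighter and more self-contained than the paper's.
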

\begin{proof}
($\Rightarrow$)
Suppose that a complete solution $\phi$ to $\mathcal{H}$ is genuinely Zeno.
Note that, by definition of genuinely Zeno, the solution $\phi$ involves an infinite number of jumps in a finite amount of time, and it always flows after a finite number of jumps for a finite, nonzero amount of flow time.
\begin{itemize}[leftmargin=0.1in]
	\item First, we show that the solution $\phi$ is not eventually continuous. Proceeding by contradiction, assume that $\phi$ is eventually continuous. It follows that there exists $(t',j') \in \dom\phi$ such that $\phi(t,j) \in C$ for all $t \geq t'$, which contradicts the fact that the existence of an infinite number jumps after flowing. Hence, we conclude that $\phi$ is not eventually continuous.

	\item Next, we show that the solution $\phi$ is not eventually discrete. Assume that $\phi$ is eventually discrete. It follows that there exists $(t',j') \in \dom\phi$ such that $t' = \sup_t \dom\phi$ and $\phi(t',j) \in D$ for all $j \geq j'$, which contradicts the fact that $\phi$ always flows after a finite number of jumps. Hence, we conclude that $\phi$ is not eventually discrete.
	
	\item Finally, by definition of genuinely Zeno, the solution $\phi$ does not have a hybrid time domain that is unbounded in both the $t$ and the $j$ direction.
\end{itemize}
($\Leftarrow$)
We consider following cases:
\begin{itemize}[leftmargin=0.1in]
	\item First, we suppose that a complete solution $\phi$ to $\mathcal{H}$ is eventually continuous. It follows that for some $t' \in \mathbb{R}_{\geq 0}$, $\phi(t,j) \in C$ for all $(t,j) \in \dom\phi$ such that $t \geq t'$; and thus, we have $j' = \sup_j \dom\phi$ such that $(t',j') \in \dom\phi$. Hence, the solution $\phi$ does not jump after $(t',j')$. However, since Zeno solutions have infinitely many jumps in a finite time interval, we conclude that the solution $\phi$ is not a genuinely Zeno solution.
	\item Next, we suppose that a complete solution $\phi$ to $\mathcal{H}$ is eventually discrete. It follows that, for some $j' \in \mathbb{N}$, $\phi(t,j) \in D$ for all $(t,j) \in \dom\phi$ such that $j \geq j'$; and thus, we have $t' = \sup_t \dom\phi$ such that $(t',j') \in \dom\phi$. Hence, the solution $\phi$ do not flow after $(t',j')$. However, since genuinely Zeno solutions always flow after a finite number of jumps, we conclude that the solution $\phi$ is not a genuinely Zeno solution.
	\item Finally, we suppose that a complete solution $\phi$ to $\mathcal{H}$ has a hybrid time domain that is unbounded in both the $t$ and the $j$ direction; i.e., $\sup_t \dom\phi = \infty$ and $\sup_j \dom\phi = \infty$. However, a genuinely Zeno solution exhibits an infinite number of jumps in finite time interval; i.e., $\sup_t \dom\phi < \infty$. Hence, the solution $\phi$ is not a genuinely Zeno solution. \qedhere
\end{itemize}
\end{proof}

The following example illustrates the importance of item 3d) in Theorem \ref{thm:pre_eventual_ci}.

\begin{example}
Consider a hybrid system $\mathcal{H} \!=\! (C,F,D,G)$ with the state $x \!\in\! \mathbb{R}$ and the data
\begin{equation*}
\begin{array}{ll}
	F(x) \!:=\! 1
	& \forall x \!\in\! C \!:=\! \mathbb{R}_{\geq 0}\\
	G(x) \!:=\!
	\left\{\!\!
	\begin{array}{ll}
		g(x) & \mbox{if } x \!\in\! \{u_n\}_{n=1}^{\infty}\\
		x & \mbox{otherwise}
	\end{array}
	\right.
	& \forall x \!\in\! D \!:=\! \{u_n\}_{n=1}^{\infty} \!\cup\! \{\tfrac{3}{2}\} \mbox{,}
\end{array}
\end{equation*}
where $g$ is a nondecreasing function such that $g(x) \geq x$ for all $x \in \mathbb{R}$ and $\{u_n\}_{n=1}^{\infty}$ is a strictly increasing sequence such that $u_1 \!=\! 0$, $\lim_{n \rightarrow \infty} u_n \!=\! \tfrac{3}{2}$, and $u_{n+1} = g(u_n)$.
Consider the sets $\mathcal{O}$ and $\mathcal{A}$ given by $\mathcal{O} \!=\! \{0\}$ and $\mathcal{A} \!=\! \{x \!\in\! C : x \!\geq\! 2\} \!\cup\! \{x \!\in\! D : x \!\geq\! 1\}$.
Consider the functions $v$ and $w$ defined as $v(x) = -x + 4$ and $w(x) = -x + 2$ for each $x \in \mathbb{R}$.
Also, define the function $f_c$ as $f_c(y) = -1$ for each $y \in \mathbb{R}$.
For all $x \in C$, $\left< \nabla v(x), F(x) \right> = -1 = f_c(v(x))$; and for all $x \in D$, $v(G(x)) \leq v(x)$ since $v(g(x)) \leq v(x)$. Thus, item 1a) in Theorem \ref{thm:pre_eventual_ci} holds. Furthermore, for $r_1 = 2$, the solutions to $\dot{y} = f_c(y)$ starting from $v(\mathcal{O}) = \{0\}$ converge to $(-\infty, r_1)$ in finite time (in fact, in zero time) and $S_1 \!=\! \{x \!\in\! C: x \!>\! 2\} \subset \mathcal{A}$; hence, item 1b) holds.
Now, to show item 2b) holds, define the function $f_d$ as $f_d(z) = g(z)$ if $z \in w(D)$ and $f_d(z) = z$ otherwise.
For all $x \in C$, $\left< \nabla w(x), F(x) \right> = -1 \leq 0$; and
for all $x \in D$, $w(G(x)) \leq f_d(w(x))$ since $w(G(x)) \leq w(x) = -x+2$ and $f_d(w(x)) = g(-x+2) \geq -x+2$ since $g$ is such that $g(x) \geq x$. Hence, item 2a) holds.
Moreover, for $r_2 = 1$, the solutions to $z^+ = f_d(z)$ starting from $w(\mathcal{O}) = \{0\}$ reach $(-\infty, r_2)$ in finite time (in fact, in zero time) and $S_2 \!=\! \{x \!\in\! D: x \!>\! 1\} \subset \mathcal{A}$; thus, item 2b) holds.
Now, consider a maximal solution $\phi$ to $\mathcal{H}$ starting from $\mathcal{O} \!=\! \{0\}$.
The solution $\phi$ jumps at each element in $\{u_n\}_{n=1}^\infty$ and $\phi$ flows after each jump.
Hence, the solution exhibits infinitely many jumps.
Moreover, the maximal time of flow satisfies $\dom_t \phi = \sum_{j=0}^{\infty} (t_{j+1} - t_j) = \sum_{j=0}^{\infty} (u_{j+1} - u_j) < \infty$ since $\lim_{n \rightarrow \infty} u_n \!=\! \tfrac{3}{2}$.
Hence, $\phi$ is genuinely Zeno.
Finally, with $S_1$ and $S_2$ as defined above,
we have that each $x \in G(S_2) \cap C$ satisfies $x \!>\! 1$. This implies that $G(S_2) \cap C \not\subset S_1$; and thus, none of items 3a)-3d) in Theorem \ref{thm:pre_eventual_ci} hold.
In fact, $\mathcal{A}$ is not pre-ECI with respect to $\mathcal{O}$ for $\mathcal{H}$ since there exists a solution that flows out of $S_1$, which implies the solution does not stay in $\mathcal{A}$.
\end{example}

The following example illustrates Theorem \ref{thm:pre_eventual_ci}.
\begin{example} \label{exp:didact}
Consider a hybrid system $\mathcal{H} = (C,F,D,G)$ with the state $x = (x_1, x_2) \in \mathbb{R}^2$ and the data 
\begin{equation*}
\begin{array}{ll}
	\!\!\!\!F(x) \!:=\!\! \Big[\!\!
	\begin{array}{c}
		-x_1 \!-\! x_2 \\
		x_1 \!-\! x_2 \\
	\end{array}
	\!\!\Big]
	& \forall x \!\in\! C \!:=\! \{x \!\in\! \mathbb{R}^2 \!:\! x_1 \!\geq\! 0, x_1 \!\geq\! x_2\} \\
	\!\!\!\!G(x) \!:=\!\! \bigg[\!\!
	\begin{array}{c}
		-x_2/\sqrt{2}\\
		-x_2/\sqrt{2}
	\end{array}
	\!\!\bigg]
	& \forall x \!\in\! D \!:=\! \{x \!\in\! \mathbb{R}^2 \!:\! x_1 \!=\! 0, x_2 \leq 0\}\mbox{.}
\end{array}
\end{equation*}
Consider the sets $\mathcal{O}$ and $\mathcal{A}$ given by $\mathcal{O} \!=\! [0,1] \!\times\! (-\infty, -1]$ and 
$\mathcal{A} \!=\! \mathbb{R}_{\geq 0} \!\times\! [-1/2, +\infty)$. Next, to conclude that the set $\mathcal{A}$ is pre-ECI with respect to the set $\mathcal{O}$ for $\mathcal{H}$, we show that the conditions in Theorem \ref{thm:pre_eventual_ci} are satisfied.
Consider the functions $v(x) \!=\! |x|^2$ and $f_c(y) \!:=\! -2y$.
For all $x \!\in\! C$, $\left< \nabla v(x), F(x) \right> \!=\! -2(x_1^2+x_2^2) \!=\! f_c(v(x))$; and for all $x \!\in\! D$, $v(G(x)) \!=\! x_2^2 \!=\! v(x)$. Thus, item 1a) holds.
Furthermore, we notice that $v(\mathcal{O}) \!=\! [1, +\infty)$ and that, for $r_1 \!=\! 1/2$, \eqref{eq:S1} holds. Finally, for the system $\dot{y} \!=\! f_c(y) \!=\! - 2 y$, it is easy to see that the solutions starting from $v(\mathcal{O}) = [1, +\infty)$ reach the set $(-\infty, 1/2)$ in finite time; hence, item 1b) is satisfied. On the other hand, consider the functions $w(x) \!=\! - x_2$ and $f_d(z) \!:=\! z/2$ if $z \!\in\! w(D)$ and $f_d(z) \!=\! z$ otherwise.
For all $x \!\in\! C$, $\left< \nabla w(x), x_1 \!-\! x_2 \right> \!=\! x_2 \!-\! x_1 \!\leq\! 0$; and for all $x \!\in\! D$, $w(G(x)) \!=\! x_2/\sqrt{2} \!\leq\! f_d(-x_2) \!=\! -x_2/2$ since $x_2 \!\leq\! 0$.
Hence, we conclude that item 2a) holds.
Moreover, item 2b) holds for $r_2 \!=\! 1/2$ and the solutions to $z^+ \!=\! f_d(z)$ starting from $w(\mathcal{O}) \!=\! [1, +\infty)$ reach $(-\infty, 1/2)$.
Finally, for all $x \!\in\! G(S_2) \cap C \!=\! \{x \!\in\! C: x_1 \!<\! \tfrac{1}{2\sqrt{2}}\}$, $v(x) \!<\! 1/2$. Hence, $G(S_2) \cap C \!\subset\! S_1$, which implies that item 3d) holds.
\hfill $\triangle$
\end{example}

\begin{remark}
As illustrated in Example \ref{exp:didact}, once we propose the candidate functions
$v$ and $w$, we find the functions $f_c, f_d$ and the constants $r_1, r_2$ such that the conditions in Theorem \ref{thm:pre_eventual_ci} hold. That is, for a particular expression of the data of the hybrid system and the sets $\mathcal{O}$ and $\mathcal{A}$, 
we can automate the process of generating the functions and parameters satisfying the conditions in Theorem  \ref{thm:pre_eventual_ci} as in 
\cite{prajna2004safety, oehlerking2007fully}.
\end{remark}

Note that, it is possible to conclude pre-ECI of $\mathcal{A}$ with respect to $\mathcal{O}$ using only condition 1) (or only condition 2, respectively) in Theorem \ref{thm:pre_eventual_ci} provided that we have the knowledge that the solutions from $\mathcal{O}$ can reach the set $\mathcal{A}$ only via flowing (or only jumping, respectively), as shown in the following result.
Indeed, in many applications of hybrid systems, the state variable is composed of both continuous and discrete variables; see the thermostat model in Example \ref{exp:thermostat}. Furthermore, when the sets $\mathcal{O}$ and $\mathcal{A}$ are defined only in terms of the continuous state variables (respectively, only in terms of the discrete state variables), it is possible to conclude that the solutions from $\mathcal{O}$ reach the set $\mathcal{A}$ only by flowing (respectively, only by jumping).

\begin{proposition} [pre-ECI via flows]
Consider a hybrid system $\mathcal{H} = (C,F,D,G)$ \blue{such that Assumption  \ref{assump:SA} holds}, and sets $\mathcal{O} \subset C \cup D$ and $\mathcal{A} \subset \mathbb{R}^n$. The set $\mathcal{A}$ is \emph{pre-ECI} with respect to the set $\mathcal{O}$ for $\mathcal{H}$ if the following properties hold:
\begin{itemize}
	\item[1)] There exist a $\mathcal{C}^1$ function $v : \mathbb{R}^n \rightarrow \mathbb{R}$, a locally Lipschitz function $f_c : \mathbb{R} \rightarrow \mathbb{R}$, and a constant $r_1 > 0$ such that condition 1) in Theorem~\ref{thm:pre_eventual_ci} holds and,
	the set $S_1 := \{x \in C: v(x) < r_1\} \subset \mathcal{A}$.
	\item[2)] For each complete solution $\phi \!\in\! \mathcal{S_H}(\mathcal{O})$, \blue{there exists a solution $y$ to $\dot{y} \!=\! f_c(y)$ starting from $v(\phi(0,0))$ that satisfies $y(t) \!\in\! (-\infty, r_1)$ for all $t \geq t^\star$, for some nonnegative $t^\star \leq \sup \{ t : (t,j) \in \dom\phi \}$}.
	\item[3)] \blue{$G(\mathcal{A} \cap D) \cap (D \backslash C) \subset \mathcal{A}$}, or each solution from $\mathcal{O}$ is eventually continuous.
\end{itemize}
\label{prop:pre_ECI_f}
\end{proposition}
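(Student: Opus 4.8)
The plan is to reuse the comparison argument from the proof of Theorem~\ref{thm:pre_eventual_ci}, keeping only the flow comparison system $\dot y = f_c(y)$, and then to convert the resulting sublevel-set bound into membership in $\mathcal{A}$ via the inclusion $S_1 \subset \mathcal{A}$ together with whichever alternative of item 3) holds.

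First I would fix a complete solution $\phi \in \mathcal{S}_{\mathcal{H}}(\mathcal{O})$ and let $y$ solve $\dot y = f_c(y)$ with $y(0) = v(\phi(0,0)) \in v(\mathcal{O})$. From the flow inequality $\langle \nabla v(x), \eta \rangle \leq f_c(v(x))$ on $C$ and the jump inequality $v(\eta) \leq v(x)$ on $D$ (item 1a) of Theorem~\ref{thm:pre_eventual_ci}, which is assumed through condition 1) here), combined with the comparison lemma in \cite[Lemmas A.1--A.3]{han.arXiv21} exactly as in the proof of Theorem~\ref{thm:pre_eventual_ci}, I would obtain $v(\phi(t,j)) \leq y(t)$ for all $(t,j) \in \dom\phi$. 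By item 2) there is $t^\star \geq 0$, $t^\star \leq \sup\{t : (t,j)\in\dom\phi\}$, with $y(t) < r_1$ for all $t \geq t^\star$; hence $v(\phi(t,j)) < r_1$ for every $(t,j) \in \dom\phi$ with $t \geq t^\star$, and completeness of $\phi$ ensures such hybrid times lie in $\dom\phi$.

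Next I would turn this into $\phi(t,j) \in \mathcal{A}$ using item 3). On any flow interval $I^j := \{t : (t,j)\in\dom\phi\}$ with nonempty interior, $\phi(t,j) \in C$ for all $t \in I^j$ (using closedness of $C$ from \ref{assump:SA}), so $\phi(t,j) \in S_1 \subset \mathcal{A}$ whenever $t \geq t^\star$. In the case that every solution from $\mathcal{O}$ is eventually continuous, let $j^\star := \sup_j \dom\phi < \infty$; then $\sup_t\dom\phi = \infty$ and $\phi(\cdot,j^\star)$ flows in $C$, so after replacing $t^\star$ by a larger value with $(t^\star,j^\star)\in\dom\phi$, any $(t,j)\in\dom\phi$ with $t+j\geq t^\star+j^\star$ must satisfy $j=j^\star$ and $t\geq t^\star$, giving $\phi(t,j)\in S_1\subset\mathcal{A}$. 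In the case $G(\mathcal{A})\cap D\subset\mathcal{A}$, I would take $(t^\star,j^\star)\in\dom\phi$ to be a hybrid time, with $t^\star$ at least the flow time supplied by item 2), at which $\phi(t^\star,j^\star)\in S_1$ --- reached inside (or at an endpoint of) a flow interval --- and then induct forward along $\dom\phi$: every later flow point lies in $C$ with $v<r_1$, hence in $S_1\subset\mathcal{A}$; every later jump image lies in $S_1\subset\mathcal{A}$ if it lies in $C$, and otherwise lies in $D$, hence in $G(\mathcal{A})\cap D\subset\mathcal{A}$ by the inductive hypothesis. Either way, $\phi(t,j)\in\mathcal{A}$ for all $(t,j)\in\dom\phi$ with $t+j\geq t^\star+j^\star$, which is exactly pre-ECI of $\mathcal{A}$ with respect to $\mathcal{O}$.

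I expect the delicate point to be the jump propagation in the second case: once $\phi$ reaches $S_1\subset\mathcal{A}$, a jump can carry it to a point of $D\setminus C$ where $S_1$ no longer certifies membership in $\mathcal{A}$, so one genuinely needs $G(\mathcal{A})\cap D\subset\mathcal{A}$, and one must check that every point of $\dom\phi$ beyond $(t^\star,j^\star)$ is reached from an earlier point already known to lie in $\mathcal{A}$. The accompanying technicality is the borderline situation where $t^\star$ equals $\sup_t\dom\phi$ for a genuinely Zeno or eventually discrete solution that does not visit $C$ for $t\geq t^\star$; there one has to verify either that the standing hypotheses push the solution into $\mathcal{A}$ already at the final flow interval, or that such a configuration is ruled out. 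Apart from these points, the argument is a routine specialization of the comparison-lemma bookkeeping already carried out for Theorem~\ref{thm:pre_eventual_ci}.
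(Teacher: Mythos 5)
Your argument is correct and follows essentially the same route as the paper's proof: the comparison lemma applied to the flow dynamics gives $v(\phi(t,j)) \leq y(t)$, item 2) then forces $v(\phi(t,j)) < r_1$ for all $(t,j)$ with $t \geq t^\star$, and membership in $\mathcal{A}$ is propagated forward using $S_1 \subset \mathcal{A}$ on flow intervals together with item 3) across jumps. The borderline case you flag --- a solution that after $t^\star$ never again visits $C$, so that no anchor point in $S_1$ is ever produced --- is real but is not resolved by the paper's own proof either, which simply asserts the conclusion ``provided that $\phi(t,j) \in C$ and that $\phi$ reaches $\mathcal{A}$ as it flows after $t^\star$''; your explicit forward induction is, if anything, the more careful of the two.
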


\begin{proof}
According to the definition of pre-ECI, we need to show that, for each complete solution $\phi$ to $\mathcal{H}$ starting from $\mathcal{O}$, there exists $(t^\star,j^\star) \in \dom \phi$ such that $\phi(t,j) \in \mathcal{A}$ for all $(t,j) \in \dom \phi$ such that $t + j \geq t^\star + j^\star$. 
Consider a complete solution $\phi$ to $\mathcal{H}$ starting from $\phi(0,0) \in \mathcal{O}$. Let $y$ be a maximal solution to $\dot{y} = f_c(y)$ starting from 
 $y(0) = v(\phi(0,0)) \in v(\mathcal{O})$.
First, if the solution $\phi$ flows initially, using the comparison lemma in Lemma \ref{lemcomp}, Lemma \ref{lemaux1}, and Lemma \ref{lemaux2} under item 1a) in \blue{Theorem \ref{thm:pre_eventual_ci}}, we conclude that $v(\phi(t,0)) \leq y(t)$ for all $t \in I^0$,
where $I^0 := \{ t \in \mathbb{R}_{\geq 0} : (t,0) \in \dom \phi \} \mbox{.}$
Furthermore, if the solution $\phi$ jumps initially, we conclude using item 1a) in \blue{Theorem \ref{thm:pre_eventual_ci}} that $v(\phi(0,1)) \leq y(0)$. By extending the latter reasoning along the domain of $\phi$, we conclude that 
$v(\phi(t,j)) \leq y(t)$ for all $(t,j) \in \dom \phi$. 
On the other hand, by item 2), there exists $t^\star \in \mathbb{R}_{\geq 0}$ such that
$t^\star \leq \sup \{t: (t,j) \in \dom\phi \}$ and
$y(t) \in (-\infty, r_1)$ for all $t \geq t^\star$. This fact implies that 
$(\{ t^\star \} \times \mathbb{N}) \cap \dom\phi \neq \emptyset$; hence,
$v(\phi(t,j)) < r_1$ for all $t \geq t^\star$ such that $(t,j) \in \dom\phi$
and $S_1$ is nonempty.
As a consequence,
using the fact that $S_1$ is a subset of $\mathcal{A}$,
we conclude that, for all $(t,j) \in \dom\phi$ such that $t \geq t^\star$, we have $\phi(t,j) \in \mathcal{A}$ provided that $\phi(t,j) \in C$ and that $\phi$ reaches $\mathcal{A}$ as it flows after $t^\star$.
Next, under item 3), if $\phi$ is eventually continuous, then $\phi$ remains in $\mathcal{A}$ by flowing; however,
once $\phi$ reaches a point $x \in D \cap \mathcal{A}$, it jumps. However, according to item 3), it remains in the set $\mathcal{A}$ after the jump, which completes the proof.
\end{proof}

\begin{proposition} [pre-ECI via jumps]
Consider a hybrid system $\mathcal{H} \!=\! (C,F,D,G)$ \blue{such that Assumption  \ref{assump:SA} holds} and sets $\mathcal{O} \subset C \cup D$ and $\mathcal{A} \!\subset\! \mathbb{R}^n$. The set $\mathcal{A}$ is \emph{pre-ECI} with respect to the set $\mathcal{O}$ for $\mathcal{H}$ if the following properties hold:
\begin{itemize}
	\item[1)] There exist a $\mathcal{C}^1$ function 
	$w : \mathbb{R}^n \rightarrow \mathbb{R}$, a nondecreasing function $f_d : \mathbb{R} \rightarrow \mathbb{R}$, and a constant $r_2 > 0$ such that condition 2) in Theorem~\ref{thm:pre_eventual_ci} holds and,
	the set $\widetilde{S}_2 := \{x \in C \cup D : w(x) < r_2 \} \subset \mathcal{A}$.
	\item[2)] For each complete solution $\phi \in \mathcal{S_H}(\mathcal{O})$, \blue{there exists a solution $z$ to $z^+ = f_d (z)$ starting from $w(\phi(0,0))$ that satisfies $z(j) \!\in\! (-\infty, r_2)$ for all $j \geq j^\star$, for some nonnegative $j^\star \leq \sup \{ j : (t,j) \in \dom\phi \}$}.
\end{itemize}
\label{prop:pre_ECI_j}
\end{proposition}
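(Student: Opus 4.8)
The plan is to reproduce the ``jump'' part of the proof of Theorem~\ref{thm:pre_eventual_ci}, which here becomes substantially shorter: since $\widetilde{S}_2$ is defined on all of $C \cup D$ (rather than on $D$ only, as $S_2$ in Theorem~\ref{thm:pre_eventual_ci}), no analogue of items 3a)--3d) is needed. I would fix a complete solution $\phi \in \mathcal{S}_{\mathcal{H}}(\mathcal{O})$ and let $z$ be the solution to $z^+ = f_d(z)$ with $z(0) = w(\phi(0,0)) \in w(\mathcal{O})$, which is defined for all $j \in \mathbb{N}$ since $f_d : \mathbb{R} \to \mathbb{R}$. The first step is the comparison bound $w(\phi(t,j)) \leq z(j)$ for all $(t,j) \in \dom\phi$, established exactly as in the proof of Theorem~\ref{thm:pre_eventual_ci}: the flow inequality in condition 2a) of Theorem~\ref{thm:pre_eventual_ci} (which is part of hypothesis 1) here) makes $t \mapsto w(\phi(t,j))$ nonincreasing on every flow interval, so $w$ does not grow between jumps; at a jump, $w(\phi(t,j+1)) \leq f_d(w(\phi(t,j)))$, and since $f_d$ is nondecreasing this propagates the bound from $z(j)$ to $z(j+1) = f_d(z(j))$. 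Formally this is the comparison lemma already invoked for Theorem~\ref{thm:pre_eventual_ci}, specialized to the discrete comparison system.

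Next, by hypothesis 2) there is $j^\star \in \mathbb{N}$ with $j^\star \leq \sup\{ j : (t,j) \in \dom\phi \}$ and $z(j) < r_2$ for all $j \geq j^\star$; since the jump indices occurring in $\dom\phi$ form an initial segment of $\mathbb{N}$, some $t^\star$ satisfies $(t^\star, j^\star) \in \dom\phi$. Combining this with the comparison bound gives $w(\phi(t,j)) \leq z(j) < r_2$ for every $(t,j) \in \dom\phi$ with $j \geq j^\star$; since $\phi$ is complete and $C$ is closed (by the standing assumption (SA)), $\phi(t,j) \in C \cup D$ for every $(t,j) \in \dom\phi$, so $\phi(t,j) \in \widetilde{S}_2 \subset \mathcal{A}$ for all such $(t,j)$. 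The last step upgrades ``$j \geq j^\star$'' to ``$t+j \geq t^\star + j^\star$'' via the structure of hybrid time domains: writing $\dom\phi = \bigcup_k \bigl([t_k, t_{k+1}] \times \{k\}\bigr)$, if $(t,j) \in \dom\phi$ satisfies $j < j^\star$ then $t \leq t_{j+1} \leq t_{j^\star} \leq t^\star$, hence $t+j < t^\star + j^\star$; contrapositively, $t+j \geq t^\star + j^\star$ forces $j \geq j^\star$. Therefore $\phi(t,j) \in \mathcal{A}$ for all $(t,j) \in \dom\phi$ with $t+j \geq t^\star + j^\star$, which is precisely pre-ECI of $\mathcal{A}$ with respect to $\mathcal{O}$ for $\mathcal{H}$.

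The only points requiring any care are the comparison estimate across an arbitrary, possibly Zeno, hybrid time domain and the membership $\phi(t,j) \in C \cup D$; both are routine given (SA) and the comparison lemmas used for Theorem~\ref{thm:pre_eventual_ci}. In effect this proposition is the ``dual'' of Proposition~\ref{prop:pre_ECI_f}, and it is lighter than Theorem~\ref{thm:pre_eventual_ci} precisely because $\widetilde{S}_2$ is closed under both flows (where $w$ does not increase) and jumps (where the bound $z(j)<r_2$ persists for $j \ge j^\star$), so the solution cannot leave $\widetilde{S}_2$ once it has entered.
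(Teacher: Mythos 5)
Your proof is correct and follows essentially the same route as the paper's: establish the discrete comparison bound $w(\phi(t,j)) \leq z(j)$ over the whole hybrid time domain by combining the flow monotonicity of $w$ with the jump inequality and the nondecrease of $f_d$, then invoke hypothesis 2) and $\widetilde{S}_2 \subset \mathcal{A}$. You are in fact slightly more explicit than the paper on two points it leaves implicit, namely that every point of a complete solution lies in $C \cup D$ and that the conclusion for $j \geq j^\star$ upgrades to the required form $t+j \geq t^\star + j^\star$ via the monotone structure of hybrid time domains.
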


\begin{proof}
According to the definition of pre-ECI, we need to show that, for each complete solution $\phi$ to $\mathcal{H}$ starting from $\mathcal{O}$, there exists $(t^\star,j^\star) \in \dom \phi$ such that $\phi(t,j) \in \mathcal{A}$ for all $(t,j) \in \dom \phi$ such that $t + j \geq t^\star + j^\star$. 
Consider such a complete solution 
$\phi$ starting from $\phi(0,0) \in \mathcal{O}$. Let $z$ be a maximal solution to $z^+ = f_d(z)$ starting from 
 $z(0) = w(\phi(0,0)) \in w(\mathcal{O})$. First, if the solution $\phi$ jumps initially, then we conclude that
$ w(\phi(0,1)) \leq z(1)\mbox{.}$
Furthermore, if the solution $\phi$ initially flows, we conclude using item 2a) in Theorem \ref{thm:pre_eventual_ci} that
$ w(\phi(t,0)) \leq z(0)$ for all $t \in I^0 \mbox{,} $
where
$ I^0 := \{ t \in \mathbb{R}_{\geq 0} : (t,0) \in \dom \phi \} \mbox{.} $ By extending the latter reasoning along the domain of $\phi$ and using the fact that $f_d$ is nondecreasing, we conclude that $w(\phi(t,j)) \leq z(j)$ for all $(t,j) \in \dom \phi$. On the other hand, by item 2), there exists $j^\star \in \mathbb{N}$ such that $j^\star \leq \sup\{j: (t,j) \in \dom\phi\}$ and $z(j) \in (-\infty, r_2) $ for all $j \geq j^\star\mbox{.}$
This fact implies the existence of $(t^\star,j^\star) \in (\mathbb{R}_{\geq 0} \times \{ j^\star \} ) \cap \dom\phi \neq \emptyset$ such that $w(\phi(t^\star,j^\star)) < r_2$. Hence, $\phi(t^\star,j^\star) \in \widetilde{S}_2$ and $w(\phi(t,j)) < r_2$ for all $(t,j) \in \dom \phi$ with $j \geq j^\star$
and $\widetilde{S}_2$ is nonempty.
As a consequence,
using the fact that $\widetilde{S}_2$ is a subset of $\mathcal{A}$,
we conclude that, for all $(t,j) \in \dom\phi$ such that $j \geq j^\star$, it follows that $\phi(t,j) \in \mathcal{A}$, which completes the proof.
\end{proof}

\begin{remark}
Note that condition 2) in Proposition~\ref{prop:pre_ECI_f} holds for free for complete solutions starting from $\mathcal{O}$; for which, the domain is unbounded in the $t$ direction. Similarly, condition 2) in Proposition~\ref{prop:pre_ECI_j} holds for free for complete solutions starting from $\mathcal{O}$ for which the domain is unbounded in the $j$ direction. Moreover, maximal solutions are complete when the conditions in \cite[Proposition 2.10 or Proposition 6.10]{goebel2012hybrid} hold.
\end{remark}

\begin{remark}
In Theorem \ref{thm:pre_eventual_ci}, one could consider unifying conditions 1) and 2) as follows:
\begin{equation}
\begin{array}{ll}
    \!\!\left< \nabla v(x), \eta \right> \!\leq\! f_c(v(x)) & \forall \eta \!\in\! F(x) \cap T_C(x), \forall x \in C \mbox{,}\\
    \!\!v(\eta) \!\leq\! f_d(v(x)) & \forall \eta \!\in\! G(x), \forall x \in D \mbox{,}
\end{array}
\label{eqn:v_hyrid_evolution}
\end{equation}
where the functions $f_c$ and $f_d$ are defined in Theorem \ref{thm:pre_eventual_ci}. Furthermore, one may attempt to conclude pre-ECI of $\mathcal{A}$ with respect to $\mathcal{O}$ by showing that the set $(-\infty, r_1]$ is pre-ECI with respect to $v(\mathcal{O})$ for the reduced system given by 
\begin{equation}
    \dot{y} = f_c(y) \quad y \in v(C)\mbox{,} \quad y^+ = f_d(y) \quad y \in v(D)\mbox{.}
\label{eqn:abstracted_H}
\end{equation}
Such a comparison-based reasoning is very useful to analyze purely continuous-time or purely discrete-time systems. In general, a key step for such a reasoning to hold consists in showing that \eqref{eqn:v_hyrid_evolution} and \eqref{eqn:abstracted_H} imply that
$v(\phi(t,j)) \leq y(t,j)$ for all $(t,j) \in \dom \phi$.
However, this bound does not necessarily hold under \eqref{eqn:v_hyrid_evolution} and \eqref{eqn:abstracted_H} due to the possible mismatch in the jump times between the solutions $\phi$ to 
$\mathcal{H}$ and $y$ to \eqref{eqn:abstracted_H}.
However, the said bound holds if we replace the inequalities in \eqref{eqn:v_hyrid_evolution} by equalities, which is restrictive. As a consequence, the comparison arguments, in general, do not extend directly to the context of hybrid systems.  
\label{remark:pre_eventual_ci}
\end{remark}

\subsection{Sufficient Conditions for 
pre-ECI using Approximate Flow Lengths}

In this section, along the lines of Remark \ref{remark:pre_eventual_ci},
we relax some of the requirements in Theorem 4.4 when the jump times of the solutions are known approximately.
For this purpose, inspired by the work in \cite{182} for hybrid observers, we propose a new set of sufficient conditions for pre-ECI.
Given sets $\mathcal{O}, \mathcal{A} \!\subset\! \mathbb{R}^n$, we assume the existence of a set $K \!\subset\! \mathcal{A}$ that is forward pre-invariant for $\mathcal{H}$. Furthermore, we assume that we know approximately the length of the flow interval, between each successive jumps, for all the solutions starting from $\mathcal{O}$ until they reach the set $K$. Below, we use $\dom_{t} \,\phi$ (respectively, $\dom_{j} \,\phi$) to denote the projection of $\dom \phi$ on $\mathbb{R}_{\geq 0}$ (respectively, on $\mathbb{N}$), and we denote $T(\phi) \!=\! \sup\dom_{t} \,\phi$ and $J(\phi) \!=\! \sup\dom_{j} \,\phi$. By $t_j(\phi)$, when $j \in \mathbb{N}_{>0}$, we denote the time stamp associated with the jump $j$ uniquely characterized by
$(t_j(\phi), j-1) \in \dom\phi$ and  $(t_j(\phi), j) \in \dom\phi$. \blue{Moreover, when $j = 0$, $t_0 = 0$ is the initial flow time.}

\begin{definition}[Approximate flow lengths] \label{def:flow_length}
A closed set $\mathcal{I}_{\mathcal{O}, K} \subset \mathbb{R}_{\geq 0}$ is said to be the set of approximate flow lengths for the solutions to $\mathcal{H}$ starting from the set $\mathcal{O}$ and remaining in $\mathbb{R}^n \backslash K$ if, for each such a solution, denoted $\phi$, we have 
\begin{subequations}
\label{eqn:flow_lengths}
\begin{align}
	&\label{eqn:length_1}0 \leq t - t_j(\phi) \leq \sup\mathcal{I}_{\mathcal{O},K}  \quad\forall (t,j) \in \dom\phi\mbox{,}\\
	&\label{eqn:length_2}
	t_{j+1}(\phi) - t_j(\phi) \in \mathcal{I}_{\mathcal{O},K}  \quad\forall j \in \mathbb{N}_{> 0} ~\mbox{if } ~ J(\phi) = +\infty\mbox{,}\\
	&\qquad\qquad\:\: \mbox{or} ~ \blue{\forall j \in \{0,1, \ldots, J(\phi) - 1\} ~\mbox{if} ~ J(\phi) < +\infty}\mbox{.}\nonumber
\end{align}
\end{subequations}
\end{definition}
The set $\mathcal{I}_{\mathcal{O},K} \subset \mathbb{R}_{\geq 0}$ contains the possible lengths of the flow intervals between successive jumps for the solutions starting from $\mathcal{O}$ and remaining in $\mathbb{R}^n \backslash K$.
The role of \eqref{eqn:length_1} is to bound the length of the intervals of flow which are not covered by \eqref{eqn:length_2}, namely possibly the first $[0, t_1(\phi)]$ and the last $\dom_{t}\,\phi \cap [t_{J(\phi)}(\phi), +\infty]$ (when they are defined).
The existence of a set $\mathcal{I}_{\mathcal{O},K} \subset \mathbb{R}_{\geq 0}$ is not a problem, even when $K = \emptyset$, since it can always be given by $\mathcal{I}_{\mathcal{O},K} = \mathbb{R}_{\geq 0}$. However, it has advantage when $\mathcal{I}_{\mathcal{O},K} \subset \mathbb{R}_{\geq 0}$ is selected as tight as possible, namely to have as much information about the duration of flow between successive jumps as possible so that we reduce the number of possible solutions.

\begin{theorem}
[pre-ECI under approximate flow lengths]
\label{thm:pre_ECI_interval}
Given a hybrid system $\mathcal{H} = (C,F,D,G)$ \blue{such that Assumption  \ref{assump:SA} holds}, and sets $\mathcal{O} \subset C \cup D$, $\mathcal{A} \subset \mathbb{R}^n$, and $K \subset \mathcal{A}$ such that $K$ is forward pre-invariant for $\mathcal{H}$,
let the set $\mathcal{I}_{\mathcal{O},K}$ be as in Definition \eqref{def:flow_length} and let $\tau_M := \sup \mathcal{I}_{\mathcal{O},K}$. Then, the set $\mathcal{A}$ is \emph{pre-ECI} with respect to the set $\mathcal{O}$ for $\mathcal{H}$ if
\begin{itemize}[leftmargin=.2in]
	\item[1)] There exist a $\mathcal{C}^1$ function $v : \mathbb{R}^n \rightarrow \mathbb{R}$, a locally Lipschitz function $f_c : \mathbb{R} \rightarrow \mathbb{R}$, and a nondecreasing function $f_d : \mathbb{R} \rightarrow \mathbb{R}$ such that
	\begin{subequations}
	\begin{align}
		\!\!\!\!\!\!\!\!\!\!\!\left< \nabla v(x), \eta \right> \leq f_c(v(x)) & ~~\forall x \!\in\! C \backslash K, \forall \eta \!\in\! F(x) \!\cap\! T_C(x)\mbox{,}\label{eqn:known_interval_v1}\\
		\!\!\!\!\!\!\!\!\!\!\!v(\eta) \leq f_d(v(x)) & ~~\forall x \in D \backslash K,\forall \eta \in G(x)\mbox{;}
 \label{eqn:known_interval_v2}
	\end{align}
	\end{subequations}
	\item[2)] There exists a constant $r > 0$ such that 
	\begin{equation} \label{eqS}
	S := \{ x \in C \cup D : v(x) < r \} \subset \mathcal{A}\mbox{;}
	\end{equation}
	
	\item[3)] Every \blue{maximal} solution to the reduced hybrid system $\mathcal{H}_{r}$ starting from $v(\mathcal{O}) \times \{ 0 \}$ converges to $(-\infty, r) \times \mathbb{R}_{\geq 0}$ in finite hybrid time, where  
\begin{equation}
\!\!\!\!\!\!\!
\mathcal{H}_r :
\left\{ \!\!\!
\begin{array}{cll}
	\left[
	\begin{array}{c}	
 		\dot{y}\\
 		\dot{\tau}
 	\end{array}
 	\right]
 	\!\!\!\!\! & =
 	\left[
 	\begin{array}{c}
 		f_c(y)\\
 		1
	\end{array}
	\right]
	&  \blue{(y, \tau) \in  \mathbb{R} \times  [0, \tau_M]} \mbox{,}\\
	
	\left[
	\begin{array}{c}
		y^+  \\
		\tau^+ 
	\end{array}
	\right]
	\!\!\! & =
	\left[\begin{array}{c}
		f_d(y)
		\\
		0
	\end{array} \right]
	& (y, \tau) \in \mathbb{R} \times
\mathcal{I}_{\mathcal{O},K},
\end{array}	
\right.
\label{eqn:reduced_H}
\end{equation}
with $f_c$ and $f_d$ coming from item 1).
\end{itemize}
\end{theorem}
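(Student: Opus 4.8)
The plan is to turn the comparison heuristic of Remark~\ref{remark:pre_eventual_ci} into a rigorous argument; the role of the extra timer $\tau$ in \eqref{eqn:reduced_H} is exactly to remove the jump-time mismatch that obstructs a direct comparison between $\mathcal{H}$ and the scalar system $\dot y = f_c(y)$, $y^+ = f_d(y)$.

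\emph{Setup and the easy case.} Let $\phi \in \mathcal{S}_{\mathcal{H}}(\mathcal{O})$ be complete. If $\phi$ ever reaches $K$, then, since $K$ is forward pre-invariant for $\mathcal{H}$ and $K \subset \mathcal{A}$, the solution remains in $\mathcal{A}$ from that hybrid time on, so the pre-ECI conclusion holds for $\phi$. Assume therefore $\rge\phi \subset \mathbb{R}^n \setminus K$. Then \eqref{eqn:known_interval_v1}--\eqref{eqn:known_interval_v2} hold all along $\phi$, and $\phi$ is one of the solutions for which the bounds \eqref{eqn:flow_lengths} hold. Completeness of $\phi$ together with \eqref{eqn:length_1} forces $J(\phi)=+\infty$ (a finite $J(\phi)$ would leave the last interval of flow unbounded, contradicting $t - t_{J(\phi)}(\phi)\le\tau_M$); hence $\phi$ jumps infinitely often and, among its flow intervals, only the first one $[0,t_1(\phi)]$ is not governed by \eqref{eqn:length_2}, while it still has length at most $\tau_M$.

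\emph{Comparison with $\mathcal{H}_r$.} Build a solution $(y,\tau)$ to $\mathcal{H}_r$ with $(y(0,0),\tau(0,0))=(v(\phi(0,0)),0)\in v(\mathcal{O})\times\{0\}$ tracking $v$ along $\phi$: on each flow interval of $\phi$ let $\tau$ record the ordinary time elapsed since the last jump of $\phi$ and let $y$ solve $\dot y=f_c(y)$, and let $(y,\tau)$ jump whenever $\phi$ does. This respects \eqref{eqn:reduced_H}, since $\tau\in[0,\tau_M]$ throughout the flows by \eqref{eqn:length_1} and, at each jump past the first, the value of $\tau$ equals the length of the preceding flow interval, which lies in $\mathcal{I}_{\mathcal{O},K}$ by \eqref{eqn:length_2}. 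The comparison lemma \cite[Lemmas~A.1--A.3]{han.arXiv21} applied to \eqref{eqn:known_interval_v1} on the flows, together with the monotonicity of $f_d$ and \eqref{eqn:known_interval_v2} at the jumps, gives by induction over $\dom\phi$ that $v(\phi(t,j))\le y(t,j)$ for all $(t,j)\in\dom\phi$ (a finite escape time of $y$ is excluded, since item~3) requires solutions of $\mathcal{H}_r$ from $v(\mathcal{O})\times\{0\}$ to converge to $(-\infty,r)\times\mathbb{R}_{\ge 0}$).

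\emph{Conclusion and main obstacle.} By item~3), $(y,\tau)$ eventually satisfies $y<r$; carrying this over through $v(\phi(t,j))\le y(t,j)$ along the corresponding hybrid times produces $(t^\star,j^\star)\in\dom\phi$ with $v(\phi(t,j))<r$, hence $\phi(t,j)\in S\subset\mathcal{A}$ by \eqref{eqS}, for all $(t,j)\in\dom\phi$ with $t+j\ge t^\star+j^\star$; this is precisely pre-ECI of $\mathcal{A}$ with respect to $\mathcal{O}$ for $\mathcal{H}$. The step demanding the most care is making the tracking solution $(y,\tau)$ fully rigorous while keeping its hybrid time domain compatible with $\dom\phi$: because \eqref{eqn:length_2} does not force the first flow interval $[0,t_1(\phi)]$ into $\mathcal{I}_{\mathcal{O},K}$, one cannot in general let $(y,\tau)$ jump exactly at $\phi$'s first jump, so that initial segment must be handled separately, using that $[0,t_1(\phi)]$ is still a legitimate interval of flow for $\mathcal{H}_r$ via \eqref{eqn:length_1} and that $\tau_M=\sup\mathcal{I}_{\mathcal{O},K}\in\mathcal{I}_{\mathcal{O},K}$, all while preserving $v\circ\phi\le y$. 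Additional bookkeeping is needed to treat uniformly the distinct domain types of $\phi$ (genuinely Zeno, or unbounded in both $t$ and $j$).
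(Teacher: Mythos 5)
Your proposal matches the paper's proof essentially step for step: reduce to complete solutions that never reach $K$ (the rest are absorbed by forward pre-invariance of $K\subset\mathcal{A}$), construct a solution $(y,\tau)$ of $\mathcal{H}_r$ with $\dom(y,\tau)=\dom\phi$ by letting $\tau$ track the elapsed flow time of $\phi$, apply the comparison lemmas flow-by-flow and jump-by-jump to obtain $v(\phi(t,j))\le y(t,j)$, and conclude via items 2)--3). One small caveat: your parenthetical claim that completeness forces $J(\phi)=+\infty$ implicitly assumes $\tau_M<\infty$, which is not part of the hypotheses; it is also not needed, since a terminal (possibly unbounded) flow interval is admissible for $\mathcal{H}_r$ and is handled exactly like the initial one you already treat separately.
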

\begin{proof}
According to the definition of pre-ECI, we need to show that for each complete solution $\phi$ to $\mathcal{H}$ starting from $\mathcal{O}$,
there exists a hybrid time $(t^\star,j^\star) \!\in\! \dom\phi$ such that
$\phi(t,j) \in \mathcal{A}$ for all $(t,j) \!\in\! \dom\phi$ such that $t + j \!\geq\! t^\star + j^\star$. Without loss of generality, consider a complete solution $\phi$ to $\mathcal{H}$ starting from $\phi(0,0) \!\in\! \mathcal{O}$ and remaining in the complement of $K$. Let $(y, \tau)$ be a maximal solution pair to the system in \eqref{eqn:reduced_H} such that $y(0,0) \!=\! v(\phi(0,0)) \!\in\! \mathbb{R}$ and $\tau(0,0) \!=\! 0$. By definition of the set $\mathcal{I}_{\mathcal{O},K}$, we conclude the existence of a solution $(y, \tau)$ to $\mathcal{H}_r$ such that
$\dom (y, \tau) \!=\! \dom y \!=\!  \dom\phi$.

Now, we pick any $j \!\in\! \dom_{j}\,\phi$ and we let $I^j \!:=\! \{t \!\in\! \mathbb{R}_{\geq 0}: (t,j) \!\in\! \dom\phi\}$. Using Lemmas \ref{lemcomp}, \ref{lemaux1}, and \ref{lemaux2} under \eqref{eqn:known_interval_v1}, we conclude that
$v(\phi(t,j)) \!\leq\! y(t,j)$  for all $t \!\in\! I^j$.

Furthermore, for any $j \!\in\! \dom_{j}\,\phi$ such that $(t,j-1), (t, j) \!\in\! \dom\phi$, using \eqref{eqn:known_interval_v2},
we conclude that
$v(\phi(t,j)) \leq y(t,j)$  for all $(t,j) \!\in\! \dom\phi\mbox{.}$
Since the solution $y$ starting from $v(\phi(0,0))$ converges to $(-\infty, r)$ in finite time, we conclude that,
there exists $(t^\star,j^\star) \!\in\! \dom\phi$, such that
$v(\phi(t,j)) \!\in\! (-\infty, r)$ for all $(t,j) \!\in\! \dom\phi: t+j \!\geq\! t^\star+j^\star$. 
This implies, by item 2) and completeness of $\phi$, the existence of $(t^\star,j^\star) \!\in\! \dom\phi$ such that
$ \phi(t,j) \!\in\! \mathcal{A}$ for all $(t,j) \!\in\! \dom\phi: t+j \!\geq\! t^\star+j^\star$,
which completes the proof.
\end{proof}

\begin{example}
[Bouncing ball]
Consider the hybrid system $\mathcal{H} \!=\! (C,F,D,G)$ in Example \ref{ex:bouncing_ball} with $\gamma \!=\! 1$ and $\lambda \!=\! 0.5$. 
Let the sets $\mathcal{O} \!:=\! \{0\} \!\times\! [2,3]$ and 
$\mathcal{A} \!:=\! [0,1] \!\times\! [-1,1]$. 
Furthermore, consider $K \!:=\! \{ x \!\in\! C \cup D : 2x_1 + x_2^2 \leq 1/2 \} \!\subset\! \mathcal{A}$.
Using Proposition \ref{prop:conditional_invariance} with the barrier function candidate 
$B(x) \!:=\! 2x_1 + x_2^2 - 1/2$,
we conclude that $K$ is forward pre-invariant for $\mathcal{H}$. 
Next, for $v(x) \!:=\! 2 x_1 + x_2^2$, we conclude that for $f_c(y) \!=\! 0$, \eqref{eqn:known_interval_v1} holds. Furthermore, for $f_d(y) \!=\! y/4$, \eqref{eqn:known_interval_v2} holds. 
Now, for $r \!=\! 1/2$, it is easy to see that \eqref{eqS} holds. Finally, to conclude pre-ECI of $\mathcal{A}$ with respect to $\mathcal{O}$ using Theorem \ref{thm:pre_ECI_interval}, it is enough to show that $\mathcal{I}_{\mathcal{O},K}$ is bounded. Indeed, using Proposition \ref{prop:conditional_invariance} with the barrier function candidate given by
$B_1(x) \!:=\! 2x_1 \!+\! x_2^2 \!-\! 9$,
we conclude that the zero sublevel set of $B_1$, which contains $\mathcal{O}$, is forward pre-invariant. Hence, $\mathcal{R}(\mathcal{O})$ is bounded. Furthermore, from every initial condition in $\cl(\mathcal{R}(\mathcal{O}))$, the unique maximal solution reaches the set $D$ in finite time. Hence, the interval of flow of the solutions starting from $\mathcal{O}$ is uniformly bounded.   
\hfill $\triangle$
\label{ex:bb_length}
\end{example}

\subsection{Sufficient Conditions for ECI}

The proofs of the following results follow from the proofs of Proposition \ref{prop:pre_ECI_f} and Proposition \ref{prop:pre_ECI_j}, respectively.

\begin{theorem}[ECI via flows]\label{thm:ECIf}
Given a hybrid system $\mathcal{H} = (C,F,D,G)$ \blue{such that Assumption  \ref{assump:SA} holds}, and sets $\mathcal{O} \subset C \cup D$ and $\mathcal{A} \subset \mathbb{R}^n$,
suppose that there exist a $\mathcal{C}^1$ function $v : \mathbb{R}^n \rightarrow \mathbb{R}$, a locally Lipschitz function $f_c : \mathbb{R} \rightarrow \mathbb{R}$, and a constant $r_1 > 0$ such that items 1) - 3) in Proposition~\ref{prop:pre_ECI_f} hold.
Then, the set $\mathcal{A}$ is \emph{ECI} with respect to $\mathcal{O}$ for $\mathcal{H}$ if the following additional condition holds:
\begin{itemize}[leftmargin=.2in]
	\item For each solution $\phi \in \mathcal{S_H}(\mathcal{O})$, there exists a solution $y$ to $\dot{y} = f_c(y)$ starting from $v(\phi(0,0))$ satisfying $y(t) \in (-\infty, r_1)$ for all $t \geq t^\star$ and for some nonnegative $t^\star \leq \sup \{ t : (t,j) \in \dom\phi \}$.
\end{itemize}
\end{theorem}

\begin{theorem}[ECI via jumps]\label{thm:ECIj}
Given a hybrid system $\mathcal{H} = (C,F,D,G)$ \blue{such that Assumption  \ref{assump:SA} holds},  and sets $\mathcal{O} \subset C \cup D$ and $\mathcal{A} \subset \mathbb{R}^n$,
suppose that there exist a $\mathcal{C}^1$ function $w : \mathbb{R}^n \rightarrow \mathbb{R}$, a nondecreasing function $f_d : \mathbb{R} \rightarrow \mathbb{R}$, and a constant $r_2 > 0$ such that items 1) and 2) in Proposition~\ref{prop:pre_ECI_j} hold.
Then, the set $\mathcal{A}$ is \emph{ECI} with respect to $\mathcal{O}$ for $\mathcal{H}$ if the following additional condition holds:
\begin{itemize}[leftmargin=.2in]
	\item For each solution $\phi \in \mathcal{S_H}(\mathcal{O})$, there exists a solution $z$ to $z^+ = f_d(z)$ starting from $w(\phi(0,0))$ satisfying $z(j) \!\in\! (-\infty, r_2)$ for all $j \geq j^\star$ and for some nonnegative $j^\star \leq \sup \{ j : (t,j) \!\in\! \dom\phi \}$.
\end{itemize}
\end{theorem}

The set $\mathcal{A}$ is ECI with respect to the set $\mathcal{O}$ for $\mathcal{H}$ if $\mathcal{A}$ is pre-ECI and the following condition holds additionally.

\begin{theorem}[From pre to non-pre ECI] \label{thm:eventual_ci}
Given a hybrid system $\mathcal{H} = (C,F,D,G)$ \blue{such that Assumption  \ref{assump:SA} holds}, and sets $\mathcal{O} \subset C \cup D$ and $\mathcal{A} \subset \mathbb{R}^n$ such that $\mathcal{A}$ is pre-ECI with respect to $\mathcal{O}$, the set $\mathcal{A}$ is \emph{ECI} with respect to the set $\mathcal{O}$ for $\mathcal{H}$ if the following property holds:
\begin{enumerate}[label={$(\star)$},leftmargin=*]
	\item \label{item:star} 
	There exists a set $S \subset C \cup D \cup \mathcal{A}$ such that $\mathcal{O} \cup \mathcal{A} \subset S$ and $S$ is forward invariant for 
	$\mathcal{H}_w = (C_w, F_w, D_w, G_w)$ in \eqref{eqn:H_m} with $Q$ therein replaced by $\mathcal{A}$.
\end{enumerate}
\end{theorem}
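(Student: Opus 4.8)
The plan is to reduce ECI to the already-assumed pre-ECI by getting control on the maximal solutions of $\mathcal{H}$ from $\mathcal{O}$ that are \emph{not} complete. Write $\widetilde{\mathcal{H}}_w$ for the system $\mathcal{H}_w$ of \eqref{eqn:H_m} with $Q$ replaced by $\mathcal{A}$, i.e.\ $\widetilde{C}_w = C\setminus\mathcal{A}$, $\widetilde{F}_w = F$, $\widetilde{D}_w = D\cup\mathcal{A}$, and $\widetilde{G}_w(x)=x$ on $\mathcal{A}$, $\widetilde{G}_w(x)=G(x)$ on $D\setminus\mathcal{A}$. Fix a maximal solution $\phi\in\mathcal{S}_{\mathcal{H}}(\mathcal{O})$; I must produce $(t^\star,j^\star)\in\dom\phi$ with $\phi(t,j)\in\mathcal{A}$ for all $(t,j)\in\dom\phi$ such that $t+j\ge t^\star+j^\star$. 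If $\phi$ is complete, this is exactly what pre-ECI of $\mathcal{A}$ with respect to $\mathcal{O}$ for $\mathcal{H}$ provides, and we are done. So assume $\phi$ is not complete.

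Since $\mathcal{H}$ and $\widetilde{\mathcal{H}}_w$ share their data on $(C\cup D)\setminus\mathcal{A}$, the truncation of $\phi$ to the hybrid times preceding its first visit to $\mathcal{A}$ is a solution of $\widetilde{\mathcal{H}}_w$; extend it to a maximal solution $\psi\in\mathcal{S}_{\widetilde{\mathcal{H}}_w}(\phi(0,0))$. Because $\phi(0,0)\in\mathcal{O}\subset S$ and, by $(\star)$, $S$ is forward invariant for $\widetilde{\mathcal{H}}_w$, the solution $\psi$ is complete and $\rge\psi\subset S\subset C\cup D\cup\mathcal{A}$. First I would show that $\phi$ must reach $\mathcal{A}$: if $\rge\phi\cap\mathcal{A}=\emptyset$, then $\phi$ is itself a solution of $\widetilde{\mathcal{H}}_w$, and in fact a maximal one, since any proper extension of $\phi$ in $\widetilde{\mathcal{H}}_w$ would flow within $\widetilde{C}_w\subset C$ or jump via $\widetilde{G}_w=G$ on $D\setminus\mathcal{A}$ and would therefore also extend $\phi$ in $\mathcal{H}$, contradicting maximality of $\phi$ for $\mathcal{H}$. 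Forward invariance of $S$ then forces this $\phi$ to be complete, and pre-ECI makes a complete solution of $\mathcal{H}$ from $\mathcal{O}$ eventually enter $\mathcal{A}$ — contradicting $\rge\phi\cap\mathcal{A}=\emptyset$. Hence $T^\star:=\mathcal{T}_{\mathcal{A}}(\phi)<\infty$; fix $(t_1,j_1)\in\dom\phi$ with $t_1+j_1=T^\star$ and $\phi(t_1,j_1)\in\mathcal{A}$.

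It remains to prove $\phi(t,j)\in\mathcal{A}$ for every $(t,j)\in\dom\phi$ with $t+j\ge T^\star$, and this is the main obstacle. By construction $\widetilde{\mathcal{H}}_w$ "freezes" $\mathcal{A}$, in the sense that $\widetilde{C}_w\cap\mathcal{A}=\emptyset$ and $\widetilde{G}_w|_{\mathcal{A}}=\mathrm{id}$, so a solution of $\widetilde{\mathcal{H}}_w$ that is at a point of $\mathcal{A}\subset S$ can only remain in $\mathcal{A}$; I would argue that forward (pre-)invariance of $S$ together with the inclusions $\mathcal{O}\cup\mathcal{A}\subset S\subset C\cup D\cup\mathcal{A}$ propagates this to $\phi$, e.g.\ by noting that a hybrid time at which $\phi$ first left $\mathcal{A}$ would spawn a tail of $\phi$ contained in $(C\cup D)\setminus\mathcal{A}$ that is a maximal solution of $\widetilde{\mathcal{H}}_w$ from a point lying in $S$, hence complete, so that $\phi$ itself would be complete and pre-ECI would force $\phi$ to be eventually in $\mathcal{A}$, a contradiction — the delicate point being to verify that the departure point is indeed in $S$, i.e.\ that the $\widetilde{\mathcal{H}}_w$-bookkeeping still applies once $\phi$ has touched $\mathcal{A}$. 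This is what I expect to be the hardest step, and it is exactly the step that is transparent in the intended application: when $\mathcal{H}=\mathcal{H}_s$ and $\mathcal{A}=Q$ as in \eqref{eqn:H_m+}, the auxiliary system $\widetilde{\mathcal{H}}_w$ coincides with $\mathcal{H}_s$, so $(\star)$ directly makes \emph{every} maximal solution of $\mathcal{H}_s$ from $\mathcal{O}\subset S$ complete, whence ECI is immediate from pre-ECI. Once the remaining-in-$\mathcal{A}$ claim is established, taking $(t^\star,j^\star)=(t_1,j_1)$ completes the proof.
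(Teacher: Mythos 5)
Your first step is, essentially verbatim, the paper's entire proof: a maximal solution of $\mathcal{H}$ from $\mathcal{O}$ that never meets $\mathcal{A}$ coincides with a maximal solution of the modified system $\widetilde{\mathcal{H}}_w$ issued from $S$, is therefore complete by forward invariance of $S$, and is then forced into $\mathcal{A}$ by pre-ECI --- a contradiction. The paper stops there, asserting that ``it remains only to show that the solutions \ldots always reach the set $\mathcal{A}$''; it never addresses whether a \emph{non-complete} maximal solution stays in $\mathcal{A}$ after reaching it, which is what the definition of ECI actually demands. So on the portion the paper proves, you and the paper agree exactly.

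The obstacle you flag is therefore real, and your sketched repair does not close it, for precisely the reason you identify yourself: once $\phi$ has entered $\mathcal{A}$ it evolves under the data of $\mathcal{H}$ (in particular it may jump via $G$ from $D\cap\mathcal{A}$), whereas $(\star)$ only constrains $\widetilde{G}_w=\mathrm{id}$ on $\mathcal{A}$. Hence the point from which the post-$\mathcal{A}$ tail departs need not lie in $S$, the completeness argument cannot be restarted, and the tail may re-enter $\mathcal{A}$ so it need not even be a solution of $\widetilde{\mathcal{H}}_w$. The gap is not repairable without extra hypotheses: take $C=[0,1]$, $F\equiv 1$, $D=\{1,2\}$, $G(1)=2$, $G(2)=3$, $\mathcal{O}=\{0\}$, $\mathcal{A}=\{1\}$, $S=[0,1]$. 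Then pre-ECI holds vacuously (the unique maximal solution from $0$ is not complete), $(\star)$ holds with this $S$, yet that solution terminates at $3\notin\mathcal{A}$ after visiting $\mathcal{A}$, so $\mathcal{A}$ is \emph{not} ECI with respect to $\mathcal{O}$. Your closing observation is the correct practical resolution: in the intended application ($\mathcal{H}=\mathcal{H}_s$, $\mathcal{A}=Q$) one has $\widetilde{\mathcal{H}}_w=\mathcal{H}_s$, so $(\star)$ makes \emph{every} maximal solution from $\mathcal{O}$ complete and pre-ECI alone finishes the argument, with no ``staying'' issue. In short: your proof matches the paper's where the paper has one, and the step you honestly leave open is a genuine gap in the theorem and its published proof, not merely in your write-up; an added hypothesis (e.g., that every maximal solution of $\mathcal{H}$ from $\mathcal{O}$ is complete, or that $G(D\cap\mathcal{A})\subset\mathcal{A}$ and $\mathcal{A}$ is forward pre-invariant for $\mathcal{H}$) would be needed to make it sound in general.
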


\begin{proof}
When \ref{item:star} holds, since the set $\mathcal{A}$ is pre-ECI with respect to the set $\mathcal{O}$ for $\mathcal{H}$, to complete the proof, it remains only to show that 
the \blue{maximal} solutions to $\mathcal{H}$ starting from $\mathcal{O} \backslash \mathcal{A}$ always reach the set $\mathcal{A}$. Proceeding by contradiction, assume the existence of a maximal solution $\phi$ to $\mathcal{H}$ starting from $\mathcal{O} \backslash \mathcal{A}$ that never reaches the set $\mathcal{A}$. We notice that each solution to $\mathcal{H}$ starting from  $\mathcal{O} \backslash \mathcal{A}$ is a solution to $\mathcal{H}_w$ provided that it does not reach the set $\mathcal{A}$. Hence, since the set $S$ is forward invariant for $\mathcal{H}_w$, we conclude that the solution $\phi$ is complete.
This property contradicts the fact that $\mathcal{A}$ is pre-ECI with respect to the set $\mathcal{O}$ for $\mathcal{H}$, which completes the proof.
\end{proof}

\begin{example}
Consider the hybrid system in Example \ref{exp:didact}. It is easy to see that the set $S := \mathcal{O} \cup \mathcal{A}$ is forward invariant for $\mathcal{H}_w$. Indeed, all the solutions to $\mathcal{H}_w$ starting from $\mathcal{O}$ flow in $\mathcal{O}$ until they reach $\mathcal{A}$.
Since from $\mathcal{A}$, every solution is discrete, complete, and remains in $\mathcal{A}$, $S$ is forward invariant.
\hfill $\triangle$
\label{ex:didact_eci}
\end{example}

\section{Sufficient Conditions for pre-FTA and FTA} \label{SectionFTA}

In this section, inspired by \cite[Proposition B.1 and Proposition B.3]{han.arXiv19} where sufficient conditions for FTA are proposed, we derive sufficient conditions for pre-FTA using Lyapunov functions. 

\subsection{Sufficient Conditions for pre-FTA} \label{SectionpreFTA}

In the following result, we use two Lyapunov functions, the first one strictly decreases along the flows and does not increase along the jumps and the second one decreases strictly along the jumps and does not increase along the flows. Different from \cite[Proposition B.1 and Proposition B.3]{han.arXiv19}, where only a single Lyapunov function is used, item 1) in \cite[Proposition B.1]{han.arXiv19} and item 1) in \cite[Proposition B.3]{han.arXiv19} are not imposed in the following result.
Those extra conditions used in \cite[Proposition B.1 and Proposition B.3]{han.arXiv19} are for non-pre FTA. Since pre-FTA is a property of complete solutions, those conditions are satisfied for free. 
Similarly, see also the results on finite-time stability for hybrid systems in \cite{188}.

\begin{theorem} [pre-FTA] 
Given a hybrid system $\mathcal{H}=(C,F,D,G)$ \blue{such that Assumption  \ref{assump:SA} holds}, and sets $\mathcal{A}, \mathcal{O} \subset \mathbb{R}^n$, suppose that the set $\mathcal{A}$ is closed and 
there exists an open set $\mathcal{N}$ that defines an open neighborhood of $\mathcal{A}$ such that $G(\mathcal{N}) \subset \mathcal{N} \subset \mathbb{R}^n$ and suppose that there exist functions $V: \mathcal{N} \rightarrow \mathbb{R}_{\geq 0}$ and $W: \mathcal{N} \rightarrow \mathbb{R}_{\geq 0}$, that are $\mathcal{C}^1$ on $\mathcal{N} \backslash \mathcal{A}$, such that $\mathcal{O} \subset L_V(r) \cap (C \cup D)$ and $\mathcal{O} \subset L_W(r) \cap (C \cup D)$, where $L_V(r) = \{x \in \mathbb{R}^n : V(x) \leq r\}$ and $L_W(r) = \{x \in \mathbb{R}^n : W(x) \leq r\}$, $r \in [0, \infty]$, are the sublevel sets of $V$ and $W$ contained in $\mathcal{N}$, respectively.
Then, the set $\mathcal{A}$ is \emph{pre-FTA} with respect to $\mathcal{O}$ for $\mathcal{H}$ if the following holds:
\begin{itemize}
	\item[1)] There exist constants $c_1 > 0, c_2 \in [0,1)$ such that
the function $V$ is positive definite with respect to $\mathcal{A}$, and
\begin{subequations}
\label{eqn:pre_FTA_f}
\begin{align}
	&\!\!\!\!\!\!\!\!\left< \nabla V(x), \eta \right> \leq -c_1 V^{c_2}(x)
	\label{eqn:pre_FTA_flow_a}\\
	&\qquad\qquad\qquad\quad \forall x \!\in\! (C \cap \mathcal{N}) \backslash \mathcal{A}, \forall \eta \!\in\! F(x) \!\cap\! T_C(x)\mbox{,}\nonumber\\
	&\!\!\!\!\!\!\!\!V(\eta) - V(x) \leq 0 \quad \forall x \!\in\! (D \cap \mathcal{N}) \backslash \mathcal{A}, \forall \eta \!\in\! G(x)\mbox{,}
	\label{eqn:pre_FTA_flow_b}
\end{align}
\end{subequations}
	\item[2)] There exists a constant $c > 0$ such that the function $W$ is positive definite with respect to $\mathcal{A}$, and
\begin{subequations}
\label{eqn:pre_FTA_j}
\begin{align}
	&\!\!\!\!\!\!\!\!\!\!\!\!\left< \nabla W(x), \eta \right> \!\leq\! 0 ~\forall x \!\in\! (C \!\cap\! \mathcal{N}) \backslash \mathcal{A}, \forall \eta \!\in\! F(x) \!\cap\! T_C(x) \mbox{,}
	\label{eqn:pre_FTA_jump_a}\\
	&\!\!\!\!\!\!\!\!\!\!\!\!W(\eta) \!-\! W(x) \!\leq\! -\min\{c, W(x)\}\label{eqn:pre_FTA_jump_b}\\
	&\qquad\qquad\qquad\qquad\forall x \!\in\! (D \cap \mathcal{N}) \backslash \mathcal{A}, \forall \eta \!\in\! G(x) \nonumber \mbox{.}
\end{align}
\end{subequations}
\end{itemize}
\label{thm:pre_FTA}
\end{theorem}

\begin{proof}
According to the definition of pre-FTA, we need to show that, for each complete solution $\phi$ to $\mathcal{H}$ starting from the set $\mathcal{O}$, $\mathcal{T}_\mathcal{A}(\mathcal{O}) < \infty$; namely, there exists $(t^\star, j^\star) \in \dom\phi$ such that  $\phi(t^\star, j^\star) \in \mathcal{A}$ such that $t^\star + j^\star = \mathcal{T}_\mathcal{A}(\mathcal{O}) < \infty$. Let $\phi$ be a complete solution to $\mathcal{H}$ starting from $\mathcal{O}$.
Due to completeness of $\phi$, we have that $\dom\phi$ is unbounded either in $t$ or in $j$; namely,
$ \sup \{t: (t,j) \in \dom\phi\} + \sup \{j: (t,j) \in \dom\phi \} = \infty\mbox{.}$

\begin{itemize}[leftmargin=.2in]
\item Assume that $\sup \{t: (t,j) \in \dom\phi\} = \infty$. In this case, we pick $(t,j) \in \dom\phi$ and a sequence $0 = t_0 \leq t_1 \leq \ldots \leq t_{j+1} = t$ satisfying
$\dom\phi \cap ([0,t] \times \{0,1,\ldots,j\}) = \bigcup_{i=0}^j ([t_i, t_{i+1}] \times \{i\})\mbox{.}$
Now, suppose that, for each $i \in \{0,1,\ldots,j\}$ and almost all $s \in [t_i, t_{i+1}]$,
$\phi(s,i) \in (C \cap L_V(r)) \backslash \mathcal{A}\mbox{.}$
We will later show that this is the case by picking $t = t_{j+1}$ and $j$ appropriately.
Note that $L_V(r) \subset \mathcal{N}$ by assumption.
Then, \eqref{eqn:pre_FTA_flow_a} implies that, for each $i \in \{0,1, \ldots, j\}$ and for almost all $s \in [t_i, t_{i+1}]$,
$
	\tfrac{d}{ds} V(\phi(s,i)) \leq -c_1 V^{c_2} (\phi(s,i))\mbox{.}
$
That is,
$ V^{-c_2}(\phi(s,i)) dV(\phi(s,i)) \leq -c_1 ds \mbox{.}$
Then, integrating over $[t_i, t_{i+1}]$ both sides of this inequality yields
\begin{equation}
	\!\!\!\tfrac{1}{1-c_2} \big( V^{1-c_2} (\phi(t_{i+1},i)) - V^{1-c_2} (\phi(t_i,i)) \big) \leq -c_1 (t_{i+1} - t_i)\mbox{.}
\label{eqn:pre_c1}
\end{equation}
Similarly, for each $i \in \{1,\ldots,j\}$,
$\phi(t_i,i-1) \in (D \cap L_V(r)) \backslash \mathcal{A}\mbox{.}$
As stated above, we will later show that this is the case by picking $t \!=\! t_{j+1}$ and $j$ appropriately.
Then, \eqref{eqn:pre_FTA_flow_b} implies that
\begin{equation}
	V(\phi(t_i,i)) - V(\phi(t_i,i-1)) \leq 0\mbox{.}
\label{eqn:pre_d1}
\end{equation}
Denoting $\xi := \phi(0,0)$, the two inequalities in \eqref{eqn:pre_c1} and \eqref{eqn:pre_d1} imply that, for each $(t,j) \in \dom\phi$,
\[
	\tfrac{1}{1-c_2} \big( V^{1-c_2} (\phi(t,j)) - V^{1-c_2} (\xi) \big) \leq -c_1 t \mbox{.}
\]
It follows that, using the fact that $c_2 \in [0,1)$,
$V^{1-c_2} (\phi(t,j)) \leq V^{1-c_2} (\xi) - c_1 (1-c_2) t\mbox{.}$
Indeed, the quantity $V^{1-c_2}(\xi) - c_1 (1-c_2) t$ converges to zero in finite time that is upper bounded by $\tfrac{V^{1-c_2} (\xi)}{c_1(1-c_2)}$. Since $V$ is positive definite with respect to $\mathcal{A}$, It follows that there exists $(t^\star, j^\star) \in \dom\phi$ such that $\phi(t^\star, j^\star) \in \mathcal{A}$ with
$
t^\star + j^\star \leq \mathcal{T}_1^\star(\xi) + \mathcal{J}_1^\star(\phi),
$
where
$\mathcal{T}_1^\star(\xi) = \tfrac{V^{1-c_2} (\xi)}{c_1(1-c_2)}\mbox{, }$

$\mathcal{J}_1^\star(\phi) = \sup \{j: (t,j) \in \dom\phi, t < \mathcal{T}_1^\star(\xi)\}\mbox{.}$
Next, we show that $\phi$ stays in $L_V(r)$ until it reaches the set $\mathcal{A}$.
Proceeding by contradiction, if $\phi$ does not stay in $L_V(r)$ until $\phi$ reaches the set $\mathcal{A}$, then there exists a first hybrid time $(t',j') \in \dom\phi$ such that
\[V(\phi(t',j')) \!>\! r\mbox{,} ~\phi(t,j) \!\notin\! \mathcal{A} ~\forall (t,j) \!\in\! \dom\phi \!:\! t + j \!\leq\! t' +j'\mbox{.}\]
Note that $\phi(0,0) \in L_V(r) \subset \mathcal{N}$.
Then, using \eqref{eqn:pre_FTA_f}, the fact that $G(\mathcal{N}) \subset \mathcal{N}$, the continuity of $V$, and the concept of solutions to hybrid inclusions, we conclude that $V(\phi(t',j')) \leq V(\phi(0,0)) \leq r$ since $\phi(t,j) \in (C \cup D) \backslash \mathcal{A}$ for every $(t,j) \in \dom\phi$ such that $t+j \leq t'+j'$. Hence, the contradiction follows.

\item Assume that $\sup \{j: (t,j) \in \dom\phi\} = \infty$. In this case, we
suppose that, for each $i \in \{0,1,\ldots,j\}$ and almost all $s \in [t_i, t_{i+1}]$,
$\phi(s,i) \in (C \cap L_W(r)) \backslash \mathcal{A}\mbox{.}$
We will later show that this is the case by picking $t = t_{j+1}$ and $j$ appropriately.
Note that $L_W(r) \subset \mathcal{N}$ by assumption.
Then, \eqref{eqn:pre_FTA_jump_a} implies that, for each $i \in \{0,1, \ldots, j\}$ and for almost all $s \in [t_i, t_{i+1}]$,
$\tfrac{d}{ds} W(\phi(s,i)) \leq 0\mbox{.}$
Then, integrating over $[t_i, t_{i+1}]$ both sides of this inequality yields
\begin{equation}
	W(\phi(t_{i+1},i)) - W(\phi(t_i,i)) \leq 0\mbox{.}
\label{eqn:pre_c2}
\end{equation}
Similarly, for each $i \in \{1,\ldots,j\}$,
$
	\phi(t_i,i-1) \in (D \cap L_W(r)) \backslash \mathcal{A}\mbox{.}
$
Then, using \eqref{eqn:pre_FTA_jump_b},
\begin{equation}
	\!\!\!\!W\!(\phi(t_i,i)) - W\!(\phi(t_i,i\!-\!1)) \!\leq\!\! -\min \{c,\! W\!(\phi(t_i,i\!-\!1))\}\mbox{.}
\label{eqn:pre_d2}
\end{equation}
Denoting $\xi := \phi(0,0)$, the two inequalities in \eqref{eqn:pre_c2} and \eqref{eqn:pre_d2} imply that, for each $(t,j) \in \dom\phi$,
$W(\phi(t,j)) - W(\xi) \leq -\sum_{i=1}^j \min \{c, W(\phi(t_i,i-1))\} \mbox{.}$ Indeed, the quantity $W(\xi) - \sum_{i=1}^j \min \{c, W(\phi(t_i, i-1))\}$ converges to zero in finite time that is upper bounded by $\mbox{ceil} \big(\tfrac{W(\xi)}{c}\big)$.
 Since $W$ is positive definite with respect to $\mathcal{A}$, It follows that there exists $(t^\star, j^\star) \in \dom\phi$ such that $\phi(t^\star, j^\star) \in \mathcal{A}$ satisfying
$
t^\star + j^\star \leq \mathcal{T}_2^\star(\phi) + \mathcal{J}_2^\star(\xi),
$
where $\mathcal{J}_2^\star(\xi) = \mbox{ceil} \big(\tfrac{W(\xi)}{c}\big)\mbox{,}$
$(\mathcal{T}_2^\star(\phi), \mathcal{J}_2^\star(\xi)), (\mathcal{T}_2^\star(\phi), \mathcal{J}_2^\star(\xi)-1) \in \dom\phi\mbox{.}$ Finally, using the same argument as in the previous item under \eqref{eqn:pre_FTA_j}, we conclude that $\phi$ stays in $L_W(r)$ until it reaches the set $\mathcal{A}$.
\qedhere
\end{itemize}
\end{proof}

\begin{remark}
It is important to note that, when only item 1) (respectively, item 2)) in Theorem \ref{thm:pre_FTA} holds, we cannot guarantee that discrete (respectively, continuous) complete solutions to $\mathcal{H}$ reach 
the set $\mathcal{A}$. 
\end{remark}

The following example illustrates Theorem \ref{thm:pre_FTA}.

\begin{example}[Bouncing ball]
\label{ex:fta_bb1}
Consider the hybrid system $\mathcal{H} = (C,F,D,G)$ in Example \ref{ex:bouncing_ball}. Let $\mathcal{A} = \mathbb{R} \times \mathbb{R}_{\leq 0}$, $\mathcal{N} = \mathbb{R}^2$, and $V(x) = W(x) = |x_2|$ for all $x \in \mathbb{R}^2$ where $V$ and $W$ are $\mathcal{C}^1$ on the open set $\mathbb{R}^2 \backslash \mathcal{A}$. Let $\mathcal{O} \subset L_V(r) \cap (C \cup D)$ where $L_V(r) = \{x \in \mathbb{R}^2 : V(x) \leq r\}$, $r \in [0, \infty]$, is the sublevel set of $V$ contained in $\mathcal{N}$.
It follows that for all $x \in (C \cap \mathcal{N}) \backslash \mathcal{A}$, $\left< \nabla V(x), F(x) \right> = -\gamma < 0$ and $\left< \nabla V(x), F(x) \right> \leq c_1 V^{c_2} (x)$ holds with $c_1 = \gamma$ and $c_2 = 0$; namely, \eqref{eqn:pre_FTA_flow_a} and \eqref{eqn:pre_FTA_jump_a} hold.
Since $(D \cap \mathcal{N}) \backslash \mathcal{A} = \emptyset$, conditions 1) and 2) in Theorem \ref{thm:pre_FTA} hold; and thus, we conclude that $\mathcal{A}$ is pre-FTA with respect to $\mathcal{O}$ for $\mathcal{H}$.
\hfill $\triangle$
\end{example}

In the following result, which is similar to \cite[Proposition B.1]{han.arXiv19},
only a single Lyapunov function is used to guarantee that only complete solutions must reach $\mathcal{A}$ via flows.

\begin{proposition}
[pre-FTA via flows] \label{prop:preFTA1}
Given a hybrid system $\mathcal{H}=(C,F,D,G)$ \blue{such that Assumption  \ref{assump:SA} holds}, and sets $\mathcal{A}, \mathcal{O} \subset \mathbb{R}^n$, suppose that the set $\mathcal{A}$ is closed and 
there exists an open set $\mathcal{N}$ that defines an open neighborhood of $\mathcal{A}$ such that $G(\mathcal{N}) \subset \mathcal{N} \subset \mathbb{R}^n$ and suppose that there exists a function $V: \mathcal{N} \rightarrow \mathbb{R}_{\geq 0}$ that is $\mathcal{C}^1$ on $\mathcal{N} \backslash \mathcal{A}$ and $\mathcal{O} \subset L_V(r) \cap (C \cup D)$ where $L_V(r) := \{x \in \mathbb{R}^n : V(x) \leq r\}$, $r \in [0, \infty]$, is the sublevel set of $V$ contained in $\mathcal{N}$.
Then, the set $\mathcal{A}$ is \emph{pre-FTA} with respect to $\mathcal{O}$ for $\mathcal{H}$ if
\begin{itemize}
	\item[1)] Condition 1) in Theorem~\ref{thm:pre_FTA} holds; and
	\item[2)] For every $x \in \mathcal{N} \cap (C \cup D) \backslash \mathcal{A}$, each complete solution 
	$\phi \in \mathcal{S_H}(x)$ satisfies
$ \tfrac{V^{1-c_2} (x)}{c_1 (1-c_2)} \leq \sup \{ t : (t,j) \in \dom\phi \} \mbox{.}$
\end{itemize}
\end{proposition}

\begin{proof}
Let $\phi$ be a complete solution to $\mathcal{H}$ starting from $\xi := \phi(0,0) \in \mathcal{O}$. Following the first item in the proof of Theorem~\ref{thm:pre_FTA} under item 1), we conclude the existence $(t^\star, j^\star) \in \dom\phi$ such that $\phi(t^\star, j^\star) \in \mathcal{A}$ provided that 
$ \sup \{t : (t,j) \in \dom\phi\} \geq \tfrac{V^{1-c_2} (\xi)}{c_1 (1-c_2)} \mbox{,}$
which, under item 2), completes the proof.
\end{proof}

\begin{example}[Thermostat]
\label{ex:fta_thermostat}
Consider the hybrid system $\mathcal{H} = (C,F,D,G)$ in Example \ref{exp:thermostat}. Let $\mathcal{A} := \{ (h,z)  \in \{0,1\} \times \mathbb{R} : z \in [z_{min},z_{max}]\}$ and $\mathcal{N} \subset \mathbb{R}^2$ be an open neighborhood of $\{0,1\} \times \mathbb{R}$.
Consider the function
$V(x) := (z-z_{max})(z-z_{min})$ if $x \notin \mathcal{A}$ and $V(x) := 0$ otherwise.
Let $\mathcal{O} \subset L_V(r) \cap (C \cup D)$, where $L_V(r) = \{x \in \{0,1\} \times \mathbb{R}: V(x) \leq r\}$, $r \in [0,\infty]$, is the sublevel set of $V$ contained in $\mathcal{N}$.
Note that, for each $x \!\in\! (C \cap \mathcal{N}) \backslash \mathcal{A} \!=\! (\{0\} \times (z_{\max}, \infty)) \cup (\{1\} \times (-\infty, z_{\min}))$, we have $\left< \nabla V(x), F(x) \right> = (2z - z_{\min} - z_{\max})(-z + z_o + z_\triangle h) < -(z_{\max}-z_{\min})^2$; hence, \eqref{eqn:pre_FTA_flow_a} holds with $c_1 = (z_{\max}-z_{\min})^2$ and $c_2 = 0$.
Furthermore, for all $x \in (D \cap \mathcal{N}) \backslash \mathcal{A} = (\{0\} \times (-\infty, z_{\min})) \cup (\{1\} \times (z_{\max}, \infty))$, $V(G(x)) - V(x) = 0$; hence, \eqref{eqn:pre_FTA_flow_b} holds.
Next, item 2) in Theorem \ref{thm:pre_FTA} holds for complete solutions as they are nonzeno, which implies that their domain is unbounded in the $t$ direction. Thus, via Theorem \ref{thm:pre_FTA}, we conclude that $\mathcal{A}$ is pre-FTA with respect to $\mathcal{O}$ for $\mathcal{H}$.
\hfill $\triangle$
\end{example}

In the following result, which is similar to \cite[Proposition B.3]{han.arXiv19},
only a single Lyapunov function is used to guarantee that only complete solutions must reach $\mathcal{A}$ via jumps.

\begin{proposition}
[pre-FTA via jumps] \label{prop:preFTA2}
Given a hybrid system $\mathcal{H}=(C,F,D,G)$ \blue{such that Assumption  \ref{assump:SA} holds}, and sets $\mathcal{A}, \mathcal{O} \subset \mathbb{R}^n$, suppose that the set $\mathcal{A}$ is closed and 
there exists an open set $\mathcal{N}$ that defines an open neighborhood of $\mathcal{A}$ such that $G(\mathcal{N}) \subset \mathcal{N} \subset \mathbb{R}^n$ and suppose that there exists a function $W : \mathcal{N} \rightarrow \mathbb{R}_{\geq 0}$ that is $\mathcal{C}^1$ on $\mathcal{N} \backslash \mathcal{A}$ and $\mathcal{O} \subset L_W(r) \cap (C \cup D)$
where $L_W(r) = \{x \in \mathbb{R}^n : W(x) \leq r\}$, $r \in [0, \infty]$, is the sublevel set of $W$ contained in $\mathcal{N}$. Then, the set $\mathcal{A}$ is \emph{pre-FTA} with respect to $\mathcal{O}$ for $\mathcal{H}$ if
\begin{itemize}
	\item[1)] Condition 2) Theorem~\ref{thm:pre_FTA} holds; and
	\item[2)] For every $x \in \mathcal{N} \cap (C \cup D) \backslash \mathcal{A}$, each complete $\phi \in \mathcal{S}_{\mathcal{H}}(x)$ satisfies
$
    	\mbox{ceil} \big( \tfrac{W(x)}{c} \big) \leq \sup \{ j : (t,j) \in \dom\phi \} \mbox{.}
$
\end{itemize}
\end{proposition}

\begin{proof}
Let $\phi$ be a complete solution to $\mathcal{H}$ starting from $\xi := \phi(0,0) \in \mathcal{O}$. Following the second item in the proof of Theorem~\ref{thm:pre_FTA} under item 1), we conclude the existence $(t^\star, j^\star) \in \dom\phi$ such that $\phi(t^\star, j^\star) \in \mathcal{A}$ provided that 
$
\sup \{j : (t,j) \in \dom\phi\} \geq 
\mbox{ceil} \big( \tfrac{W(\xi)}{c} \big) \mbox{,}
$
which, under item 2), completes the proof.
\end{proof}

\begin{example}[Boucing ball]
\label{ex:fta_bb2}
Consider the hybrid system $\mathcal{H} = (C,F,D,G)$ in Example \ref{ex:bouncing_ball}
with $\gamma > 0$ and $\lambda \in (0,1)$.
Let $\mathcal{A} = \{ x \in C \cup D : 2 \gamma x_1 + (x_2-1) (x_2+1) \leq 0 \}$ and $\mathcal{N} = \mathbb{R}^2$. 
We observe that $\mathcal{A}$ is the sublevel set where the total energy of the ball is less or equal than $1/2$. Consider the function 
\begin{align} \label{EqW}
W(x) := 
\left\{
\begin{array}{ll}
2 \gamma x_1 + x_2^2 - 1 
& \mbox{if } x \in \mathbb{R}^2 \backslash \mathcal{A}
\\
0 & \mbox{otherwise.}
\end{array}
\right.
\end{align}
Let $\mathcal{O} \!\subset\! L_W(r) \cap (C \cup D)$, where $L_W(r) = \{x \in \mathbb{R}^2: W(x) \leq r\}$, $r \in [0, \infty]$, is the sublevel set of $W$ contained in $\mathcal{N}$.
For each $x \in (C \cap \mathcal{N}) \backslash \mathcal{A} = \{x \in \mathbb{R}^2: 2 \gamma x_1 + x_2^2 - 1 > 0, x_1 \leq 0\}$, $\left< \nabla W(x), F(x) \right> = 0$; hence, \eqref{eqn:pre_FTA_jump_a} holds.
Furthermore, for each $x \in (D \cap \mathcal{N}) \backslash \mathcal{A} = \{0\} \times (-\infty, -1]$, $W(G(x)) - W(x) \leq \lambda^2 -1 < -W(x)$; hence, \eqref{eqn:pre_FTA_jump_b} holds with $c = 1-\lambda^2$. Finally, item 2) holds since every maximal solution is complete and achieves unbounded amount of jumps.
Hence, we conclude that $\mathcal{A}$ is pre-FTA with respect to $\mathcal{O}$ for $\mathcal{H}$.
\hfill $\triangle$
\label{ex:bouncing_pre_fta}
\end{example}

\subsection{Sufficient Conditions for FTA} \label{SectionpreFTAA}

\begin{theorem}[FTA via flows]\label{thm:FTAf}
Consider a hybrid system $\mathcal{H}=(C,F,D,G)$ and two sets $\mathcal{A}, \mathcal{O} \subset \mathbb{R}^n$. Suppose that the set $\mathcal{A}$ is closed and there exists an open set $\mathcal{N}$ that defines an open neighborhood of $\mathcal{A}$ such that $G(\mathcal{N}) \subset \mathcal{N} \subset \mathbb{R}^n$. Suppose that there exists a function $V: \mathcal{N} \rightarrow \mathbb{R}_{\geq 0}$ that is $\mathcal{C}^1$ on $\mathcal{N} \backslash \mathcal{A}$ and $\mathcal{O} \subset L_V(r) \cap (C \cup D)$ with $L_V(r) := \{x \in \mathbb{R}^n : V(x) \leq r\}$, $r \in [0, \infty]$, is a sublevel set of $V$ contained in $\mathcal{N}$.
Suppose that the conditions in Proposition~\ref{prop:preFTA1} hold.
Then, the set $\mathcal{A}$ is \emph{FTA} with respect to $\mathcal{O}$ for $\mathcal{H}$ if the following additional condition holds:
\begin{itemize}
	\item for every $x \in \mathcal{N} \cap (C \cup D) \backslash \mathcal{A}$, each $\phi \in \mathcal{S_H}(x)$ satisfies
	\begin{equation*}
		\tfrac{V^{1-c_2} (x)}{c_1 (1-c_2)} \leq \sup \{ t : (t,j) \in \dom\phi \} \mbox{.}
	\end{equation*}
\end{itemize}
\end{theorem}

\begin{proof}
The proof is the same as in Proposition \ref{prop:preFTA1}.
\end{proof}

\begin{theorem}[FTA via jumps]\label{thm:FTAj}
Consider a hybrid system $\mathcal{H}=(C,F,D,G)$ \blue{such that Assumption  \ref{assump:SA} holds},  and two sets $\mathcal{A}, \mathcal{O} \subset \mathbb{R}^n$. Suppose that the set $\mathcal{A}$ is closed and 
there exists an open set $\mathcal{N}$ that defines an open neighborhood of $\mathcal{A}$ such that $G(\mathcal{N}) \subset \mathcal{N} \subset \mathbb{R}^n$.
Suppose that there exists a function $W : \mathcal{N} \rightarrow \mathbb{R}_{\geq 0}$ that is $\mathcal{C}^1$ on $\mathcal{N} \backslash \mathcal{A}$ and $\mathcal{O} \subset L_W(r) \cap (C \cup D)$
where $L_W(r) = \{x \in \mathbb{R}^n : W(x) \leq r\}$, $r \in [0, \infty]$, is a sublevel set of $W$ contained in $\mathcal{N}$.
Suppose that the conditions in Proposition~\ref{prop:preFTA2} hold.
Then, the set $\mathcal{A}$ is \emph{FTA} with respect to $\mathcal{O}$ for $\mathcal{H}$ if the following additional condition holds:
\begin{itemize}
	\item for every $x \in \mathcal{N} \cap (C \cup D) \backslash \mathcal{A}$, each $\phi \in \mathcal{S}_{\mathcal{H}}(x)$ satisfies
	\begin{equation*}
    	\mbox{ceil} \big( \tfrac{W(x)}{c} \big) \leq \sup \{ j : (t,j) \in \dom\phi \} \mbox{.}
	\end{equation*}
\end{itemize}
\end{theorem}

\begin{proof}
The proof is the same as in Proposition \ref{prop:preFTA2}.
\end{proof}

\begin{theorem} 
[From pre to non-pre FTA]
\label{thm:FTA}
Consider a hybrid system $\mathcal{H} = (C,F,D,G)$ \blue{such that Assumption  \ref{assump:SA} holds}, and two sets 
$\mathcal{O} \subset C \cup D$ and $\mathcal{A} \subset \mathbb{R}^n$.
The set $\mathcal{A}$ is \emph{FTA} with respect to $\mathcal{O}$ for $\mathcal{H}$ if the set $\mathcal{A}$ is pre-FTA with respect to $\mathcal{O}$ for $\mathcal{H}$ and \ref{item:star} holds. 
\end{theorem}

\begin{proof}
The proof uses the exact same steps as in proof of Theorem \ref{thm:eventual_ci}. Hence, it is omitted.
\end{proof}

\begin{remark}
Note that both ECI and FTA suggest that the solutions starting from a given set reach a target set in finite time (ECI suggests, additionally, staying in the target after reaching it). However, the sufficient conditions in Theorems \ref{thm:pre_eventual_ci} and \ref{thm:pre_FTA} are quite different. Indeed, in the first case, we use comparison techniques, via an appropriate choice of scalar functions, to build a scalar version of the original system. As consequence, we show that ECI is satisfied for the original system if the same property holds for the scalar one.  However, in the second case, we look for appropriate scalar functions (Lyapunov candidates with respect to the target set), for which the strict decrease along the solutions to the original system implies reaching the target set in finite time.
\end{remark}

\section{Sufficient Conditions for $p \,\mathcal{U}_w q$ and $p \,\mathcal{U}_s q$}
\label{sec:sufficient_conditions_until}

In this section, we combine the results in Section \ref{sec:characterization} and the Lyapunov-like techniques developed in Section \ref{Section.CIECI} to propose sufficient infinitesimal conditions certifying the formulas $p \,\mathcal{U}_w q$ and $p \,\mathcal{U}_s q$.

\subsection{Certifying 
$p \,\mathcal{U}_w q$ using Sufficient Conditions for CI}

First, we present sufficient conditions that guarantee the satisfaction of the formula $p \,\mathcal{U}_w q$ by employing the sufficient conditions for conditional invariance in Proposition~\ref{prop:conditional_invariance}.

\begin{theorem}[$p \,\mathcal{U}_w q$ via CI] \label{thm:weak_until}
Consider a hybrid system $\mathcal{H} \!=\! (C,F,D,G)$. Given atomic propositions $p$ and $q$, let the sets $P$ and $Q$ be as in \eqref{eqn:K_sets} \blue{such that Assumptions \ref{assump:SA1} and \ref{assump:SA} hold}. Then, the formula $p \,\mathcal{U}_w q$ is satisfied for $\mathcal{H}$
if there exists a $\mathcal{C}^1$ barrier function candidate $B$ with respect to  
\blue{$ (\mathcal{O}, \mathcal{X}_u) := (P \backslash Q, \mathbb{R}^n \backslash (P \cup Q))$} for $\mathcal{H}$ as in \eqref{eqn:barrier_candidate_ci}
such that 
$$ K := \{x \in C \cup D \cup Q : B(x) \leq 0\} $$ 
is closed and the following hold:
\begin{itemize}
\item[1)] $\left< \nabla B(x), \eta \right> \leq 0$ for all $x \in (C \backslash Q) \cap (U(\partial K) \backslash K)$ and all $\eta \in F(x) \cap T_{C \backslash Q}(x)$.
\item[2)] $B(\eta) \leq 0$ for all $x \in K \cap (D \backslash Q)$ and all $\eta \in G(x)$.
\item[3)] $G(x) \subset C \cup D \cup Q$ for all $x \in K \cap (D \backslash Q)$.
\end{itemize}
\end{theorem}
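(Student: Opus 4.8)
The plan is to reduce this statement to two results already in hand: Theorem~\ref{thm:weakuntil_ci}, which says that $p \,\mathcal{U}_w q$ holds for $\mathcal{H}$ whenever $P \cup Q$ is CI with respect to $P \backslash Q$ for the auxiliary system $\mathcal{H}_w = (C_w, F_w, D_w, G_w)$ of \eqref{eqn:H_m}, and Proposition~\ref{prop:conditional_invariance}, item~1), which gives a barrier-function sufficient condition for exactly this kind of conditional invariance. So the only real work is to check that hypotheses 1)--3) of the theorem, stated in terms of the data of $\mathcal{H}$, are precisely what Proposition~\ref{prop:conditional_invariance}, item~1), asks for when applied to $\mathcal{H}_w$ with $\mathcal{O} := P \backslash Q$ and $\mathcal{X}_u := \mathbb{R}^n \backslash (P \cup Q)$, so that $\mathbb{R}^n \backslash \mathcal{X}_u = P \cup Q$.

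First I would record the elementary set identities for $\mathcal{H}_w$: since $C_w = C \backslash Q$ and $D_w = D \cup Q$, one has $C_w \cup D_w = C \cup D \cup Q$, and because $Q \subset P \cup Q$, $(C_w \cup D_w) \cap \mathcal{X}_u = (C \cup D) \cap \mathcal{X}_u$. Using $P \subset C \cup D$ from Assumption~\ref{assump:SA1}, both $\mathcal{O} = P \backslash Q$ and $\mathbb{R}^n \backslash \mathcal{X}_u = P \cup Q$ are contained in $C_w \cup D_w$, and the set $K$ defined in the theorem coincides with $\{x \in C_w \cup D_w : B(x) \le 0\}$. Combined with the barrier-candidate inequalities \eqref{eqn:barrier_candidate_ci} (whose second line, thanks to the identity above, need only be checked on $(C \cup D) \cap \mathcal{X}_u$), this shows that $B$ is a $\mathcal{C}^1$ barrier function candidate with respect to $(\mathcal{O}, \mathcal{X}_u)$ for $\mathcal{H}_w$, and $K$ is closed by assumption.

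Next I would verify the three conditions of \eqref{eqn:barrier_condition} for $\mathcal{H}_w$. The flow condition reads $\langle \nabla B(x), \eta \rangle \le 0$ for $x \in (U(\partial K) \backslash K) \cap C_w$ and $\eta \in F_w(x) \cap T_{C_w}(x)$; since $C_w = C \backslash Q$ and $F_w = F$, this is exactly item~1). For the jump conditions, split $D_w \cap K = (Q \cap K) \cup ((D \backslash Q) \cap K)$. On $Q \cap K$ we have $G_w(x) = x$, so $B(G_w(x)) = B(x) \le 0$ and $G_w(x) = x \in Q \subset C_w \cup D_w$ hold for free. On $(D \backslash Q) \cap K$ we have $G_w(x) = G(x)$, so ``$B(\eta) \le 0$ for all $\eta \in G_w(x)$'' is item~2) and ``$G_w(D_w \cap K) \subset C_w \cup D_w = C \cup D \cup Q$'' reduces to item~3). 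Hence all hypotheses of Proposition~\ref{prop:conditional_invariance}, item~1), are met, so $P \cup Q$ is CI with respect to $P \backslash Q$ for $\mathcal{H}_w$; Theorem~\ref{thm:weakuntil_ci} then yields that $p \,\mathcal{U}_w q$ is satisfied for $\mathcal{H}$.

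The step I expect to require the most care is the bookkeeping around the jump set of $\mathcal{H}_w$: one must keep straight that $D_w = D \cup Q$ carries two different branches of $G_w$, confirm that the branch on $Q$ contributes nothing (the identity map trivially keeps points of $Q \cap K$ inside $K$ and inside $C_w \cup D_w$), and confirm that enlarging the state-space description from $C \cup D$ to $C \cup D \cup Q$ introduces no new ``unsafe'' points — which is exactly why the identity $(C_w \cup D_w) \cap \mathcal{X}_u = (C \cup D) \cap \mathcal{X}_u$ is needed so that the weaker barrier-candidate hypothesis on $\mathcal{H}$ still certifies the corresponding hypothesis on $\mathcal{H}_w$. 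Everything else is a direct application of the cited results.
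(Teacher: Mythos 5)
Your proposal is correct and follows essentially the same route as the paper's own proof: verify that $B$ is a barrier function candidate with respect to $(P \backslash Q, \mathbb{R}^n \backslash (P \cup Q))$ for $\mathcal{H}_w$, check that items 1)--3) translate into the conditions \eqref{eqn:barrier_condition} for $\mathcal{H}_w$ (with the $Q$-branch of $G_w$ handled trivially by the identity map), and then apply Proposition~\ref{prop:conditional_invariance} item~1) followed by Theorem~\ref{thm:weakuntil_ci}. Your explicit bookkeeping of the identities $C_w \cup D_w = C \cup D \cup Q$ and $(C_w \cup D_w) \cap \mathcal{X}_u = (C \cup D) \cap \mathcal{X}_u$ is a slightly more careful rendering of a step the paper states in passing, but the argument is the same.
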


\begin{proof}
Let the system $\mathcal{H}_w \!=\! (C_w, F_w, D_w, G_w)$ be as in \eqref{eqn:H_m}. Since $K \!=\! \{x \in C \cup D \cup Q : B(x) \leq 0\}$ and the barrier function candidate $B$ satisfies $B(x) \leq 0$ for all $x \in P \backslash Q $ and $B(x) > 0$ for all $x \in (C \cup D) \backslash (P \cup Q) = (C \cup D \cup Q) \backslash (P \cup Q)$, we conclude that $B$ is a barrier candidate with respect to $(P \backslash Q, \mathbb{R}^n \backslash (P \cup Q))$ for $\mathcal{H}_w$ in \eqref{eqn:H_m}.
Furthermore, item 1) implies that $\left< \nabla B(x), \eta \right> \leq 0$ for all $x \in (U(\partial K) \backslash K) \cap C_w$ and all $\eta \in F(x) \cap T_{C_w}(x)$.
Item 2) implies that $B(\eta) \leq 0$ for all $x \in K \cap (D \backslash Q)$ and all $\eta \in G_w(x)$. Furthermore, when $x \in K \cap Q$, $G_w(x) \!=\! x$ and $B(x) \!\leq\! 0$. Hence, $B(\eta) \!\leq\! 0$ for all $x \in K \!\cap\! D_w$ and all $\eta \in G_w(x)$.
Item 3) implies that $G_w(K \!\cap\! (D \backslash Q)) \!\subset\! C_w \cup D_w$. Furthermore, $G_w(K \!\cap\! Q) \!\subset\! K \cap Q \!\subset\! C_w \cup D_w$. Hence, $G_w(K \cap D_w) \!\subset\! C_w \cup D_w$. Thus, using item 1) in Proposition \ref{prop:conditional_invariance} with $\mathcal{O}$ and $\mathcal{X}_u$ therein replace by $P \backslash Q$ and $\mathbb{R}^n \backslash (P \cup Q)$, respectively, we conclude that $P \cup Q$ is CI with respect to $P \backslash Q$ for $\mathcal{H}_w$. Hence, using Theorem \ref{thm:weakuntil_ci}, we conclude that $p \,\mathcal{U}_w q$ is satisfied for $\mathcal{H}$. 
\end{proof}

\begin{example}[Bouncing ball]\label{expBB1}
Consider the bouncing ball example in Example \ref{ex:bouncing_ball} to confirm the conclusions therein using Theorem \ref{thm:weak_until}. Consider the barrier function candidate $B(x) \!:=\! x_1 - \varepsilon$ with $\varepsilon \!>\! 0$.
Indeed, $B$ is a barrier function candidate with respect to $(P \backslash Q, \mathbb{R}^n \backslash (P \cup Q) )$ for $\mathcal{H}$ since $B(x) \!\leq\! 0$ for all $x \!\in\! P \backslash Q \!=\! [0, \varepsilon] \!\times\! \mathbb{R}_{<0}$ and $B(x) \!>\! 0$ for all $x \!\in\! (C \cup D) \backslash (P \cup Q) \!=\! (\varepsilon, \infty) \!\times\! \mathbb{R}_{<0}$.
Moreover, for all $x \!\in\! C \backslash Q \!=\! \mathbb{R}_{\geq 0} \!\times\! \mathbb{R}_{\leq 0}$, we have $\langle \nabla B(x), F(x) \rangle \!=\! x_2 \!\leq\! 0$; thus, item 1) holds. Moreover, for all $x \!\in\! K \cap D \!=\! D$, $B(G(x)) \!=\! B(x) \!\leq\! 0$; thus, item 2) holds. Finally, for all $x \!\in\! D$, $G(x) \!\in\! \{ 0 \} \!\times\! \mathbb{R}_{\geq 0} \!\subset\! C$; hence, item 3) holds. Thus, using Theorem \ref{thm:weak_until}, we conclude that $p \,\mathcal{U}_w q$ is satisfied for $\mathcal{H}$.
\hfill $\triangle$
\end{example}

In the following, item 1) in Theorems \ref{thm:strong_until}, \ref{thm:strong_ECI_f}, \ref{thm:strong_ECI_j}, and \ref{thm:strong_ECI_interval} can be guaranteed via Theorem \ref{thm:weak_until}.

\begin{example}[Thermostat] \label{exp:thermostat12}
\blue{We reconsider the thermostat hybrid model in Example \ref{exp:thermostat} in order to show that the formula $p \,\mathcal{U}_w q$ is satisfied for $\mathcal{H}$ using Theorem \ref{thm:weak_until}. Indeed, consider the barrier candidate $B(x) := (2h - 1) (z - z_{\max})$. It is easy to see that $B$ is a barrier candidate with respect to $( P \backslash Q, \mathbb{R}^n \backslash (P \cup Q) )$ for $\mathcal{H}$. Furthermore, for all $x \in C \backslash Q = (\{ 1 \} \times \mathbb{R}) \cup (\{ 0 \} \times (-\infty, z_{\max}) )$, we have $\langle \nabla B(x), F(x) \rangle = (2h - 1) (-z + z_o + z_\triangle h) \leq 0$; hence, item 1) holds. Furthermore, for all $x \in K \cap D = [1 \quad z_{\max}]^\top$, $B(G(x)) = B([0 \quad z_{\max}]^\top) \leq 0$; hence, item 2) holds.  Finally, for all $x \in D$, $G(x) \in C$; hence, item 3) holds. As a consequence, using Theorem \ref{thm:weak_until}, we conclude that the formula $p \,\mathcal{U}_w q$ is satisfied for $\mathcal{H}$.
\hfill $\triangle$}
\end{example}

\subsection{Certifying $p \,\mathcal{U}_s q$ using Sufficient Conditions for ECI via Flows and Jumps}

Next, we present sufficient conditions to guarantee the satisfaction of the formula $p \,\mathcal{U}_s q$ by using sufficient conditions for ECI in Theorem \ref{thm:pre_eventual_ci}. The following lemma is used in the proof of some of the following results.

\begin{lemma}
Consider a hybrid system $\mathcal{H} \!=\! (C,F,D,G)$ and propositions $p$ and $q$ such that $p \,\mathcal{U}_w q$ is satisfied for $\mathcal{H}$.
\begin{itemize}
	\item[1)] Every maximal solution to $\mathcal{H}_s$ from $P \backslash Q$ that never reaches $Q$ is also a maximal solution to $\mathcal{H}$.
	\item[2)] Every maximal solution to $\mathcal{H}$ from $P \backslash Q$ that never reaches $Q$ is also a maximal solution to $\mathcal{H}_s$.
\end{itemize}
\label{lemma:H_Hs}
\end{lemma}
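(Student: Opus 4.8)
The plan is to exploit the fact that $\mathcal{H}$ and $\mathcal{H}_s$ share the same data on the set $P \setminus Q$ (more precisely, $C_s = C_w \cap (P\cup Q)$ and $D_s = D_w \cap (P\cup Q)$, and on $P\setminus Q$ these agree with $C\setminus Q$ and $D\setminus Q$ respectively, while $F_s = F$ and $G_s = G$ there), and that $p\,\mathcal{U}_w q$ being satisfied for $\mathcal{H}$ forces every maximal solution of $\mathcal{H}$ from $P\setminus Q$ to stay in $P\setminus Q$ until it reaches $Q$. The key observation to establish is: if $\phi$ is a hybrid arc with $\rge\phi \subset P\setminus Q$, then $\phi$ is a solution to $\mathcal{H}$ if and only if it is a solution to $\mathcal{H}_s$. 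This equivalence is the engine of both statements; everything else is a maximality argument.

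For item 1), let $\phi$ be a maximal solution to $\mathcal{H}_s$ from $P\setminus Q$ that never reaches $Q$. Then $\rge\phi \subset C_s \cup D_s \subset P\cup Q$, and since $\phi$ never reaches $Q$, in fact $\rge\phi \subset P\setminus Q$ (here I would note $\phi(0,0)\in P\setminus Q$ and that the only way to be in $C_s\cup D_s$ but not in $Q$ is to be in $((C\setminus Q)\cap P)\cup((D\cap P)\setminus Q) \subset P\setminus Q$). By the equivalence above, $\phi$ is a solution to $\mathcal{H}$. To see it is maximal for $\mathcal{H}$: suppose $\psi$ is a nontrivial extension of $\phi$ that solves $\mathcal{H}$. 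Since $p\,\mathcal{U}_w q$ holds for $\mathcal{H}$ and $\phi$ (hence the restriction of $\psi$ to $\dom\phi$) stays in $P\setminus Q$ without reaching $Q$, $\psi$ must continue to remain in $P\setminus Q$ up to when it reaches $Q$; in particular one can choose the extension piece to lie in $P\setminus Q$. Then $\psi$, having range in $P\setminus Q$, is a solution to $\mathcal{H}_s$ by the equivalence, contradicting maximality of $\phi$ for $\mathcal{H}_s$. Hence $\phi$ is maximal for $\mathcal{H}$.

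For item 2), let $\phi$ be a maximal solution to $\mathcal{H}$ from $P\setminus Q$ that never reaches $Q$. Because $p\,\mathcal{U}_w q$ is satisfied for $\mathcal{H}$, $\phi$ remains in $P\setminus Q$ for its entire domain (it cannot do anything else: it either stays in $P$ forever or stays in $P$ until reaching $Q$, and the latter is excluded). So $\rge\phi\subset P\setminus Q$, and by the equivalence $\phi$ is a solution to $\mathcal{H}_s$. Maximality for $\mathcal{H}_s$ follows by the symmetric argument: any nontrivial $\mathcal{H}_s$-extension $\psi$ of $\phi$ has $\psi(\dom\phi)\subset P\setminus Q$ and, since $\mathcal{H}_s$-solutions from $P\setminus Q$ stay in $C_s\cup D_s\subset P\cup Q$ and $\psi$ would have to remain in $P\setminus Q$ up to reaching $Q$, we may take $\psi$ with range in $P\setminus Q$; then $\psi$ solves $\mathcal{H}$, contradicting maximality of $\phi$ for $\mathcal{H}$.

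The main obstacle, and the step I would write most carefully, is the equivalence "$\rge\phi\subset P\setminus Q \Rightarrow$ ($\phi$ solves $\mathcal{H}$ iff $\phi$ solves $\mathcal{H}_s$)". One must check all three solution conditions (S0)–(S2) from Definition~\ref{solution definition} on both sides: for (S1) one needs the flow sets to agree on $P\setminus Q$, i.e. $C\cap(P\setminus Q) = C_s\cap(P\setminus Q)$, which holds since $C_s = (C\setminus Q)\cap P$ and $P\subset C\cup D$; for (S2) one needs that a jump from a point of $P\setminus Q$ under $G$ lands where it lands under $G_s$, which is immediate since $G_s(x)=G(x)$ for $x\notin Q$. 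A subtle point is that the tangent-cone condition used implicitly in the construction of $C_s$ versus $C\setminus Q$ could differ at boundary points of $P$; but since solutions are only required to satisfy $\phi(t,j)\in C$ on the interior of flow intervals and to flow according to $F$, and $F_s=F$, no discrepancy arises as long as $\rge\phi$ avoids $Q$. I would also flag the minor care needed in the maximality arguments to ensure the extension piece can genuinely be chosen inside $P\setminus Q$ — this is exactly where the hypothesis that $p\,\mathcal{U}_w q$ holds for $\mathcal{H}$ is indispensable, as the counterexample in Example~\ref{ex:until_ci} shows it can fail otherwise.
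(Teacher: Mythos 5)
Your proposal follows essentially the same route as the paper's proof: both items are handled by contradiction, assuming a nontrivial extension $\psi$ of $\phi$ for the other system, using the satisfaction of $p\,\mathcal{U}_w q$ to localize where that extension can live, and then observing that the extension is also a solution of the first system because $\mathcal{H}$ and $\mathcal{H}_s$ share their data on $P \backslash Q$. Your explicit ``solution equivalence'' (an arc with range in $P\backslash Q$ solves $\mathcal{H}$ iff it solves $\mathcal{H}_s$) is left implicit in the paper but is correct and worth writing out; your verification of (S0)--(S2) for it is sound, since $C_s \cap (P\backslash Q) = C \cap (P \backslash Q)$, $D_s \backslash Q = D \cap P \backslash Q \subset D$, and $F_s = F$, $G_s = G$ off $Q$.

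The one place you are less careful than the paper is the claim that ``one can choose the extension piece to lie in $P\backslash Q$.'' That is not always possible: the only available nontrivial extension of $\phi$ may be a single jump that lands directly in $Q$ (e.g., when $\phi$ terminates at a point $x \in D \cap P \backslash Q$ with $G(x) \subset Q$). In that case your stated equivalence does not apply to the extended arc, since its range is no longer contained in $P\backslash Q$. The paper's proof explicitly covers this by allowing the second alternative --- the extension remains in $P\cup Q$ and reaches $Q$ only at the very last point of $\dom\phi \cup I$ --- and the contradiction still goes through because the jump originates from a point of $P\backslash Q$, where $D \cap P \subset D_s$ and $G_s = G$ (respectively, $D_s \backslash Q \subset D$ and $G = G_s$), so the one-step extension is a solution of the other system regardless of where it lands. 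This is a small, easily repaired omission rather than a structural flaw; adding that second case makes your argument complete.
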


\begin{proof}
To show item 1), we let $\phi$ be a maximal solution to $\mathcal{H}_s$ from $P \backslash Q$ that never reaches $Q$.
Proceeding by 
contradiction, we suppose that there exists a solution $\psi$ to $\mathcal{H}$ that is a nontrivial extension of $\phi$; namely, there exists $I \subset \mathbb{R}_{\geq 0} \times \mathbb{N}$ such that $I \neq \emptyset$, $\dom\phi \cup I = \dom\psi$, and $\dom\psi \backslash \dom\phi$ is nonempty. \blue{Note that $\phi$ is a maximal solution to $\mathcal{H}_s$ from $P \backslash Q$ that never reaches the closed set $Q$. Hence, $\phi$ is a solution to $\mathcal{H}$.  Using the $p \, U_w q$ property, we conclude that  $\phi(\text{dom} \phi) \subset P \backslash Q$. Finally, since by definition, $\psi$ is an extension of $\phi$, the relation
$\psi(\text{dom} \phi) = \phi(\text{dom} \phi) \subset P \backslash  Q$ follows.}
Now, since $\psi$ satisfies $p \,\mathcal{U}_w q$, we conclude that $\psi$, starting from $P \backslash Q$, either remains in $P \backslash Q$ for all hybrid time or remains in $P \backslash Q$ up to when it reaches $Q$. \blue{Now, we choose $I$ such that  $\psi$ remains in $P \cup Q$ on $\dom \phi \cup I$, and may reach $Q$ for the first time at the end of $\dom \phi \cup I$. However, in this case, we conclude that $\psi$ is solution to $\mathcal{H}_s$, which contradicts the fact that $\phi$ is a maximal solution to $\mathcal{H}_s$.}

We also proceed by contradiction to show item 2).
Assume the existence of a maximal solution $\phi$ to $\mathcal{H}$ from $P \backslash Q$ that never reaches $Q$.
Since $\phi$ never reaches $Q$ and satisfies $p \, \mathcal{U}_w q$, $\phi$ has to remain in $P \backslash Q$.
Then, suppose the existence of a solution $\psi$ to $\mathcal{H}_s$ that is a nontrivial extension of $\phi$; namely, there exists $I \subset \mathbb{R}_{\geq 0} \times \mathbb{N}$ such that $I \!\neq\! \emptyset$, $\dom\phi \cup I = \dom\psi$, and $\dom\psi \backslash \dom\phi$ is nonempty.
In particular, pick $I$ such that $\psi$ remains in $P \backslash Q$ on $\dom\phi \cup I$ or remains in $P \cup Q$ on $\dom \phi \cup I$, but reaches $Q$ for the first time at the end of $\dom\phi \cup I$.
However, the solution $\psi$ is also a solution $\mathcal{H}$ up to when $\psi$ reaches $Q$, contradicting maximality of $\phi$.
\end{proof}

\begin{theorem} [$p \,\mathcal{U}_s q$ using ECI]
\label{thm:strong_until}
Consider a hybrid system $\mathcal{H} \!=\! (C,F,D,G)$.
Let the system $\mathcal{H}_s = (C_s, F_s, D_s, G_s)$ be as in \eqref{eqn:H_m+}.
Given atomic propositions $p$ and $q$, let the sets $P$ and $Q$ be as in \eqref{eqn:K_sets} such that \blue{such that Assumptions \ref{assump:SA1} and \ref{assump:SA} hold}. Then, the formula 
$p \,\mathcal{U}_s q$ is satisfied for $\mathcal{H}$ if the following hold:
\begin{itemize}
    \item[1)] The formula $p \,\mathcal{U}_w q$ is satisfied for $\mathcal{H}$.
    \item[2)] There exist a $\mathcal{C}^1$ function $v : \mathbb{R}^n \rightarrow \mathbb{R}$, a locally Lipschitz function $f_c : \mathbb{R} \rightarrow \mathbb{R}$, and a constant $r_1 > 0$ such that the following hold:
     \begin{itemize}
        \item[2.1)] $\left< \nabla v(x), \eta \right> \leq f_c(v(x))$ for all $x \in \cl(C_s)$ and for all $\eta \in F(x) \cap T_{\cl(C_s)}(x)$;
        \item[2.2)] $v(\eta) \leq v(x)$ for all $x \in D \cap P$ and for all $\eta \in G(x)$;
        \item[2.3)]
        The solutions to $\dot{y} \!=\! f_c(y)$, starting from $v(P \backslash Q)$, converge to $(-\infty, r_1)$ in finite time.
	 \end{itemize}
	\item[3)] There exist a $\mathcal{C}^1$ function $w : \mathbb{R}^n \rightarrow \mathbb{R}$, a nondecreasing function $f_d : \mathbb{R} \rightarrow \mathbb{R}$, and a constant $r_2 > 0$ such that the following hold:
	\begin{itemize}
	    \item[3.1)] $\left< \nabla w(x), \eta \right> \leq 0$ for all $x \in \cl(C_s)$ and for all $\eta \in F(x) \cap T_{\cl(C_s)}(x)$;
	    \item[3.2)] $w(\eta) \!\leq\! f_d(w(x))$ for all $x \!\in\! D \cap P$ and for all $\eta \!\in\! G(x)$;
	    \item[3.3)]
	    The solutions to $z^+ \!=\! f_d(z)$, starting from $w(P \backslash Q)$, converge to $(-\infty, r_2)$ in finite time.
	\end{itemize}
	\item[4)] One of the following conditions holds:
	\begin{itemize}
		\item[4a)] Each complete solution to $\mathcal{H}$ starting from $P \backslash Q$ is eventually continuous and, with $r_1$ coming from item 2),
		\begin{equation}
			S_1 := \{x \in \cl(C_s) : v(x) < r_1\} \subset Q\mbox{.}
		\label{eqn:s1_strong}
		\end{equation}
    	\item[4b)] Each complete solution to $\mathcal{H}$ starting from $P \backslash Q$ is eventually discrete and, with $r_2$ coming from item 3),
    	\begin{equation}
			S_2 := \{x \in D_s : w(x) < r_2\} \subset Q\mbox{.}
		\label{eqn:s2_strong}
		\end{equation}
    	\item[4c)] Each complete solution to $\mathcal{H}$ starting from $P \backslash Q$ is eventually continuous, eventually discrete, or has a hybrid time domain that is unbounded in both the $t$ and the $j$ direction and, with $r_1$ and $r_2$ coming from item 2) and item 3) respectively, \eqref{eqn:s1_strong} and \eqref{eqn:s2_strong} hold.
    	\item[4d)] With $r_1$ and $r_2$ coming from item 2) and item 3) respectively, \eqref{eqn:s1_strong} and \eqref{eqn:s2_strong} hold,
    	and $G(S_2) \cap \cl(C_s) \subset S_1$.
	\end{itemize}
    \item[5)] No maximal solution to $\mathcal{H}$ has a finite escape time in $(P\backslash Q) \cap C$.
    \item[6)] Every maximal solution to $\mathcal{H}$ from $((P\backslash Q) \cap \partial C) \backslash D$ is nontrivial.
\end{itemize}
\end{theorem}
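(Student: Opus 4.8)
The plan is to reduce the statement to Theorem~\ref{thm:stronguntil_eci}: since item~1) already supplies the satisfaction of $p\,\mathcal{U}_w q$ for $\mathcal{H}$, it remains only to establish that the set $Q$ is ECI with respect to $P\cup Q$ for the auxiliary system $\mathcal{H}_s$ of \eqref{eqn:H_m+}. Solutions to $\mathcal{H}_s$ starting in $Q$ stay in $Q$ by construction ($G_s$ is the identity on $Q$ and $C_s\cap Q=\emptyset$), so the whole task is to show that every maximal solution to $\mathcal{H}_s$ starting from $P\backslash Q$ reaches $Q$ in finite hybrid time and remains there afterwards.

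First I would obtain pre-ECI of $Q$ with respect to $P\backslash Q$ for $\mathcal{H}_s$ by checking that items~2)--4) are exactly the hypotheses of Theorem~\ref{thm:pre_eventual_ci} applied to $\mathcal{H}_s$ with $\mathcal{O}:=P\backslash Q$ and $\mathcal{A}:=Q$. Indeed, conditions 2.1) and 3.1), stated over $\cl(C_s)$, yield the flow inequalities in items~1a) and 2a) of Theorem~\ref{thm:pre_eventual_ci} over $C_s$ because $C_s\subset\cl(C_s)$ and $T_{C_s}(x)\subset T_{\cl(C_s)}(x)$; conditions 2.2) and 3.2) over $D\cap P$, together with $G_s$ being the identity on $Q$ and $D_s=(D\cap P)\cup Q$, yield the jump inequalities over $D_s$; the reachability conditions 2.3) and 3.3) are 1b) and 2b) verbatim; and \eqref{eqn:s1_strong}, \eqref{eqn:s2_strong} imply \eqref{eq:S1}, \eqref{eq:S2} with $\mathcal{A}=Q$ (the $\cl(C_s)$ versions being stronger). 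The genuinely hybrid point is matching the case split 4a)--4d) against 3a)--3d): a solution to $\mathcal{H}_s$ from $P\backslash Q$ that ever reaches $Q$ is thereafter trapped in $Q\subseteq\mathcal{A}$, so the pre-ECI conclusion holds for it at once; a solution to $\mathcal{H}_s$ from $P\backslash Q$ that never reaches $Q$ coincides, by Lemma~\ref{lemma:H_Hs}, with a solution to $\mathcal{H}$ from $P\backslash Q$, so item~4) applies to it and the comparison-lemma arguments underlying the proof of Theorem~\ref{thm:pre_eventual_ci} force it to enter $S_1\cup S_2\subset Q$ in finite time. Hence $Q$ is pre-ECI with respect to $P\backslash Q$, and therefore with respect to $P\cup Q$, for $\mathcal{H}_s$.

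I would then upgrade pre-ECI to ECI using items~5) and~6). An incomplete maximal solution $\phi$ to $\mathcal{H}_s$ from $P\backslash Q$ cannot reach $Q$ (from a point of $Q$ it could only keep jumping, hence be complete), so $\rge\phi\subset P\backslash Q$ and, by Lemma~\ref{lemma:H_Hs}, $\phi$ is a maximal solution to $\mathcal{H}$ with $\rge\phi\subset(P\backslash Q)\cap(C\cup D)$. Items~5) and~6) are precisely the no-finite-escape and nontrivial-continuation conditions a) and b) in item~2) of Proposition~\ref{prop:conditional_invariance}, read over $C_s=(P\backslash Q)\cap C$ and $((P\backslash Q)\cap\partial C)\backslash D$; by the standard continuation argument they force $\phi$ to be complete, contradicting incompleteness. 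Thus every maximal solution to $\mathcal{H}_s$ from $P\backslash Q$ is complete and, by the previous step, reaches $Q$ and stays; together with the trivial behavior from $Q$ this gives ECI of $Q$ with respect to $P\cup Q$ for $\mathcal{H}_s$, and Theorem~\ref{thm:stronguntil_eci} then yields that $p\,\mathcal{U}_s q$ is satisfied for $\mathcal{H}$.

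The step I expect to be the main obstacle is the pre-ECI verification: correctly feeding hypotheses stated over $\cl(C_s)$ and over solutions to $\mathcal{H}$ into Theorem~\ref{thm:pre_eventual_ci}, which is stated over $C_s$, $D_s$ and solutions to $\mathcal{H}_s$ --- in particular handling the part of $D_s$ lying in $Q$, where $G_s$ is the identity, and bridging the two solution families through Lemma~\ref{lemma:H_Hs}. The boundary/closure bookkeeping in the upgrade step (identifying the relevant boundary set of $\mathcal{H}_s$ with $((P\backslash Q)\cap\partial C)\backslash D$) is a secondary delicate point.
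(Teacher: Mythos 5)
Your plan differs from the paper's in one structurally important way, and that difference is where the gap lies. The paper does \emph{not} route the argument through Theorem \ref{thm:stronguntil_eci}: for $p\,\mathcal{U}_s q$ it only needs every maximal solution to $\mathcal{H}$ from $P\backslash Q$ to \emph{reach} $Q$, and it proves this directly by a trichotomy (the solution reaches $Q$; it is incomplete and never reaches $Q$; it is complete and never reaches $Q$), killing the second case with items 5)--6) and \ref{assump:SA}, and the third by transferring the solution to $\mathcal{H}_s$ via Lemma \ref{lemma:H_Hs} and running the arguments from the proof of Theorem \ref{thm:pre_eventual_ci}. Crucially, the only solutions ever transferred to $\mathcal{H}_s$ are those that \emph{never} reach $Q$, so the behavior of $\mathcal{H}_s$ on $Q$ plays no role. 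Your route instead requires the full ECI of $Q$ with respect to $P\cup Q$ for $\mathcal{H}_s$, i.e.\ that solutions also \emph{stay} in $Q$ after reaching it, and your justification --- ``$G_s$ is the identity on $Q$ and $C_s\cap Q=\emptyset$, hence trapped'' --- is not valid under the paper's solution concept: a solution that lands on a point of $Q\cap\cl(C_s)$ after a jump (or starts there) is permitted to flow into $C_s$, which is disjoint from $Q$. This is exactly the phenomenon exhibited in Example \ref{counterexpECI}, which the paper constructs precisely to show that $p\,\mathcal{U}_s q$ can hold while ECI of $Q$ for $\mathcal{H}_s$ fails; Remark \ref{remark:eci_fta} makes the same point, namely that the ``stay in $Q$'' half of ECI is extraneous for strong until.

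The gap propagates to several places. First, ECI with respect to $P\cup Q$ requires handling solutions that start in $Q\cap\cl(C_s)$ and may flow out; their initial values need not lie in $v(P\backslash Q)$ or $w(P\backslash Q)$, so items 2.3) and 3.3) do not obviously apply to them. Second, your case-split bookkeeping for item 3) of Theorem \ref{thm:pre_eventual_ci} dismisses solutions of $\mathcal{H}_s$ that reach $Q$ on the grounds that they are trapped --- the same unproven claim --- and the completeness argument in your upgrade step (``from a point of $Q$ it could only keep jumping'') leans on it again. Your hypothesis matching (2.1)--3.3) against 1a)--2b) of Theorem \ref{thm:pre_eventual_ci}, using $T_{C_s}(x)=T_{\cl(C_s)}(x)$ and the identity jump on $Q$) and your treatment of the incomplete case via items 5)--6) are essentially the paper's and are fine. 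The clean fix is the paper's: drop the detour through Theorem \ref{thm:stronguntil_eci}, show directly that every maximal solution to $\mathcal{H}$ from $P\backslash Q$ reaches $Q$, and conclude $p\,\mathcal{U}_s q$ from item 1) plus reachability of $Q$, without ever asserting invariance of $Q$ for $\mathcal{H}_s$.
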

\begin{proof}
By item 1), every maximal solution to $\mathcal{H}$ starting from $P \cup Q$ satisfies $p \,\mathcal{U}_w q$. Hence, it remains to show that every maximal solution to $\mathcal{H}$ starting from $P \backslash Q$ reaches $Q$.
To this end, note that each maximal solution $\phi$ to $\mathcal{H}$ from $P \backslash Q$ must satisfy one of the following conditions:
\begin{itemize}
	\item[a)] $\phi$ reaches $Q$ in finite hybrid time;
	\item[b)] $\phi$ is not complete and does not reach $Q$ in finite hybrid time; or
	\item[c)] $\phi$ is complete and does not reach $Q$ in finite hybrid time.
\end{itemize}

In the rest of the proof, we show that $\phi$ can only satisfy case a).
To do so, we first show that case b) is not possible, due to items 5) and 6), using contradiction.
That is, suppose $\phi$ is not complete and never reaches $Q$; in particular, $\dom\phi$ is bounded.
Let $(T,J) \!=\! \sup\dom\phi$. 
Due to the fact that $\phi$ never reaches $Q$ and since $\phi$ satisfies $p \,\mathcal{U}_w q$, we conclude that $\phi$ remains in $P \backslash Q$. 
Moreover, under item 5),
the maximal solution $\phi$ does not have a finite escape time inside $(P \backslash Q) \cap C$, which implies that $(T,J) \!\in\! \dom\phi$.
Now, by the definition of solutions to $\mathcal{H}$, $\phi(T,J) \in \cl(C) \cup D$.
First, let $\phi(T,J) \in D$.
In this case, $\phi$ can be extended via a jump.
Next, let $\phi(T,J) \in \cl(C) \backslash D$.
In this case, when $\phi(T,J) \in \textrm{int}(C) \backslash D$, we use Assumption \ref{assump:SA} to conclude that $\phi$ can be extended via flow; and for the case when $\phi(T,J) \in \partial C \backslash D$, we use item 6) to conclude that $\phi$ can be extended via flow. 
Therefore, if $(T,J) \in \dom\phi$, then $\phi$ can be extended via flow or a jump. This contradicts maximality of $\phi$; and thus, case b) is not possible.

Next, we show that case c) is not possible due to items 2)-4) using contradiction. Suppose that items 2), 3), and 4a) hold.
Suppose that there exists a complete solution $\phi$ to $\mathcal{H}$ that does not reach $Q$ in finite hybrid time.
By Lemma \ref{lemma:H_Hs},
$\phi$ is also a maximal solution to $\mathcal{H}_s$.
However, using the arguments in a) in the proof of Theorem \ref{thm:pre_eventual_ci}, there must exist $(t^\star, j^\star) \!\in\! \dom\phi$ such that $\phi(t,j) \in S_1 \!\subset\! Q$ for all $(t,j) \!\in\! \dom\phi$ and $t+j \!\geq\! t^\star+j^\star$.
This implies that $\phi$ must reach $Q$ in finite hybrid time via flow.
Next, suppose that items 2), 3), and 4b) hold.
Proceeding as when 4a) holds, we use Lemma \ref{lemma:H_Hs} to conclude that $\phi$ is a maximal solution to $\mathcal{H}_s$.
Furthermore, using the arguments in b) in the proof of Theorem \ref{thm:pre_eventual_ci},
we conclude the existence of $(t^\star, j^\star) \!\in\! \dom\phi$ such that $\phi(t,j) \!\in\! S_2 \!\subset\! Q$ for all $(t,j) \!\in\! \dom\phi$ and $t+j \!\geq\! t^\star+j^\star$.
This implies that $\phi$ must reach $Q$ in finite hybrid time by jumps.
Similarly, suppose that items 2) and 3) hold and either item 4c) or item 4d) holds.
Using Lemma \ref{lemma:H_Hs}
and the arguments in the proof of Theorem \ref{thm:pre_eventual_ci}, we conclude that there exists $(t^\star, j^\star) \!\in\! \dom\phi$ such that $\phi(t,j) \!\in\! S_1 \cup S_2 \!\subset\! Q$ for all $(t,j) \!\in\! \dom\phi$ and $t+j \!\geq\! t^\star+j^\star$.
This implies that $\phi$ must reach $Q$ in finite hybrid time via flow or jumps.
Therefore, we conclude that case c) is not possible.
\end{proof}

\begin{remark}
We note that sufficient conditions for the satisfaction of $p \,\mathcal{U}_s q$ for $\mathcal{H}$ do not require solutions to $\mathcal{H}$ to stay in the target set $Q$ after reaching it.
Hence, when sufficient conditions that guarantee ECI are employed to derive sufficient conditions for strong until, item 4) in Theorem \ref{thm:strong_until} can be relaxed since item 4) is for guaranteeing solutions to $\mathcal{H}$ to stay in the target set $Q$ after reaching it.
In particular, $G(S_2) \cap \cl(C_s) \subset S_1$, as in item 4b) in Theorem \ref{thm:strong_until}, is not really needed since it requires solutions to $\mathcal{H}$ to stay in $Q$ after reaching $S_2 \subset Q$. 
\label{remark:eci_fta}
\end{remark}

The following example illustrates Theorem \ref{thm:strong_until}.

\begin{example}
[Thermostat]
\label{ex:thermostat_eci}
Consider the hybrid system $\mathcal{H} = (C,F,D,G)$ with the state $x \!:=\! (h,z) \!\in\! \{0,1\} \!\times\! \mathbb{R}$ modeling a controlled thermostat system in Example \ref{exp:thermostat}.
Following the formulation therein, a specification of interest is
that the room temperature decreases once it exceeds $z_{\max}$, which is related to the satisfaction of $p \,\mathcal{U}_s q$ for $\mathcal{H}$
with the propositions $p$ and $q$ given as in Example \ref{exp:thermostat}.
The sets $P$ and $Q$ in \eqref{eqn:K_sets} are given by $P \!=\! \{1\} \!\times\! (-\infty, z_{\max}]$ with $z_{\max} > 0$ and $Q \!=\! \{0\} \!\times\! [z_{\min}, \infty)$.
To show the satisfaction of $p \,\mathcal{U}_s q$ for $\mathcal{H}$, we apply Theorem \ref{thm:strong_until}.
First, consider the barrier function candidate $B(x) \!:=\! (-1)^h(z_{\max} \!-\! z)$.
Indeed, $B$ is a barrier function candidate with respect to $(P \backslash Q, (\{0,1\}\times \mathbb{R}) \backslash (P \cup Q))$ for $\mathcal{H}$ since
for all $x \!\in\! P \backslash Q \!=\! P$, $B(x) \!=\! z-z_{\max} \!\leq\! 0$; and for all $x \!\in\! (C \cup D) \backslash (P \cup Q) \!=\! (\{0\} \!\times\! (-\infty,z_{\min})) \!\cup\! (\{1\} \!\times\! (z_{\max}, \infty))$, $B(x) \!>\! 0$.
Moreover, for all $x \!\in\! C \backslash Q \!=\! \{1\} \!\times\! (-\infty, z_{\max}]$, $B(x) \!=\! z \!-\! z_{\max} \!\leq\! 0$; and thus, $(C \backslash Q) \cap (U(\partial K) \backslash K)\!=\! \emptyset$. Furthermore, for all $x \!\in\! K \cap (D \backslash Q) \!=\! \{(1, z_{\max})\}$, $G(x) \!=\! \{(0, z_{\max})\}  \!\subset\! C \cup D \cup Q$ and $B(G(x)) \!=\! 0$; hence, items 1) - 3) in Theorem \ref{thm:weak_until} hold. It follows that the formula $p \,\mathcal{U}_w q$ is satisfied for $\mathcal{H}$; and thus, item 1) holds.

Next, consider the functions $v(x) \!=\! -z \!+\! z_o \!+\! z_{\triangle}$ and $f_c(y) \!=\! -y$.
Recall that $z_{\max} \!<\! z_o \!+\! z_{\triangle}$.
For all $x \!\in\! \cl(C_s) \!=\! \{1\} \!\times\! (-\infty, z_{\max}]$, $\left< \nabla v(x), F(x) \right> \!=\! z \!-\! z_o \!-\! z_{\triangle} \!\leq\! f_c(v(x)) \!=\! z \!-\! z_o \!-\! z_{\triangle}$; hence, item 2.1) holds. Moreover, for all $x \!\in\! D \cap P \!=\! \{(1, z_{\max})\}$,
$v(G(x)) \!-\! v(x) \!=\! 0$; hence, item 2.2) holds.
Furthermore, the solutions to $\dot{y} \!=\! f_c(y)$ from $v(P \backslash Q) \!=\! [z_o \!+\! z_{\triangle} \!-\! z_{\max}, \infty)$ reach $(-\infty, z_o \!+\! z_{\triangle} - z_{\max})$; and thus, item 2.3) holds for $r_1 \!=\! z_o \!+\! z_{\triangle}-z_{\max}$.
Moreover, $S_1 \!=\! \{x \in \cl(C_s): v(x) < z_o \!+\! z_{\triangle}-z_{\max}\}$ is empty.

On the other hand, for the functions $w(x) \!=\! -z \!+\! z_{\max}$ and $f_d(z) \!=\! z/2$,
for all $x \!\in\! \cl(C_s)$, $\left< \nabla w(x), F(x) \right> \!=\! z \!-\! z_o \!-\! z_{\triangle} < 0$ since $z_{\max} \!<\! z_o \!+\! z_{\triangle}$; and for all $x \!\in\! D \cap P \!=\! \{(1, z_{\max})\}$, $w(G(x)) \!=\! 0 \!=\! f_c(w(x))$. Hence, we conclude that items 3.1) and 3.2) hold. Moreover, the solutions to $z^+ \!=\! f_d(z)$ starting from $w(P \backslash Q) \!=\! (0, \infty)$ reach $(-\infty, z_{\max} \!-\! z_{\min})$; and thus, item 3.3) holds for $r_2 \!=\! z_{\max} \!-\! z_{\min}$.
Moreover, every complete solution to $\mathcal{H}_s$ is eventually discrete due to the jump map $G_s$ and $S_2 \!=\! \{0\} \!\times\! (z_{\min}, \infty) \!\subset\! Q$.
Hence, with $S_1$ and $S_2$ satisfying \eqref{eqn:s1_strong} and \eqref{eqn:s2_strong}, item 4c) holds.
Furthermore, since the flow map $F$ has global linear growth on $(P \backslash Q) \cap C$, the solutions to $\mathcal{H}$ do not have a finite escape time inside $(P \backslash Q) \cap C$; hence, item 5) holds.
Finally, since $(P \backslash Q) \cap \partial C \!=\! \emptyset$, item 6) holds. Thus, we conclude that $p \,\mathcal{U}_s q$ is satisfied for $\mathcal{H}$.
\hfill $\triangle$
\end{example}

\nDraft{\vspace{-0.2in}}
\subsection{Certifying 
$p \,\mathcal{U}_s q$ using Sufficient Conditions for ECI via Flows}

The following result follows from Proposition \ref{prop:pre_ECI_f}.

\begin{theorem} [$p \,\mathcal{U}_s q$ using ECI via flows]
\label{thm:strong_ECI_f}
Consider a hybrid system $\mathcal{H} \!=\! (C,F,D,G)$.
Given atomic propositions $p$ and $q$, let the sets $P$ and $Q$ be as in \eqref{eqn:K_sets} \blue{such that Assumptions \ref{assump:SA1} and \ref{assump:SA} hold}.
Let $C_s$ and $D_s$ be as in \eqref{eqn:H_m+}.
Then, the formula $p \,\mathcal{U}_s q$ is satisfied for $\mathcal{H}$ if the following hold:
\begin{itemize}
    \item[1)] The formula $p \,\mathcal{U}_w q$ is satisfied for $\mathcal{H}$.
    
    \item[2)] There exist a $\mathcal{C}^1$ function $v : \mathbb{R}^n \rightarrow \mathbb{R}$, a locally Lipschitz function $f_c : \mathbb{R} \rightarrow \mathbb{R}$, and a constant $r_1 > 0$ such that the following hold:
	\begin{itemize}
        \item[2.1)] $\left< \nabla v(x), \eta \right> \leq f_c(v(x))$ for all $x \in \cl(C_s)$ and for all $\eta \in F(x) \cap T_{\cl(C_s)}(x)$;
        \item[2.2)] $v(\eta) \leq v(x)$ for all $x \in D \cap P$ and for all $\eta \in G(x)$;
        \item[2.3)]
		The solutions $\dot{y} = f_c(y)$
        starting from $v(P \backslash Q)$ converge to $(-\infty, r_1)$ in finite time, and $S_1 := \{ x \in \cl(C_s) : v(x) < r_1 \} \subset Q$.
	\end{itemize}
	\item[3)] For each solution $\phi \in \mathcal{S_H}(P \backslash Q)$, there exists a solution $y$ to $\dot{y} = f_c(y)$ starting from $v(\phi(0,0))$ such that there exists $t^\star \geq 0$ satisfying:
\begin{align} \label{eqtstar}
\!\!\!\!\!\!\!\!t^\star \!\leq\! \sup \{ t : (t,j) \!\in\! \dom\phi \}\mbox{,} ~y(t) \!\in\! (-\infty, r_1) ~\forall t \geq t^\star\mbox{.}
\end{align}
\end{itemize}
\end{theorem}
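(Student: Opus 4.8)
The plan is to reduce the statement to showing that every maximal solution of $\mathcal{H}$ starting in $P\backslash Q$ reaches $Q$ in finite hybrid time, and then to obtain this from the comparison argument underlying Proposition~\ref{prop:pre_ECI_f} applied to the auxiliary system $\mathcal{H}_s$ of \eqref{eqn:H_m+}. First I would record that, by item~1), $p\,\mathcal{U}_w q$ holds for $\mathcal{H}$; hence, by Definition~\ref{def:strong_until}, solutions starting in $Q$ satisfy $p\,\mathcal{U}_s q$ at $(0,0)$ trivially, and for a maximal solution $\phi$ with $\phi(0,0)\in P\backslash Q$ the weak-until property already forces $p=1$ along $\phi$ up to the first hybrid time at which $\phi$ meets $Q$. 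So it suffices to prove that such a $\phi$ does reach $Q$.

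Arguing by contradiction, suppose some maximal $\phi\in\mathcal{S_H}(P\backslash Q)$ never reaches $Q$. Then, since $p\,\mathcal{U}_w q$ is satisfied, $\phi$ stays in $P\backslash Q$ for all hybrid time, and by Lemma~\ref{lemma:H_Hs}(2) it is also a maximal solution to $\mathcal{H}_s$; this is the step that legitimizes using conditions~2.1)--2.3), which are stated on the data of $\mathcal{H}_s$. Next I would run the comparison step: along the intervals of flow of $\phi$ one has $\phi(t,j)\in C_s\subset\cl(C_s)$ and $\dot{\phi}(t,j)\in F(\phi(t,j))\cap T_{\cl(C_s)}(\phi(t,j))$, so item~2.1) together with the comparison lemma \cite[Lemmas A.1-A.3]{han.arXiv21} gives $v(\phi(t,j))\le y(t)$ over each such interval, where $y$ solves $\dot{y}=f_c(y)$ with $y(0)=v(\phi(0,0))$; across jumps $\phi$ leaves $D_s\backslash Q\subset D\cap P$, where item~2.2) yields $v(\phi(t,j+1))\le v(\phi(t,j))$. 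Concatenating over $\dom\phi$ gives $v(\phi(t,j))\le y(t)$ for all $(t,j)\in\dom\phi$, with the finite-escape-time bookkeeping handled exactly as in the footnote in the proof of Theorem~\ref{thm:pre_eventual_ci}.

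By item~3), the chosen $y$ satisfies $y(t)<r_1$ for all $t\ge t^\star$ with $t^\star\le\sup\{t:(t,j)\in\dom\phi\}$, so there is $(t^\star,j^\star)\in\dom\phi$ with $v(\phi(t^\star,j^\star))\le y(t^\star)<r_1$. Since $\phi$ stays in $P\backslash Q$, we have $\phi(t^\star,j^\star)\in\cl(C_s)$, hence $\phi(t^\star,j^\star)\in S_1\subset Q$ by item~2.3), contradicting that $\phi$ never reaches $Q$. Therefore every maximal solution of $\mathcal{H}$ from $P\backslash Q$ reaches $Q$, and $p\,\mathcal{U}_s q$ is satisfied for $\mathcal{H}$.

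The hard part will be the comparison estimate combined with placing the contradiction point correctly: one must ensure that a hybrid time at which $v(\phi)<r_1$ can be taken with $\phi(t^\star,j^\star)\in\cl(C_s)$ rather than only in $D_s\setminus\cl(C_s)$, and that possible finite escape of $y$ or of $\phi$ does not destroy the bound $v(\phi(t,j))\le y(t)$; both are dealt with by the same closure and escape-time arguments already used in the proofs of Theorems~\ref{thm:pre_eventual_ci} and \ref{thm:strong_until}. Everything else is routine bookkeeping on hybrid time domains.
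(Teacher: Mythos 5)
Your proposal is correct and follows essentially the same route as the paper's proof: both hinge on Lemma~\ref{lemma:H_Hs} to transfer maximality between $\mathcal{H}$ and $\mathcal{H}_s$, on the comparison bound $v(\phi(t,j))\le y(t)$ obtained from items 2.1)--2.2), and on item 3) to place $t^\star$ inside $\dom\phi$. The only cosmetic difference is that you argue directly by contradiction that every maximal solution from $P\backslash Q$ must reach $Q$, whereas the paper packages the identical argument as ECI of $Q$ with respect to $P\backslash Q$ for $\mathcal{H}_s$ via Theorem~\ref{thm:ECIf} and then concludes through Theorem~\ref{thm:stronguntil_eci}.
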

\begin{proof}
Consider the system $\mathcal{H}_s$ introduced in \eqref{eqn:H_m+}.
Using Theorem \ref{thm:stronguntil_eci}, we show that $Q$ is ECI with respect to $P \backslash Q$ for $\mathcal{H}_s$ to conclude that $p \,\mathcal{U}_s q$ is satisfied for $\mathcal{H}$.
To this end, we show that $Q$ is ECI with respect to $P \backslash Q$ for $\mathcal{H}_s$ via Theorem \ref{thm:ECIf}.
First, we show that items 1) and 3) in Proposition \ref{prop:pre_ECI_f}, required in Theorem \ref{thm:ECIf}, hold for $\mathcal{H}_s$.
Notice that under item 2), item 1) in Proposition \ref{prop:pre_ECI_f} holds for $\mathcal{H}_s$.
Moreover, item 3) in Proposition \ref{prop:pre_ECI_f} is verified for $\mathcal{H}_s$ \blue{since the jumps from $Q$ remain in $Q$ due to the definition of the jump map $G_w$}, which is $G_w(x) \!=\! x$ for all $x \!\in\! Q$.
Finally, to show that $Q$ is ECI with respect to $P \backslash Q$ for $\mathcal{H}_s$ via Theorem \ref{thm:ECIf}, we show that, for each maximal solution $\phi$ to $\mathcal{H}_s$ starting from $P \backslash Q$, there exists a solution $y$ to $\dot{y} = f_c(y)$ starting from $v(\phi(0,0))$ satisfying $y(t) \!\in\! (-\infty, r_1]$ for all $t \geq t^\star$, for some nonnegative $t^\star \leq \sup \{ t : (t,j) \in \dom\phi \}$.
To show this, we first use item 1) and the construction of $\mathcal{H}_s$, to conclude that, each maximal solution $\phi$ to $\mathcal{H}_s$ starting from $P \backslash Q$ remains in $P \cup Q$. Hence, either $\phi$ reaches $Q$ in finite time, or $\phi$ remains in $P \backslash Q$.
Next, by Lemma \ref{lemma:H_Hs}, we conclude that $\phi$ is a maximal solution to $\mathcal{H}$.
Finally, using item 3), we conclude the existence of a solution $y$ to $\dot{y} = f_c(y)$ starting from $v(\phi(0,0))$ such that, for some $t^\star \geq 0$, \eqref{eqtstar} holds; and thus, we conclude that $Q$ is ECI with respect to $P \backslash Q$ for $\mathcal{H}_s$ via Theorem \ref{thm:ECIf}.
Therefore, via Theorem \ref{thm:stronguntil_eci}, the formula $p \,\mathcal{U}_s q$ is satisfied for $\mathcal{H}$, which completes the proof.
\end{proof}

\nDraft{\vspace{-0.15in}}
\subsection{Certifying 
$p \,\mathcal{U}_s q$ using Sufficient Conditions for 
ECI via Jumps}

The following result follows Proposition \ref{prop:pre_ECI_j}.

\begin{theorem} [$p \,\mathcal{U}_s q$ using ECI via jumps] \label{thm:strong_ECI_j}
Consider a hybrid system $\mathcal{H} = (C,F,D,G)$.
Given atomic propositions $p$ and $q$, let the sets $P$ and $Q$ be as in \eqref{eqn:K_sets} \blue{such that Assumptions \ref{assump:SA1} and \ref{assump:SA} hold}.
Let $C_s$ and $D_s$ be as in \eqref{eqn:H_m+}.
Then, the formula $p \,\mathcal{U}_s q$ is satisfied for $\mathcal{H}$ if the following hold:
\begin{itemize}
    \item[1)] The formula $p \,\mathcal{U}_w q$ is satisfied for $\mathcal{H}$.
	\item[2)] There exist a $\mathcal{C}^1$ function $w : \mathbb{R}^n \rightarrow \mathbb{R}$, $f_d : \mathbb{R} \rightarrow \mathbb{R}$ which is nondecreasing, and a constant $r_2 > 0$ such that the following hold:
	\begin{itemize}
		\item[2.1)] $\left< \nabla w(x), \eta \right> \leq 0$ for all $x \in \cl(C_s)$ and for all $\eta \in F(x) \cap T_{\cl(C_s)}(x)$;
		\item[2.2)] $w(\eta) \!\leq\! f_d(w(x))$ for all $x \!\in\! D \cap P$ and for all $\eta \!\in\! G(x)$;
	    \item[2.3)]
	    The solutions to $z^+ = f_d(z)$ starting from 
	    $w(P \backslash Q)$ converge to $(-\infty, r_2)$ in finite time, and $S_2 := \{ x \in D_s \cup \cl(C_s) : w(x) < r_2 \} \subset Q$.
	\end{itemize}
	\item[3)] For each solution $\phi \in \mathcal{S_H}(P \backslash Q)$, there exists a solution $z$ to $z^+ = f_d(z)$ starting from $v(\phi(0,0))$ such that there exists $j^\star \in \mathbb{N}$ satisfying: 
$$ \!\!j^\star \leq \sup \{ j : (t,j) \in \dom\phi \}, ~ z(j) \in (-\infty, r_2) ~\forall j \geq j^\star\mbox{.} $$
\end{itemize}
\end{theorem}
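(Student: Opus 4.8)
The plan is to mirror the proof of Theorem~\ref{thm:strong_ECI_f}, but invoking the jump-based ECI result (Theorem~\ref{thm:ECIj}, which rests on Proposition~\ref{prop:pre_ECI_j}) in place of the flow-based one. I would first fix the auxiliary system $\mathcal{H}_s = (C_s, F_s, D_s, G_s)$ from \eqref{eqn:H_m+} and, using Theorem~\ref{thm:stronguntil_eci}, reduce the claim to two facts: that $p\,\mathcal{U}_w q$ holds for $\mathcal{H}$ (this is exactly item~1)) and that $Q$ is ECI with respect to $P\backslash Q$ for $\mathcal{H}_s$. Everything then comes down to proving the ECI property via Theorem~\ref{thm:ECIj} with $\mathcal{O} := P\backslash Q$ and $\mathcal{A} := Q$.

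To apply Theorem~\ref{thm:ECIj} I would verify items~1) and 2) of Proposition~\ref{prop:pre_ECI_j} for $\mathcal{H}_s$, together with the extra hypothesis of Theorem~\ref{thm:ECIj}. For item~1): the infinitesimal inequality $\langle \nabla w(x), \eta\rangle \le 0$ on $C_s$ follows from 2.1) since $C_s \subset \cl(C_s)$ and $T_{C_s}(x) \subset T_{\cl(C_s)}(x)$ (and $F_s = F$ on $C_s$); the inequality $w(\eta) \le f_d(w(x))$ on $D_s = (D\cap P)\cup Q$ follows from 2.2) on $(D\cap P)\backslash Q$, while on $Q$ the jump is the identity so one needs $w(x)\le f_d(w(x))$ there; and the inclusion $\{x\in C_s\cup D_s : w(x)<r_2\}\subset Q$ follows from 2.3), which demands the larger set $S_2 = \{x\in D_s\cup\cl(C_s) : w(x)<r_2\}$ to lie in $Q$. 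For item~2) of Proposition~\ref{prop:pre_ECI_j} and the additional condition of Theorem~\ref{thm:ECIj} --- namely that every $\phi\in\mathcal{S}_{\mathcal{H}_s}(P\backslash Q)$ admits a solution $z$ to $z^+ = f_d(z)$ from $w(\phi(0,0))$ entering $(-\infty,r_2)$ at some $j^\star\le\sup\{j:(t,j)\in\dom\phi\}$ --- I would split on whether $\phi$ reaches $Q$. If $\phi$ reaches $Q$, then since $Q\cap C_s=\emptyset$ and $G_s(x)=x$ on $Q$, the solution jumps forever, so $\sup_j\dom\phi=\infty$ and, by 2.3) applied to $w(\phi(0,0))\in w(P\backslash Q)$, any finite $j^\star$ works. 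If $\phi$ never reaches $Q$, then by item~1) of the theorem and the construction of $\mathcal{H}_s$ it stays in $P\backslash Q$, so item~1) of Lemma~\ref{lemma:H_Hs} shows $\phi$ is a maximal solution to $\mathcal{H}$, and item~3) of the theorem then furnishes the required $z$.

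Once the hypotheses are checked, Theorem~\ref{thm:ECIj} gives that $Q$ is ECI with respect to $P\backslash Q$ for $\mathcal{H}_s$, and Theorem~\ref{thm:stronguntil_eci} together with item~1) yields $p\,\mathcal{U}_s q$ for $\mathcal{H}$. The main obstacle I anticipate is not conceptual but a matter of careful bookkeeping: reconciling the set $D_s\cup\cl(C_s)$ on which $S_2$ is defined in the hypotheses with the set $C_s\cup D_s$ appearing in Proposition~\ref{prop:pre_ECI_j}, and, in the same vein, extending the jump inequality $w(\eta)\le f_d(w(x))$ from $D\cap P$ to all of $D_s$ using only that the jump is the identity on $Q$ and that $f_d$ is nondecreasing --- plus making the Lemma~\ref{lemma:H_Hs} passage between $\mathcal{H}$- and $\mathcal{H}_s$-maximality fully precise in the branch where $\phi$ stays in $P\backslash Q$.
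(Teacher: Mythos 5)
Your proposal is correct and follows essentially the same route as the paper, whose entire proof of this theorem is the single remark that one repeats the argument of Theorem~\ref{thm:strong_ECI_f} with Theorem~\ref{thm:ECIj} in place of Theorem~\ref{thm:ECIf}. Your fleshed-out verification of the hypotheses of Proposition~\ref{prop:pre_ECI_j} for $\mathcal{H}_s$ (including the case split on whether $\phi$ reaches $Q$ and the appeal to Lemma~\ref{lemma:H_Hs}) is exactly the content that the paper's one-line proof implicitly delegates to the reader, and the bookkeeping points you flag are handled in the paper's proof of Theorem~\ref{thm:strong_ECI_f} in the same way you suggest.
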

\begin{proof}
The proof follows the exact same steps used to prove Theorem \ref{thm:strong_ECI_f} while using Theorem \ref{thm:ECIj} instead of Theorem \ref{thm:ECIf}.
\end{proof}

\subsection{Certifying 
$p \,\mathcal{U}_s q$ using Sufficient Conditions for 
ECI via Approximate Flow Length}

Next, we employ the conditions for pre-ECI in Theorem~\ref{thm:pre_ECI_interval} for hybrid systems when we know the lengths of the flow interval between each successive jumps approximately.

\begin{theorem}[$p \,\mathcal{U}_s q$ using ECI and  Theorem~\ref{thm:pre_ECI_interval} $+$ Theorem~\ref{thm:eventual_ci}]
\label{thm:strong_ECI_interval}
Consider a hybrid system $\mathcal{H} = (C,F,D,G)$. Given atomic propositions $p$ and $q$, let the sets $P$ and $Q$ be as in \eqref{eqn:K_sets} \blue{such that Assumptions \ref{assump:SA1} and \ref{assump:SA} hold}. Let $K:= \text{int}(Q)$ and $\mathcal{I}_{P,K}$ be as in Definition \ref{def:flow_length}, and let $\tau_M := \sup\mathcal{I}_{P,K}$.
Then, the formula $p \,\mathcal{U}_s q$ is satisfied for $\mathcal{H}$ if the following hold:

\begin{itemize}
    \item[1)] The formula $p \,\mathcal{U}_w q$ is satisfied for $\mathcal{H}$.
    \item[2)] There exist a $\mathcal{C}^1$ function $v : \mathbb{R}^n \rightarrow \mathbb{R}$, a locally Lipschitz function $f_c : \mathbb{R} \rightarrow \mathbb{R}$, and a function $f_d : \mathbb{R} \rightarrow \mathbb{R}$ which is nondecreasing such that
	\[
	\begin{split}
	\!\!\!\!\left< \nabla v(x), \eta \right> \!\leq\! f_c(v(x)) & \:\:\forall x \!\in\! (C \backslash K) \cap P, \forall \eta \!\in\! F(x) \!\cap\! T_C(x)\mbox{,}\\ 
	\!\!\!\!v(\eta) \!\leq\! f_d(v(x))\qquad\:\: & \:\:\forall x \in (D \backslash K) \cap P, \forall \eta \in G_w(x)\mbox{.}
	\end{split}
	\]
	\item[3)] There exists a constant $r > 0$ such that 
	\[
	S := \{ x \in (C \cup D) \cap (P\cup Q) : v(x) < r \} \subset Q\mbox{.}
	\]
	\item[4)] The maximal solutions to the reduced hybrid system $\mathcal{H}_{r}$ starting from $v(P \backslash K) \times \{ 0 \}$ converge to $(-\infty, r] \times \mathbb{R}_{\geq 0}$ in finite time, where  
\begin{equation*}
\!\!\!\mathcal{H}_r:
\left\{\!\!\!
\begin{array}{cll}
	\Big[\!
	\begin{array}{c}	
 		\dot{y}\\
 		\dot{\tau}
 	\end{array}
 	\!\Big]
 	\!\!\!\!\!&=\!
 	\Big[\!
 	\begin{array}{c}
 		f_c(y)\\
 		1
	\end{array}
	\!\Big]
	&~(y, \tau) \in \mathbb{R} \!\times\! [0, \tau_M]\mbox{,}
	\\
	\Big[\!
	\begin{array}{c}
		y^+  \\
		\tau^+ 
	\end{array}
	\!\Big]
	\!\!\!\!\!&=\!
	\Big[\!
	\begin{array}{c}
		f_d(x)\\
		0
	\end{array}
	\!\Big]
	&~(y, \tau) \in \mathbb{R} \!\times\! \mathcal{I}_{P,K}\mbox{.}
\end{array}	
\right.
\end{equation*}
    \item[5)]  No maximal solution to $\mathcal{H}$ has a finite escape time in $(P\backslash Q) \cap C$.
    \item[6)] Every maximal solution to $\mathcal{H}$ from $((P\backslash Q) \cap \partial C) \backslash D$ is nontrivial.
\end{itemize}
\end{theorem}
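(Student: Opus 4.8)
The plan is to certify $p\,\mathcal{U}_s q$ through Theorem~\ref{thm:stronguntil_eci} applied with the auxiliary system $\mathcal{H}_s = (C_s,F_s,D_s,G_s)$ of \eqref{eqn:H_m+}: item~1) already supplies the first hypothesis of that theorem, so it remains to establish that $Q$ is ECI with respect to $P\cup Q$ for $\mathcal{H}_s$. I would obtain this in two steps: first pre-ECI of $Q$ with respect to $P\backslash Q$ for $\mathcal{H}_s$ via Theorem~\ref{thm:pre_ECI_interval}, and then an upgrade of pre-ECI to ECI via Theorem~\ref{thm:eventual_ci}.

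For the first step I would apply Theorem~\ref{thm:pre_ECI_interval} to $\mathcal{H}_s$ with $\mathcal{O} := P\backslash Q$, $\mathcal{A} := Q$, $K := \text{int}(Q)$, and $\mathcal{I}_{P,K}$ from Definition~\ref{def:flow_length}. The set $K$ is forward pre-invariant for $\mathcal{H}_s$ because $C_s\cap\text{int}(Q)=\emptyset$ prevents flowing out of $\text{int}(Q)$ while $G_s$ is the identity on $Q\supseteq\text{int}(Q)$; the same argument shows $Q$ itself is forward pre-invariant for $\mathcal{H}_s$, so the points of $\partial Q = Q\backslash K$ require no separate treatment. Using the identities $C_s\backslash K = (C\backslash Q)\cap P\subseteq (C\backslash K)\cap P$ and $G_s = G_w$ on $D_s\backslash Q$ (and $G_s = \mathrm{id}$ on $Q$), together with the monotonicity $T_{C_s}(x)\subseteq T_C(x)$, the inequalities in item~2) yield \eqref{eqn:known_interval_v1}--\eqref{eqn:known_interval_v2} for $\mathcal{H}_s$; item~3) yields the sublevel condition \eqref{eqS} for $\mathcal{H}_s$ (the flow/jump set of $\mathcal{H}_s$ being $C_s\cup D_s = P\cup Q$); and item~4), together with $v(P\backslash Q)\subseteq v(P\backslash K)$ and the fact that $\mathcal{I}_{P,K}$ over-approximates the flow-interval lengths of the solutions from $P\backslash Q$ that remain outside $K$, provides condition~3) of Theorem~\ref{thm:pre_ECI_interval}. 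Hence $Q$ is pre-ECI with respect to $P\backslash Q$ for $\mathcal{H}_s$.

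For the second step I would verify condition~$(\star)$ of Theorem~\ref{thm:eventual_ci}, instantiated with $\mathcal{H}_s$, $\mathcal{O}=P\backslash Q$, $\mathcal{A}=Q$, using $S := P\cup Q$. A direct computation shows that the $\mathcal{H}_w$-type system built from $\mathcal{H}_s$ with $Q$ in the role of the target coincides with $\mathcal{H}_s$ itself, that $C_s\cup D_s\cup Q = P\cup Q$, and that $\mathcal{O}\cup\mathcal{A} = P\cup Q\subseteq S$, so $S = P\cup Q$ is the only admissible choice. This $S$ is forward pre-invariant for $\mathcal{H}_s$ because $p\,\mathcal{U}_w q$ holds (as in the proof of Theorem~\ref{thm:stronguntil_eci}, every solution to $\mathcal{H}_s$ from $P\backslash Q$ stays in $P\cup Q$, and a solution starting in $Q$ either jumps in place within $Q$ or flows immediately into $C_s\subseteq P$, after which the previous statement applies). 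Completeness of every maximal solution to $\mathcal{H}_s$ from $S$ reduces, by passing to suffixes, to completeness of the maximal solutions from $P\backslash Q$: one that reaches $Q$ can be continued by identity jumps forever, and one that never reaches $Q$ stays in $P\backslash Q$ and, by Lemma~\ref{lemma:H_Hs}, is also a maximal solution to $\mathcal{H}$, which items~5), 6) and~\ref{assump:SA} force to be complete (no finite escape time in $(P\backslash Q)\cap C$ by item~5), and a flow or jump extension always available at its terminal point by \ref{assump:SA} and item~6)). Thus $(\star)$ holds, Theorem~\ref{thm:eventual_ci} promotes the pre-ECI of the first step to ECI of $Q$ with respect to $P\backslash Q$ for $\mathcal{H}_s$, and since every solution from $Q$ either stays in $Q$ or (after flowing into $P\backslash Q$) eventually returns to and stays in $Q$, this is ECI of $Q$ with respect to $P\cup Q$. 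Theorem~\ref{thm:stronguntil_eci} then delivers $p\,\mathcal{U}_s q$ for $\mathcal{H}$.

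The step I expect to be the main obstacle is the passage from pre-ECI to ECI: pre-ECI constrains only the complete solutions of $\mathcal{H}_s$, so the crux is to rule out maximal solutions that stall inside $P\backslash Q$ with a bounded hybrid time domain, which is exactly the purpose of the infinitesimal conditions~5) and~6), the standing assumption~\ref{assump:SA}, and the interchange of solutions between $\mathcal{H}$ and $\mathcal{H}_s$ afforded by Lemma~\ref{lemma:H_Hs}; making the prolongation argument watertight at boundary points of $C$ (via item~6)) is the delicate part. A secondary, more clerical, difficulty is matching the hypotheses of Theorem~\ref{thm:pre_ECI_interval} precisely, since that theorem is phrased with the sublevel condition on all of $C\cup D$ and with $K=\text{int}(Q)$ rather than $Q$, so one must check that the inequalities stated on $(C\backslash K)\cap P$ and $(D\backslash K)\cap P$ dominate those needed for $\mathcal{H}_s$ on $C_s\backslash K$ and $D_s\backslash K$, the points of $\partial Q$ being covered by forward pre-invariance of $Q$.
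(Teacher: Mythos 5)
Your first step (Theorem~\ref{thm:pre_ECI_interval} applied to $\mathcal{H}_s$ with $K=\mathrm{int}(Q)$, checking forward pre-invariance of $K$ and translating items 2)--4) into \eqref{eqn:known_interval_v1}--\eqref{eqn:known_interval_v2}, \eqref{eqS}, and condition 3) of that theorem) is exactly the paper's first step. Where you diverge is the second half. The paper does not pass through Theorem~\ref{thm:eventual_ci} or invoke Theorem~\ref{thm:stronguntil_eci} as a black box; instead it repeats the direct case analysis from the proof of Theorem~\ref{thm:strong_until}: a maximal solution $\phi$ to $\mathcal{H}$ from $P\backslash Q$ either reaches $Q$, or stalls (excluded by items 5), 6) and \ref{assump:SA}, since a bounded, non-complete maximal solution remaining in $P\backslash Q$ could be extended), or is complete without reaching $Q$ (excluded because, by Lemma~\ref{lemma:H_Hs}, it is a maximal solution to $\mathcal{H}_s$ and pre-ECI forces it into $S\subset Q$). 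Your route uses the same three ingredients (Theorem~\ref{thm:pre_ECI_interval}, Lemma~\ref{lemma:H_Hs}, items 5)/6)/\ref{assump:SA}) but repackages the completeness argument as verification of condition $(\star)$ with $S=P\cup Q$; that buys a cleaner modular structure at the price of having to establish a strictly stronger intermediate property, namely forward \emph{invariance} (including completeness of all maximal solutions from $P\cup Q$, not just from $P\backslash Q$) and ECI with respect to all of $P\cup Q$.

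That last upgrade is the one genuine soft spot. Theorem~\ref{thm:eventual_ci} with $\mathcal{O}=P\backslash Q$ gives you ECI of $Q$ with respect to $P\backslash Q$; to literally satisfy hypothesis 2) of Theorem~\ref{thm:stronguntil_eci} you assert that a solution starting in $Q$ that flows out into $C_s$ "eventually returns to and stays in $Q$." The returning part follows from your argument, but the staying part does not: a solution from a point of $Q\cap\cl(C_s)$ may flow out, return to $Q$, and flow out again, so Definition~\ref{def:eventual_ci} (which demands membership in $Q$ for \emph{all} sufficiently large hybrid times) need not hold with respect to $P\cup Q$; this is precisely the pathology of Example~\ref{counterexpECI}. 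The fix is either to observe (as Remark~\ref{remark:eci_fta} does) that the proof of Theorem~\ref{thm:stronguntil_eci} only uses the reaching property of solutions from $P\backslash Q$, or to fall back on the paper's direct case analysis, which sidesteps the issue entirely. With that repair your argument is complete and equivalent to the paper's.
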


\begin{proof}
Consider the system $\mathcal{H}_s$ introduced in \eqref{eqn:H_m+}. Using Theorem \ref{thm:pre_ECI_interval} for $\mathcal{H}_s$ under items 2), 3), and 4), we conclude that $Q$ is pre-ECI with respect to $P \backslash Q$ for $\mathcal{H}_s$. The rest of the proof follows using the same steps in the proof of Theorem \ref{thm:strong_until}.
\end{proof}

\subsection{Certifying 
$p \,\mathcal{U}_s q$ using Sufficient Conditions for FTA via Flows and Jumps}

In the following, along the lines of Remark \ref{remark:eci_fta},
we propose sufficient conditions that guarantee the satisfaction of the formula $p \,\mathcal{U}_s q$ by employing sufficient conditions for FTA in Theorem~\ref{thm:pre_FTA}, Theorem~\ref{thm:FTAf}, and Theorem~\ref{thm:FTAj}, respectively.

\begin{theorem}
[$p \,\mathcal{U}_s q$ using FTA]
\label{thm:strong_FTA}
Consider a hybrid system $\mathcal{H}=(C,F,D,G)$.
Given atomic propositions $p$ and $q$, let the sets $P$ and $Q$ as in \eqref{eqn:K_sets} be \blue{such that Assumptions \ref{assump:SA1} and \ref{assump:SA} hold}. For $\mathcal{N}$ an open neighborhood of $Q$, we suppose that there exist functions $V: \mathcal{N} \rightarrow \mathbb{R}_{\geq 0}$ and $W: \mathcal{N} \rightarrow \mathbb{R}_{\geq 0}$ that are
positive definite with respect to $Q$ and such that $P \backslash Q \subset L_V(r) \cap (C \cup D) \subset \mathcal{N}$ and $P \backslash Q \subset L_W(r) \cap (C \cup D) \subset \mathcal{N}$, for some $r > 0$. Then, the formula $p \,\mathcal{U}_s q$ is satisfied for $\mathcal{H}$ if the following hold:
\begin{itemize}
	\item[1)] The formula $p \,\mathcal{U}_w q$ is satisfied for $\mathcal{H}$.
	\item[2)] There exist constants $c_1 > 0$ and $c_2 \in [0,1)$ such that
\begin{equation}
\label{eqFTAflow}
\begin{split}
&\!\!\!\!\!\!\!\!\left< \nabla V(x), \eta \right> \leq -c_1 V^{c_2}(x)\\
&\qquad\quad\forall x \!\in\! (C \cap \mathcal{N} \cap P) \backslash Q, \forall \eta \!\in\! F(x) \cap T_C(x)\mbox{,}\\
&\!\!\!\!\!\!\!\!V(\eta) - V(x) \leq 0 \:\:\forall x \!\in\! (D \cap \mathcal{N} \cap P) \backslash Q, \forall \eta \!\in\! G(x)\mbox{.}
\end{split}
\end{equation}

\item[3)] There exists a constant $c > 0$ such that
\begin{equation}
\label{eqFTAjump}
\begin{split}
&\!\!\!\!\!\!\!\!\left< \nabla W(x), \eta \right> \!\leq\! 0\\
&\qquad\quad\forall x \!\in\! (C \cap \mathcal{N} \cap P) \backslash Q, \forall \eta \!\in\! F(x) \cap T_C(x)\mbox{,}\\
&\!\!\!\!\!\!\!\!W(\eta) - W(x) \!\leq\! -\min\{c, W(x)\}\\
&\qquad\quad\forall x \!\in\! (D \cap \mathcal{N} \cap P) \backslash Q, \forall \eta \!\in\! G(x)\mbox{.}
\end{split}
\end{equation}
    \item[4)] No maximal solution to $\mathcal{H}$ has a finite escape time in $(P\backslash Q) \cap C$.
    \item[5)] Every maximal solution to $\mathcal{H}$ from $((P\backslash Q) \cap \partial C) \backslash D$ is nontrivial.
\end{itemize}
\end{theorem}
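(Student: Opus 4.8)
The plan is to follow the template of the proof of Theorem~\ref{thm:strong_until}, but to discharge the ``complete, non-converging'' case with the finite-time-attractivity estimates underlying Theorem~\ref{thm:pre_FTA} rather than with the eventual-conditional-invariance comparison argument. By item~1), every maximal solution of $\mathcal{H}$ starting from $P\cup Q$ satisfies $p\,\mathcal{U}_w q$; since a solution starting from $Q$ satisfies $p\,\mathcal{U}_s q$ at $(0,0)$ trivially, it suffices to show that every maximal solution $\phi$ of $\mathcal{H}$ from $P\backslash Q$ reaches $Q$ in finite hybrid time --- together with $p\,\mathcal{U}_w q$ this yields $p\,\mathcal{U}_s q$ for $\mathcal{H}$ (equivalently, one may verify that $Q$ is FTA with respect to $P\cup Q$ for $\mathcal{H}_s$, using Lemma~\ref{lemma:H_Hs} to move between $\mathcal{H}$ and $\mathcal{H}_s$, and then invoke Theorem~\ref{thm:stronguntil_fta}). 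First I would fix such a $\phi$ and split into the three cases of Theorem~\ref{thm:strong_until}: (a) $\phi$ reaches $Q$ in finite hybrid time; (b) $\phi$ is not complete and never reaches $Q$; (c) $\phi$ is complete and never reaches $Q$. The goal is to rule out (b) and (c).

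Case (b) is handled exactly as in the proof of Theorem~\ref{thm:strong_until}: since $\phi$ never reaches $Q$ and satisfies $p\,\mathcal{U}_w q$, it stays in $P\backslash Q$; by item~4) it has no finite escape time in $(P\backslash Q)\cap C$, so $(T,J):=\sup\dom\phi\in\dom\phi$; and then $\phi$ can be extended --- by a jump if $\phi(T,J)\in D$ (using \ref{assump:SA}), by flowing if $\phi(T,J)\in\textrm{int}(C)\backslash D$ (using \ref{assump:SA}), and by flowing if $\phi(T,J)\in\partial C\backslash D$ (using item~5)) --- contradicting maximality.

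For case (c), I would transcribe the Lyapunov computations from the proof of Theorem~\ref{thm:pre_FTA}. Since $\phi$ is complete, stays in $P\backslash Q$ (again by $p\,\mathcal{U}_w q$), and $P\backslash Q\subset (L_V(r)\cap L_W(r))\cap(C\cup D)\subset\mathcal{N}$, along $\phi$ the flowing points lie in $(C\cap\mathcal{N}\cap P)\backslash Q$ and the pre-jump points in $(D\cap\mathcal{N}\cap P)\backslash Q$, so \eqref{eqFTAflow} and \eqref{eqFTAjump} are available (that the inequalities are stated for all $\eta\in F(x)\cap T_C(x)$ is more than enough along flows of $\mathcal{H}$). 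By completeness, $\dom\phi$ is unbounded in $t$ or in $j$. If it is unbounded in $t$, integrating \eqref{eqFTAflow} over flow intervals and using $V(\eta)\le V(x)$ across jumps yields $V^{1-c_2}(\phi(t,j))\le V^{1-c_2}(\phi(0,0))-c_1(1-c_2)\,t$ for all $(t,j)\in\dom\phi$; the right side becomes negative in finite ordinary time while $V\ge 0$, so $\phi$ must hit $\{x:V(x)=0\}$, hence $Q$ by positive definiteness of $V$ with respect to $Q$ --- a contradiction. If $\dom\phi$ is unbounded in $j$, using \eqref{eqFTAjump} (non-increase of $W$ along flows, decrease by at least $\min\{c,W\}$ at each jump) gives $W(\phi(t,j))\le W(\phi(0,0))-\sum_{i=1}^{j}\min\{c,W(\phi(t_i,i-1))\}$, so after at most $\mbox{ceil}(W(\phi(0,0))/c)$ jumps $W$ vanishes, hence $\phi$ reaches $Q$ --- again a contradiction. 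Thus only case (a) occurs, which finishes the proof.

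The main obstacle I anticipate is the bookkeeping that certifies the Lyapunov inequalities along $\phi$: one must use the already-established satisfaction of $p\,\mathcal{U}_w q$ to argue that $\phi$ cannot leave $P\backslash Q$ except by entering $Q$, hence remains inside $\mathcal{N}$ and inside the sublevel sets $L_V(r)$, $L_W(r)$ on which \eqref{eqFTAflow}--\eqref{eqFTAjump} are postulated, and that this is precisely the region covered by the restricted hypotheses in items~2) and 3). Once this is in place, the finite-time estimates are a routine copy of the proof of Theorem~\ref{thm:pre_FTA} and the exclusion of non-complete solutions is a routine copy of the corresponding step in the proof of Theorem~\ref{thm:strong_until}.
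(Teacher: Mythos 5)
Your proposal is correct and follows essentially the same route as the paper: the paper's proof applies the pre-FTA result (Theorem~\ref{thm:pre_FTA}) to $\mathcal{H}_s$ under items~2) and~3), upgrades pre-FTA to FTA by establishing forward invariance of $P\cup Q$ for $\mathcal{H}_s$ via the same maximality/extension argument you use for case~(b) (relying on items~4) and~5) and \ref{assump:SA}), and then concludes with Theorem~\ref{thm:stronguntil_fta}. You merely inline the Lyapunov integration and jump-counting estimates and work directly on solutions of $\mathcal{H}$ via the case split of Theorem~\ref{thm:strong_until}, which is the same argument in a different packaging.
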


\begin{proof}
Consider the system $\mathcal{H}_s$ introduced in \eqref{eqn:H_m+}. Using \cite[Theorem 5.1]{han.arXiv21} for $\mathcal{H}_s$ under items 2) and 3), we conclude that $Q$ is pre-FTA with respect to $P \backslash Q$ for $\mathcal{H}_s$. Next, we show that $P \cup Q$ is forward invariant for $\mathcal{H}_s$ exactly as we did in the proof of Theorem \ref{thm:strong_until}. Thus, using Theorem \cite[Theorem 5.10]{han.arXiv21}, we conclude that $Q$ is FTA with respect to $P \backslash Q$ for $\mathcal{H}_s$. Finally, the proof is completed using Theorem \ref{thm:stronguntil_fta}.
\end{proof}

\begin{example}
[Thermostat]
Consider the hybrid system $\mathcal{H} = (C,F,D,G)$ with the state $x \!:=\! (h,z) \!\in\! \{0,1\} \!\times\! \mathbb{R}$ in Example \ref{exp:thermostat}.
Following the formulation therein, a specification of interest is
that the room temperature decreases once it exceeds $z_{\max}$, which is related to the satisfaction of $p \,\mathcal{U}_s q$ for $\mathcal{H}$
with the propositions $p$ and $q$ given as in Example \ref{exp:thermostat}.
The sets $P$ and $Q$ in \eqref{eqn:K_sets} are given by $P \!=\! \{1\} \!\times\! (-\infty, z_{\max}]$ with $z_{\max} > 0$ and $Q \!=\! \{0\} \!\times\! [z_{\min}, \infty)$.
To show the satisfaction of $p \,\mathcal{U}_s q$ for $\mathcal{H}$, we apply Theorem \ref{thm:strong_FTA}.
First, consider the barrier function candidate $B(x) \!:=\! (-1)^h(z_{\max} \!-\! z)$.
Indeed, $B$ is a barrier function candidate with respect to $(P \backslash Q, (\{0,1\}\times \mathbb{R}) \backslash (P \cup Q))$ for $\mathcal{H}$ since
for all $x \!\in\! P \backslash Q \!=\! P$, $B(x) \!=\! z-z_{\max} \!\leq\! 0$; and for all $x \!\in\! (C \cup D) \backslash (P \cup Q) \!=\! (\{0\} \!\times\! (-\infty,z_{\min})) \!\cup\! (\{1\} \!\times\! (z_{\max}, \infty))$, $B(x) \!>\! 0$.
Moreover, for all $x \!\in\! C \backslash Q \!=\! \{1\} \!\times\! (-\infty, z_{\max}]$, $B(x) \!=\! z \!-\! z_{\max} \!\leq\! 0$; and thus, $(C \backslash Q) \cap (U(\partial K) \backslash K)\!=\! \emptyset$. Furthermore, for all $x \!\in\! K \cap (D \backslash Q) \!=\! \{(1, z_{\max})\}$, $G(x) \!=\! \{(0, z_{\max})\}  \!\subset\! C \cup D \cup Q$ and $B(G(x)) \!=\! 0$; hence, items 1) - 3) in Theorem \ref{thm:weak_until} hold. It follows that the formula $p \,\mathcal{U}_w q$ is satisfied for $\mathcal{H}$; and thus, item 1) in Theorem \ref{thm:strong_FTA} holds.
Next, consider the functions $V(x) \!=\! W(x) \!=\! z_{\min} - z$.
For all $x \!\in\! (C \cap P) \backslash Q \!=\! \{1\} \!\times\! (-\infty, z_{\max}]$, $\left< \nabla V(x), F(x) \right> \!=\! z \!-\! z_o \!-\! z_{\triangle} \!\leq\! z_{\max} \!-\! z_o \!-\! z_{\triangle}$.
Moreover, for all $x \!\in\! (D \cap P) \backslash Q \!=\! \{(1, z_{\max})\}$, $V(G(x)) - V(x) = 0$. 
Hence, item 2) holds for $c_1 \!=\! z_{\max} \!-\! z_o \!-\! z_{\triangle}$ and $c_2 \!=\! 0$.
On the other hand,
for all $x \!\in\! (C \cap P) \backslash Q$, $\left< \nabla W(x), F(x) \right> \!=\! z \!-\! z_o \!-\! z_{\triangle} < 0$ since $z_o \!+\! z_{\triangle} > z_{\max}$; and for all $x \!\in\! (D \cap P) \backslash Q \!=\! \{(1, z_{\max})\}$, $W(G(x)) - W(x) \!=\! 0 \!<\! -W(x)$. Hence, we conclude that item 3) holds for $c \!=\! z_{\max} - z_{\min}$.
Furthermore, since the flow map $F$ has global linear growth on $(P \backslash Q) \cap C$, the solutions to $\mathcal{H}$ do not have a finite escape time inside $(P \backslash Q) \cap C$; hence, item 4) holds.
Finally, since $(P \backslash Q) \cap \partial C = \emptyset$, item 5) holds.
Thus, Theorem \ref{thm:strong_FTA} implies that $p \,\mathcal{U}_s q$ is satisfied for $\mathcal{H}$.
\hfill $\triangle$
\end{example}

\subsection{Certifying 
$p \,\mathcal{U}_s q$ using Sufficient Conditions for FTA via Flows}

\begin{theorem}[$p \,\mathcal{U}_s q$ using FTA and Theorem~\ref{thm:FTAf}]
\label{thm:strong_FTA_f}
Consider a hybrid system $\mathcal{H}=(C,F,D,G)$.
Given atomic propositions $p$ and $q$ and let the sets $P$ and $Q$ as in \eqref{eqn:K_sets} \blue{such that Assumptions \ref{assump:SA1} and \ref{assump:SA} hold}. Let $\mathcal{N}$ be an open neighborhood around $Q$ such that there exists a $\mathcal{C}^1$ function $V: \mathcal{N} \rightarrow \mathbb{R}_{\geq 0}$ that is positive definite with respect to $Q$ and $P \backslash Q \subset L_V(r) \cap (C \cup D) \subset \mathcal{N}$, for some $r>0$. Then, the formula $p \,\mathcal{U}_s q$ is satisfied for $\mathcal{H}$ if 
\begin{itemize}
	\item[1)] The formula $p \,\mathcal{U}_w q$ is satisfied for $\mathcal{H}$.
	\item[2)] There exist constants $c_1 > 0$ and $c_2 \in [0,1)$ such that \eqref{eqFTAflow} holds. 
	\item[3)] For every $x \in P \backslash Q$, each solution $\phi \in \mathcal{S_H}(x)$ satisfies
\begin{align} \label{eqflowenough}
		\tfrac{V^{1-c_2} (x)}{c_1 (1-c_2)} \leq \sup \{ t : (t,j) \in \dom\phi \} \mbox{.}
\end{align}
\end{itemize}
\end{theorem}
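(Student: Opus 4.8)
The plan is to follow the same route as the proof of Theorem \ref{thm:strong_FTA}, but to replace the two-step ``pre-FTA, then promote to FTA'' argument (Theorem \ref{thm:pre_FTA} together with Theorem \ref{thm:FTA}) by a single application of Theorem \ref{thm:FTAf} to the auxiliary hybrid system $\mathcal{H}_s=(C_s,F_s,D_s,G_s)$ of \eqref{eqn:H_m+}. Since item 1) already gives $p\,\mathcal{U}_w q$ for $\mathcal{H}$, Theorem \ref{thm:stronguntil_fta} reduces the claim to showing that $Q$ is FTA with respect to $P\cup Q$ for $\mathcal{H}_s$; and because every maximal solution to $\mathcal{H}_s$ that starts in $Q$ remains in $Q$ (as $Q\subset D_s$, $C_s\cap Q=\emptyset$, and $G_s$ is the identity on $Q$, so its settling-time function is zero there), it suffices to show that $Q$ is FTA with respect to $P\setminus Q$ for $\mathcal{H}_s$, which I would obtain from Theorem \ref{thm:FTAf} with $\mathcal{A}:=Q$, $\mathcal{O}:=P\setminus Q$, and the given $\mathcal{N}$, $V$, $r$, $c_1$, $c_2$.

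First I would carry out the routine translation of the hypotheses to $\mathcal{H}_s$. From $C_s=(C\setminus Q)\cap P$ and $D_s=(D\cap P)\cup Q$ one gets $(C_s\cap\mathcal{N})\setminus Q=(C\cap\mathcal{N}\cap P)\setminus Q$ and $(D_s\cap\mathcal{N})\setminus Q=(D\cap\mathcal{N}\cap P)\setminus Q$, while $F_s=F$ and $G_s=G$ on these sets and $T_{C_s}(x)\subset T_C(x)$ because $C_s\subset C$; hence the two inequalities \eqref{eqFTAflow} of item 2), together with positive definiteness of $V$ with respect to $Q$, are exactly condition 1) of Proposition \ref{prop:preFTA1} read for $\mathcal{H}_s$. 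Likewise, Assumption \ref{assump:SA1} and $P\setminus Q\subset L_V(r)\cap(C\cup D)$ give $P\setminus Q\subset L_V(r)\cap(C_s\cup D_s)$, so the initialization hypothesis of Theorem \ref{thm:FTAf} holds for $\mathcal{O}=P\setminus Q$. I would also check the neighborhood requirements ($V$ is $\mathcal{C}^1$ on $\mathcal{N}$, $G_s(\mathcal{N})\subset\mathcal{N}$): along the solutions of interest $V$ does not increase, jumps out of $Q$ are the identity, and these solutions stay inside $L_V(r)\cap(C\cup D)\subset\mathcal{N}$, which is the standard sublevel-set argument used in the proof of Theorem \ref{thm:pre_FTA}.

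Next I would verify the remaining, non-bookkeeping hypothesis of Theorem \ref{thm:FTAf}: for every $x\in\mathcal{N}\cap(C_s\cup D_s)\setminus Q$ and every $\phi\in\mathcal{S}_{\mathcal{H}_s}(x)$, that $\tfrac{V^{1-c_2}(x)}{c_1(1-c_2)}\le\sup\{t:(t,j)\in\dom\phi\}$. Since $C_s\cup D_s\subset P\cup Q$, such $x$ lie in $P\setminus Q$. If $\phi$ reaches $Q$ then $\mathcal{T}_Q(\phi)<\infty$, so FTA is satisfied for this solution with nothing more to prove; if $\phi$ never reaches $Q$, then, since $p\,\mathcal{U}_w q$ holds, item 1) of Lemma \ref{lemma:H_Hs} makes $\phi$ a maximal solution to $\mathcal{H}$ as well, starting from $P\setminus Q$, so item 3) applies verbatim and yields the required bound on $\sup\{t:(t,j)\in\dom\phi\}$. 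With all hypotheses of Theorem \ref{thm:FTAf} established for $\mathcal{H}_s$, $Q$ is FTA with respect to $P\setminus Q$ --- hence with respect to $P\cup Q$ --- for $\mathcal{H}_s$, and Theorem \ref{thm:stronguntil_fta} completes the argument.

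The step I expect to require the most care --- exactly as in Theorem \ref{thm:strong_ECI_f} --- is the passage between solutions of $\mathcal{H}_s$ and solutions of $\mathcal{H}$: Theorem \ref{thm:FTAf} quantifies over all maximal solutions of $\mathcal{H}_s$, whereas item 3) speaks only of maximal solutions of $\mathcal{H}$ from $P\setminus Q$, and the two families agree only up to the first time the set $Q$ is reached. Lemma \ref{lemma:H_Hs}, whose validity rests on $p\,\mathcal{U}_w q$, is what legitimizes the identification for the solutions that never reach $Q$, while the solutions that do reach $Q$ are harmless for FTA. The set identities for the data of $\mathcal{H}_s$, the tangent-cone inclusion, and the neighborhood conditions are all mechanical by comparison.
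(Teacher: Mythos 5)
Your proposal is correct and follows essentially the same route as the paper's own proof: reduce via Theorem \ref{thm:stronguntil_fta} to showing $Q$ is FTA for $\mathcal{H}_s$, transfer the hypotheses to $\mathcal{H}_s$, split maximal solutions from $P\setminus Q$ into those that reach $Q$ (trivially fine) and those that never do (which, by the maximality transfer of Lemma \ref{lemma:H_Hs} under $p\,\mathcal{U}_w q$, are maximal solutions of $\mathcal{H}$ so that item 3) supplies the flow-time bound needed by Theorem \ref{thm:FTAf}). The only cosmetic difference is that you cite Lemma \ref{lemma:H_Hs} and explicitly dispose of initial conditions in $Q$, whereas the paper re-derives the maximality argument inline; the substance is identical.
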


\begin{proof}
Consider system $\mathcal{H}_s$ introduced in \eqref{eqn:H_m+}. Using item 1), we conclude that a maximal solution $\psi$ to $\mathcal{H}$ starting from $P \backslash Q$ either remains in $P \backslash Q$ for all hybrid time, otherwise, $\psi$ remains in $P \backslash Q$ up to when it reaches $Q$. Hence, each maximal solution $\phi$ to $\mathcal{H}_s$ starting from $P \backslash Q$ remains in $P \cup Q$. In particular, either $\phi$ reaches $Q$ in finite time, or $\phi$ remains in $P \backslash Q$. 
To exclude the latter case, we show that, when $\phi$ remains in $P \backslash Q$, $\phi$ must be a maximal solution to $\mathcal{H}$. Indeed, assume the existence of a solution $\psi$ to $\mathcal{H}$ which is a nontrivial extension of $\phi$; namely, there exists $I \subset \mathbb{R}_{\geq 0} \!\times\! \mathbb{N}$ such that 
$I \!\neq\! \emptyset$ and  
$\dom \phi \cup I \!=\! \dom \psi$. Note that $\psi (\dom \phi) \!=\! \phi (\dom \phi) \subset P \backslash Q$. Also, since $\psi$ must remain in $P \backslash Q$ up to when it reaches $Q$, we can choose $I$ such that  
$\psi (\dom \phi \cup I) \subset P \backslash Q$. Hence, $\psi$ is a solution to $\mathcal{H}_s$, which contradicts the fact that $\phi$ is a maximal solution to $\mathcal{H}_s$. 
Next, using item 3), we conclude the existence of a solution $y$ to $\dot{y} = f_c(y)$ starting from $v(\phi(0,0))$ such that, for some $t^\star \geq 0$, \eqref{eqflowenough} holds. Combining the latter fact to item 2) and using Theorem \ref{thm:FTAf} for $\mathcal{H}_s$, we conclude that $\phi$ must reach $Q$ in finite time. Hence, $Q$ is FTA with respect to $P \backslash Q$ for $\mathcal{H}_s$. 
 Finally, the proof is completed using Theorem \ref{thm:stronguntil_fta}.
\end{proof}

\subsection{Certifying 
$p \,\mathcal{U}_s q$ using Sufficient Conditions for FTA via Jumps}

\begin{theorem}[$p \,\mathcal{U}_s q$ using FTA and Theorem~\ref{thm:FTAj}]
\label{thm:strong_FTA_j}
Consider a hybrid system $\mathcal{H}=(C,F,D,G)$.
Given atomic propositions $p$ and $q$, let the sets $P$ and $Q$ as in \eqref{eqn:K_sets} such that
Suppose \blue{such that Assumptions \ref{assump:SA1} and \ref{assump:SA} hold}.
Let $\mathcal{N}$ be an open neighborhood around $Q$ and  suppose that there exists a $\mathcal{C}^1$ function $W: \mathcal{N} \rightarrow \mathbb{R}_{\geq 0}$ that is positive definite with respect to $Q$ such that $P \backslash Q \subset L_W(r) \cap (C \cup D) \subset \mathcal{N}$, for some $r >0$. Then, the formula $p \,\mathcal{U}_s q$ is satisfied for $\mathcal{H}$ if the following hold:
\begin{itemize}
	\item[1)] The formula $p \,\mathcal{U}_w q$ is satisfied for $\mathcal{H}$.
	\item[2)] There exists a constant $c > 0$ such that \eqref{eqFTAjump} holds. 

    \item[3)] For every $x \in P \backslash Q$, each solution $\phi \in \mathcal{S}_{\mathcal{H}}(x)$ satisfies
$$
    	\mbox{ceil} \big( \tfrac{W(x)}{c} \big) \leq \sup \{ j : (t,j) \in \dom\phi \} \mbox{.}
$$
\end{itemize}
\end{theorem}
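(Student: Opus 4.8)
The plan is to follow the template of the proof of Theorem~\ref{thm:strong_FTA_f}, replacing the flow-based finite-time reasoning by the jump-based one. I introduce the auxiliary hybrid system $\mathcal{H}_s = (C_s,F_s,D_s,G_s)$ of \eqref{eqn:H_m+}, establish that $Q$ is FTA with respect to $P\backslash Q$ for $\mathcal{H}_s$ by invoking Theorem~\ref{thm:FTAj} (FTA via jumps), and then conclude that $p\,\mathcal{U}_s q$ holds for $\mathcal{H}$ via Theorem~\ref{thm:stronguntil_fta}. For the final invocation I also record that every maximal solution to $\mathcal{H}_s$ starting in $Q$ stays in $Q$ (since $G_s$ is the identity on $Q$ and $C_s\cap Q=\emptyset$), so FTA of $Q$ with respect to $P\backslash Q$ for $\mathcal{H}_s$ automatically gives FTA of $Q$ with respect to $P\cup Q$ for $\mathcal{H}_s$, which is the hypothesis required by Theorem~\ref{thm:stronguntil_fta}; item~1) supplies the other hypothesis of that theorem.

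To apply Theorem~\ref{thm:FTAj} to $\mathcal{H}_s$ with $\mathcal{A}=Q$ and $\mathcal{O}=P\backslash Q$, I first check the structural hypotheses of Proposition~\ref{prop:preFTA2}: $Q$ is closed by Assumption~\ref{assump:SA1}; $W$ is $\mathcal{C}^1$ on $\mathcal{N}\backslash Q$ and positive definite with respect to $Q$ by hypothesis; and since one computes $(C_s\cup D_s)\backslash Q = P\backslash Q$ (using $C_s=(C\backslash Q)\cap P$, $D_s=(D\cap P)\cup Q$, and $P\subset C\cup D$), the inclusion $P\backslash Q\subset L_W(r)\cap(C_s\cup D_s)\subset\mathcal{N}$ follows from the standing hypothesis $P\backslash Q\subset L_W(r)\cap(C\cup D)\subset\mathcal{N}$. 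Next, because $(C_s\cap\mathcal{N})\backslash Q=(C\cap\mathcal{N}\cap P)\backslash Q$ with $F_s=F$ and $T_{C_s}(x)\subset T_C(x)$ there, and $(D_s\cap\mathcal{N})\backslash Q=(D\cap\mathcal{N}\cap P)\backslash Q$ with $G_s=G$ there, the pair of conditions in \eqref{eqn:pre_FTA_j} written for $\mathcal{H}_s$ is implied by \eqref{eqFTAjump} of item~2). One technical point handled here, as in the proofs of Theorem~\ref{thm:strong_FTA} and Theorem~\ref{thm:strong_FTA_f}, is that $\mathcal{N}$ must be taken so that it is forward invariant under $G_s$.

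The remaining, and I expect principal, ingredient is the domain condition of Theorem~\ref{thm:FTAj}: for every $x\in\mathcal{N}\cap(C_s\cup D_s)\backslash Q=\mathcal{N}\cap(P\backslash Q)$ and every $\phi\in\mathcal{S}_{\mathcal{H}_s}(x)$ one needs $\mbox{ceil}(W(x)/c)\le\sup\{j:(t,j)\in\dom\phi\}$, whereas item~3) asserts this only for solutions to $\mathcal{H}$. I bridge the gap exactly as in Theorem~\ref{thm:strong_FTA_f}: take a maximal $\phi\in\mathcal{S}_{\mathcal{H}_s}(P\backslash Q)$; by item~1) and the construction of $\mathcal{H}_s$ it stays in $P\cup Q$, so either it reaches $Q$ in finite hybrid time, in which case $\mathcal{T}_Q(\phi)<\infty$ and nothing more is needed, or it stays in $P\backslash Q$ for all hybrid time. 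In the second case Lemma~\ref{lemma:H_Hs}(1) shows $\phi$ is a maximal solution to $\mathcal{H}$, so item~3) gives $\mbox{ceil}(W(\phi(0,0))/c)\le\sup\{j:(t,j)\in\dom\phi\}$; then item~2), the jump-descent estimate \eqref{eqn:pre_FTA_jump_b}, and Theorem~\ref{thm:FTAj} applied to $\mathcal{H}_s$ force $\phi$ to reach $Q$ after at most $\mbox{ceil}(W(\phi(0,0))/c)$ jumps, contradicting that it stays in $P\backslash Q$. Hence $Q$ is FTA with respect to $P\backslash Q$ for $\mathcal{H}_s$.

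Combining this with item~1), Theorem~\ref{thm:stronguntil_fta} yields that $p\,\mathcal{U}_s q$ is satisfied for $\mathcal{H}$. The main obstacle is the transfer step of the previous paragraph: making rigorous that the jump-count bound on $\mathcal{H}$-solutions in item~3) is precisely what Theorem~\ref{thm:FTAj} consumes for $\mathcal{H}_s$-solutions, together with the bookkeeping that the FTA-via-jumps hypotheses (forward invariance of $\mathcal{N}$ under $G_s$, and the set identities relating $C_s,D_s$ to $C,D,P,Q$) genuinely hold for the restricted system $\mathcal{H}_s$.
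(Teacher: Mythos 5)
Your proposal is correct and follows essentially the same route as the paper: the paper's proof of this theorem simply repeats the argument of Theorem~\ref{thm:strong_FTA_f} with Theorem~\ref{thm:FTAj} in place of Theorem~\ref{thm:FTAf}, i.e., pass to $\mathcal{H}_s$, use the maximality/extension argument (which is exactly Lemma~\ref{lemma:H_Hs}, re-derived inline there) to transfer the jump-count bound of item~3) from $\mathcal{H}$-solutions to $\mathcal{H}_s$-solutions, conclude FTA of $Q$ for $\mathcal{H}_s$, and finish with Theorem~\ref{thm:stronguntil_fta}. Your extra bookkeeping on the hypotheses of Proposition~\ref{prop:preFTA2} for the restricted data of $\mathcal{H}_s$ (and the remark on forward invariance of $\mathcal{N}$ under $G_s$) only makes explicit checks the paper leaves implicit.
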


\begin{proof}
The proof follows the exact same steps used to prove Theorem \ref{thm:strong_FTA_f} while using Theorem \ref{thm:FTAj} instead of Theorem \ref{thm:FTAf}.
\end{proof}

\section{Conclusion}
In this paper, tools are introduced for certifying basic LTL specifications involving until operators for hybrid systems. For such systems, sufficient and equivalence relationships are established between the satisfaction of the considered formulas and some of the existing invariance and attractivity notions studied in control literature. In particular, CI, ECI, and FTA notions are revisited in this paper in the context of hybrid systems. Furthermore, sufficient conditions certifying these invariance and attractivity properties are proposed. As a consequence, sufficient conditions (not involving the computation of the systems' solutions) guaranteeing the satisfaction of the considered formulas are proposed. \blue{Future research direction includes analyzing more complex specifications, where the until operator is involved in addition to other operators}. \red{In particular, with the proposed tools, more complex formulae can be certified through decomposition by building a finite state automaton; see \cite[Section 6.5]{han2020linear} and the references therein.}

\appendix
\setcounter{theorem}{0}
    \renewcommand{\thetheorem}{\Alph{section}.\arabic{theorem}}
\subsection{Auxiliary Results}

The following result is a version of the well-known comparison lemma that can be found in \cite[Lemma 3.4]{khalil}.
\begin{lemma} \label{lemcomp}
Consider the scalar system $\dot{u} = f(t, u)$, $u(t_0) = u_0$,
where for all $t \geq 0$ and all $u \in S \subset \mathbb{R}$, $f(t,u)$ is continuous in $t$ and locally Lipschitz in $u$. Furthermore, let $[t_0, T)$ be the maximal interval, $T$ can be infinity, of existence of the solution $t \mapsto u(t)$. Moreover, suppose that $u(t) \in S$ for all $t \in [t_0, T)$. 
Let $t \mapsto v(t)$ be a continuous function such that $v(t_0) \leq u_0$,
$v(t) \in S$ for all $t \in [t_0, T)$, and its upper right-hand derivative, $D^+ v(t)$, satisfies the following differential inequality, for almost all $t \in [t_0, T)$:
\begin{equation*}
    D^+ v(t) := \limsup_{s \rightarrow 0^+} \tfrac{v(t+s)-v(t)}{s} \leq f(t, v(t)).
\end{equation*}
 Then, $v(t) \leq u(t)$ for all $t \in [t_0, T)$.
\label{lem:comparison}
\end{lemma}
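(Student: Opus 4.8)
The plan is to follow the classical proof of the comparison lemma (as in \cite[Lemma 3.4]{khalil}): sandwich $v$ between the solution of the unperturbed comparison equation and the solutions of strictly perturbed versions of it, and then pass to the limit. Fix an arbitrary $t_1 \in (t_0, T)$; it suffices to prove $v(t) \le u(t)$ on $[t_0, t_1]$. For $\eta > 0$, let $z_\eta$ be the maximal solution of $\dot z = f(t,z) + \eta$ with $z_\eta(t_0) = u_0 + \eta$. First I would invoke the standard existence and continuous-dependence theory for ODEs with locally Lipschitz right-hand side: since $u$ is defined on $[t_0,t_1]$ with its graph in the region where $f$ is locally Lipschitz, for all sufficiently small $\eta > 0$ the solution $z_\eta$ exists on all of $[t_0,t_1]$, its graph stays in a fixed compact set $\mathcal{K} \subset S$ on which $f$ has a uniform Lipschitz constant $L$, and $z_\eta(t) \to u(t)$ uniformly on $[t_0,t_1]$ as $\eta \downarrow 0$.

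The core step is to show $v(t) < z_\eta(t)$ for all $t \in [t_0,t_1]$ and every small $\eta > 0$, which I would prove by contradiction. If this fails, then since $v(t_0) \le u_0 < u_0 + \eta = z_\eta(t_0)$, continuity produces a point $b \in [t_0,t_1)$ with $v(b) = z_\eta(b)$ and $v(t) > z_\eta(t)$ on $(b,a]$ for some $a > b$. For $t \in (b,a]$ this gives $\tfrac{v(t) - v(b)}{t-b} > \tfrac{z_\eta(t) - z_\eta(b)}{t-b}$, and taking $\limsup$ as $t \downarrow b$ would yield $D^+ v(b) \ge \dot z_\eta(b) = f(b,v(b)) + \eta > f(b,v(b))$, contradicting the differential inequality at $b$. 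The catch — and what I expect to be the one genuinely delicate point — is that the inequality is only assumed for \emph{almost all} $t$, so this particular $b$ need not be a point of validity; the pointwise contradiction must be replaced by an integral one. This is resolved by using that $v$ is locally absolutely continuous (the setting in which the lemma is applied in the paper, where $v$ plays the role of $t \mapsto V(\phi(t,j))$ with $V \in \mathcal{C}^1$): integrating the a.e. inequality $\dot v \le f(\cdot, v)$ and the identity $\dot z_\eta = f(\cdot,z_\eta) + \eta$ over $[b,t]$, subtracting, and using the Lipschitz bound together with $v > z_\eta$ on $(b,t]$, one gets $\psi(t) := v(t) - z_\eta(t) \le \int_b^t \big(L\,\psi(s) - \eta\big)\,ds$ with $\psi(b) = 0$ and $\psi > 0$ on $(b,a]$; since the integrand is strictly negative for $s$ near $b$, this forces $\psi \le 0$ on a right-neighbourhood of $b$, the required contradiction.

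Finally, having established $v(t) \le z_\eta(t)$ on $[t_0,t_1]$ for all small $\eta > 0$, I would let $\eta \downarrow 0$ and use $z_\eta \to u$ (uniformly on $[t_0,t_1]$) to obtain $v(t) \le u(t)$ on $[t_0,t_1]$; since $t_1 \in (t_0,T)$ was arbitrary, this yields $v(t) \le u(t)$ for all $t \in [t_0,T)$. Apart from the "almost all $t$" wrinkle discussed above, every ingredient — local existence, continuous dependence on the parameter $\eta$, confinement of $z_\eta$ to a compact set, and the Gronwall-type estimate — is routine, so the proof should be short once the absolute-continuity observation is in place.
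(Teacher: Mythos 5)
The paper does not actually prove this lemma: it is stated as a known result and attributed to \cite[Lemma 3.4]{khalil}, so there is no in-paper argument to compare yours against. Your proposal is the classical proof of that cited result --- perturb the comparison equation to $\dot z_\eta = f(t,z_\eta)+\eta$, $z_\eta(t_0)=u_0+\eta$, show $v < z_\eta$ by a first-crossing contradiction, and let $\eta \downarrow 0$ using continuous dependence --- and it is correct as far as it goes. The Gr\"onwall-type step is sound: with $\psi := v - z_\eta$, $\psi(b)=0$ and $\psi \le \int_b^t (L\psi(s)-\eta)\,ds$ forces $\psi<0$ just to the right of $b$, contradicting $\psi>0$ on $(b,a]$.

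The point you flag as delicate is in fact more than delicate: it is a genuine defect of the statement as written. Khalil's Lemma 3.4 requires $D^+v(t) \le f(t,v(t))$ for \emph{all} $t$, whereas the paper weakens this to \emph{almost all} $t$ while keeping $v$ merely continuous; in that form the lemma is false (take $f\equiv 0$, $u_0=0$, and let $v$ be the Cantor function: $D^+v=0$ a.e., $v(0)=0$, yet $v(1)=1>u(1)=0$). Your repair --- assuming $v$ locally absolutely continuous so that the a.e.\ derivative bound can be integrated --- is exactly the missing hypothesis, and it does hold in every use the paper makes of the lemma, where $v(t)=V(\phi(t,j))$ with $V\in\mathcal{C}^1$ and $t\mapsto\phi(t,j)$ locally absolutely continuous. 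So your proof establishes the (corrected) statement that the paper actually needs; the only residual informality is that the uniform Lipschitz constant $L$ requires $f$ to be Lipschitz on a compact set containing the graphs of both $v$ and $z_\eta$ over the interval in question, which is available since both functions are continuous on a compact interval but is glossed over by the bare hypothesis ``$u\in S\subset\mathbb{R}$''.
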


\begin{lemma} \label{lemaux1}
Assume that the function $t \mapsto v(t)$ in Lemma \ref{lem:comparison} satisfies $v(t) = v(x(t))$ for all $t \in [t_0, T)$ with $t \mapsto x(t)$ being a solution to the system 
$\dot{x} \in F(x)$, ~$x \in C \subset \mathbb{R}^n$,
and $v \in \mathcal{C}^1$, it follows that, 
for almost all $t \in [t_0, T)$,
$
 D^+ v(t) = \dot{v}(t) = \langle \nabla v(x(t)), \dot{x}(t) \rangle.
$
\end{lemma}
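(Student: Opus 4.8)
The plan is to establish the claim in three steps: first that $t\mapsto x(t)$ is locally absolutely continuous on $[t_0,T)$, hence differentiable for almost every $t$; second that the chain rule applies pointwise wherever $\dot x(t)$ exists, giving $\tfrac{d}{dt} v(x(t)) = \langle \nabla v(x(t)), \dot x(t)\rangle$ there; and third that, at a point where the ordinary derivative of $t\mapsto v(x(t))$ exists, its upper right Dini derivative $D^+ v(t)$ necessarily coincides with it.

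First I would recall that, by the concept of solution to $\mathcal{H}$ (Definition \ref{solution definition}, item (S1)) — or, directly, by the hypothesis that $x$ is a solution to the constrained differential inclusion $\dot x\in F(x)$, $x\in C$ — the map $t\mapsto x(t)$ is locally absolutely continuous on the flow interval $[t_0,T)$. A classical theorem then guarantees that $x$ is differentiable at almost every $t\in[t_0,T)$, with $\dot x(t)\in F(x(t))$ for almost all such $t$. Moreover, since $v\in\mathcal{C}^1$, it is locally Lipschitz, so the composition $v\circ x$ is again locally absolutely continuous on $[t_0,T)$, and therefore differentiable at almost every $t\in[t_0,T)$.

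Next, fix any $t\in[t_0,T)$ at which $\dot x(t)$ exists — this excludes only a null set. Because $v$ is differentiable at the point $x(t)$ with gradient $\nabla v(x(t))$, the standard chain rule applies to the composition of the (everywhere differentiable) function $v$ with the differentiable-at-$t$ curve $x$, yielding that $t\mapsto v(x(t))$ is differentiable at $t$ with $\dot v(t) = \langle \nabla v(x(t)), \dot x(t)\rangle$. Finally, wherever the ordinary (two-sided) derivative $\dot v(t)$ exists, the difference quotients $\tfrac{v(x(t+s))-v(x(t))}{s}$ converge as $s\to 0^+$, so their $\limsup$ equals that limit; hence $D^+ v(t)=\dot v(t)$. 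Discarding the union of the two null sets (the exceptional sets for differentiability of $x$ and of $v\circ x$), we obtain $D^+ v(t) = \dot v(t) = \langle \nabla v(x(t)), \dot x(t)\rangle$ for almost all $t\in[t_0,T)$.

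I expect no serious obstacle: the only points requiring care are invoking the correct notion of solution so that $t\mapsto x(t)$ is locally absolutely continuous on the flow interval, and observing that the pointwise chain rule needs only differentiability of $x$ at $t$ (not everywhere) together with the $\mathcal{C}^1$ regularity of $v$, so that it is available on the full-measure set where $\dot x$ exists.
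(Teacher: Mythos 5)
Your proof is correct and follows essentially the same route as the paper's: almost-everywhere differentiability of $x$ from (local) absolute continuity, the pointwise chain rule with the $\mathcal{C}^1$ function $v$, and the observation that the Dini derivative coincides with the ordinary derivative wherever the latter exists. If anything, your version is slightly more careful in justifying that $v\circ x$ is differentiable precisely where $\dot x$ exists, but the substance is identical.
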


\begin{proof}
Since the solution $x$ is absolutely continuous, it follows that $\dot{x}(t)$ exists for almost all $t \in [t_0, T)$. Furthermore, since $v \in \mathcal{C}^1$. Hence, 
$\dot{v}(t)$ exists for almost all $t \in [t_0, T)$. 
Let $t \in [t_0, T)$ such that $\dot{v}(t)$ exists. Then, by definition of the time derivative, we conclude that
$\dot{v}(t) = \lim_{s \rightarrow 0} \tfrac{v(t+s)-v(t)}{s} = \limsup_{s \rightarrow 0^+} \tfrac{v(t+s)-v(t)}{s} = D^+ v(t)$.
Furthermore, using the classical chain rule for composition of differentiable functions, we conclude that 
$ \dot{v}(t) = \langle \nabla v(x(t)), \dot{x}(t) \rangle$ for almost all $t \in [t_0, T)$.
\end{proof}

\begin{lemma} \label{lemaux2}
Let $x: [t_0,T) \!\rightarrow\! \mathbb{R}^n$ be a solution to the following constrained differential inclusion 
$\dot{x} \!\in\! F(x)$ for all $x \!\in\! C \!\subset\! \mathbb{R}^n$.
Then, for almost all $t \in [t_0, T)$, $\dot{x}(t) \in T_C(x(t)).$
\end{lemma}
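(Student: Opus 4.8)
The plan is to reduce the statement to the sequential characterization of the tangent cone recalled in \eqref{eq.conti}, using only the definition of a solution to a constrained differential inclusion together with absolute continuity of $x$. First I would fix the set of \emph{good times}: since $x : [t_0,T) \to \mathbb{R}^n$ is (locally) absolutely continuous, the derivative $\dot{x}(t)$ exists for almost every $t \in [t_0,T)$; discarding the single point $t_0$ and a null set, it suffices to treat an arbitrary $t \in (t_0,T)$ at which $\dot{x}(t)$ exists. By the concept of solution to $\dot{x} \in F(x)$, $x \in C$ (specialized to a single flow interval, as in (S1) of Definition \ref{solution definition}), every such interior time satisfies $x(t) \in C$.

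Next, fix such a $t$ and pick any sequence $\{h_i\}_{i \in \mathbb{N}}$ with $h_i \searrow 0^+$ and $t + h_i < T$ for all $i$ (possible since $t < T$). Set
\begin{equation*}
v_i := \frac{x(t+h_i) - x(t)}{h_i} \in \mathbb{R}^n .
\end{equation*}
Because $t$ is in the interior of the flow interval, $t + h_i$ lies in that interval for $i$ large, hence $x(t+h_i) \in C$; equivalently $x(t) + h_i v_i \in C$ for all large $i$. Moreover, since $\dot{x}(t)$ exists, the right difference quotient converges to it, i.e. $v_i \to \dot{x}(t)$ as $i \to \infty$. Applying the equivalence \eqref{eq.conti} with the point $x(t) \in C$, the sequences $\{h_i\} \to 0^+$ and $\{v_i\} \to \dot{x}(t)$, and the membership $x(t) + h_i v_i \in C$, we conclude $\dot{x}(t) \in T_C(x(t))$. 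Since this holds for almost every $t \in [t_0,T)$, the claim follows.

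The only genuinely delicate point — and the one I would state carefully — is the justification that $x(t+h_i) \in C$: a solution to a constrained differential inclusion is only guaranteed to lie in $C$ at \emph{interior} points of its flow interval, so the argument must be run at interior times $t$, with $h_i$ small enough to keep $t + h_i$ interior as well. Everything else (absolute continuity $\Rightarrow$ a.e. differentiability, right difference quotients converging to $\dot{x}(t)$, and the translation of \eqref{eq.conti}) is routine. No convexity or outer semicontinuity of $F$ is needed for this particular lemma; it is purely a consequence of $x(\cdot)$ being an absolutely continuous curve taking values in $C$.
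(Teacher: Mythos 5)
Your proposal is correct and follows essentially the same route as the paper's proof: at a time $t$ where $\dot{x}(t)$ exists, form the forward difference quotients $v_i = (x(t+h_i)-x(t))/h_i$, observe $x(t)+h_i v_i \in C$ and $v_i \to \dot{x}(t)$, and invoke the sequential characterization \eqref{eq.conti} of the tangent cone. Your additional remark about restricting to interior points of the flow interval is a harmless refinement of a detail the paper leaves implicit.
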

\begin{proof}
Let $t  \in [t_0, T)$ such that $\dot{x}(t)$ exists; thus, $\dot{x}(t) \in F(x(t,j))$. Furthermore, let a sequence $\left\{t_n\right\}_{n \in \mathbb{N}} \subset (t_0, T - t) $ such that $t_n \rightarrow 0$. 
That is, for $v_n(t) := (x(t_n) - x(t)) / t_n$, we have 
$\lim_{n \rightarrow \infty} v_n(t) = \dot{x}(t)$ and at the same time, by definition of solution to $\dot{x} \in F(x)$,
$x(t) + t_n v_n(t) = x(t_n) \in C $. Hence, using \eqref{eq.conti}, we conclude that 
$\dot{x}(t) \in T_C(x(t))$.
\end{proof}

\bibliographystyle{IEEEtran}
\bibliography{LTL_until}

\end{document}